\newif\ifreport
\newif\iffull
\newcommand{\full}[1]{{color{blue}#1}}
\newcommand{\full}[1]{}
\newcommand{\mor}{\mathsf{Mor}}
\newcommand{\mon}{\mathsf{Mono}}
\newcommand{\reg}{\mathsf{Reg}}
\newlength{\myheight}
\tikzstyle{node}=[circle, draw=black, minimum size=1mm, inner sep=1.5pt, font=\tiny]
\tikzstyle{trans}=[font=\scriptsize]
\tikzstyle{lab}=[font=\small]
\tikzset{
  coloredge/.style={
        ->,
        color=red
    }
}
\tikzset{
  colorloop/.style={
        loop,
        color=red
    }
}
\newcommand{\pgfBox}{
  \begin{pgfonlayer}{background} 
    \fill[blue!2,thick,draw=black!50,rounded corners,inner sep=3mm] ([xshift=-1.5pt,yshift=-1.5pt]current bounding box.south west) rectangle ([xshift=1.5pt,yshift=1.5pt]current bounding box.north east);
  \end{pgfonlayer}
}
\newcommand{\cat}[1]{\ensuremath{\mathbf{#1}}}
\newcommand{\zunf}[0]{\ensuremath{\zunf}}
\newcommand{\interval}[2][1]{\ensuremath{[{#1},{#2}]}}
\newcommand{\perm}{\sigma}
\newcommand{\gph}[1]{\textbf{\textup{{#1}-Graph}}}
\newcommand{\id}[1]{\mathsf{id}_{#1}}
\def\A{\textbf {\textup{A}}}
\newcommand{\mto}[0]{\scalebox{1}{$\rightarrowtail$}}
\def\R{\mathsf{R}}
\def\C{\textbf {\textup{C}}}
\def\D{\textbf {\textup{D}}}
\def\X{\textbf {\textup{X}}}
\def\Y{\textbf {\textup{Y}}}
\def\G{\textbf {\textup{G}}}
\newcommand{\dder}[1]{\mathscr{#1}}
\newcommand{\der}[1]{\underline{\dder{#1}}}
\def\Set{\textbf {\textup{Set}}}
\newcommand{\lpro}{\langle \hspace{-1.85pt}[}
\newcommand{\rpro}{]\hspace{-1.85pt}\rangle}
\newcommand{\tpro}[1]{\lpro \der{#1}\rpro}
\newcommand{\lgh}[0]{\mathsf{lg}}
\newcommand{\inv}[1]{\mathsf{inv}({#1})}
\newcommand{\shift}[1]{\ensuremath{\mathrel{{\leftrightsquigarrow}_{#1}}}}
\newcommand{\shifteq}[1][]{\ensuremath{\mathrel{{\equiv}^\mathit{sh}_{#1}}}}
\NewDocumentCommand{\transp}{m o}{%
  \ensuremath{({#1},%
  \IfNoValueTF{#2}%
    {{#1}+1}%
    {#2}%
    )}
}
\NewDocumentCommand{\mycommand}{o}{%
  \IfNoValueTF{#1}
    {code when no optional argument is passed}
    {code when the optional argument #1 is present}%
}
\newtheorem*{notation}{Notation}
\newcommand{\rem}[2]{{\color{blue}#1}{\color{red}#2}}
\renewcommand{\rem}[2]{}
\author{Paolo Baldan} 
{Department of Mathematics, University of Padua, Italy}
{baldan@math.unipd.it}
{https://orcid.org/0000-0001-9357-5599}{}
\author{Davide Castelnovo}
{Department of Mathematics, University of Padua, Italy}
{davide.castelnovo@math.unipd.it}
{https://orcid.org/0000-0002-5926-5615}{}
\author{Andrea Corradini}
{Department of Computer Science, University of Pisa, Italy}
{andrea.corradini@unipi.it}
{https://orcid.org/0000-0001-6123-4175}{}
\author{Fabio Gadducci}
{Department of Computer Science, University of Pisa, Italy}
{fabio.gadducci@unipi.it}
{https://orcid.org/0000-0003-0690-3051}{}
\authorrunning{P.~Baldan, D.~Castelnovo, A.~Corradini, F.~Gadducci}
\keywords{
Adhesive categories, double-pushout rewriting, left-linear rules, switch equivalence, local Church-Rosser property.
}
\title{Left-Linear Rewriting in Adhesive Categories}
\begin{document}
\maketitle

\begin{abstract}
  When can two sequential steps performed by a computing device be
  considered (causally) independent?  This is a relevant question for
  concurrent and distributed systems, since independence means that
  they could be executed in any order, and potentially in
  parallel. Equivalences identifying rewriting sequences which differ
  only for independent steps are at the core of the theory of
  concurrency of many formalisms. We investigate the issue in the context of the double pushout
  approach to rewriting in the general setting of adhesive categories.
  While a consolidated theory exists for linear rules, which
  can consume, preserve and generate entities, this paper focuses on
  left-linear rules which may also ``merge'' parts of the state.  This is an apparently minimal, yet technically hard enhancement, since  a standard characterisation of independence that -- in the
  linear case -- allows one to derive a number of properties,
  essential in the development of a theory of concurrency, no longer
  holds.  The paper performs an in-depth study of the notion of independence
  for left-linear rules: it introduces a novel characterisation of
  independence, identifies well-behaved classes of left-linear
  rewriting systems, and provides some fundamental results including a
  Church-Rosser property and the existence of canonical equivalence
  proofs for concurrent computations. These results properly extends
  the class of formalisms that can be modelled in the adhesive
  framework.
\end{abstract}

\section{Introduction}

One of the key pay-off of concurrency theory is the idea that the behaviour of 
a computational device can be modelled by abstracting its sequences of steps
via a suitable equivalence. Intuitively, the equivalence 
captures when steps are causally unrelated and thus could be executed in any 
order, and possibly in parallel. Two seminal contributions to this line of research have been
Mazurkiewicz's traces~\cite{Mazurkiewicz86} and Winskel's event structures~\cite{NPW:PNES}.
Concerning formalisms based on the rewriting paradigm, concurrency often boils down to having a notion of independence between two consecutive steps. 
Noteworthy examples are the
interchange law~\cite{Mes92} in term rewriting and permutation 
equivalence~\cite{JJL80} in $\lambda$-calculi. 

Among rewriting-based formalisms, those manipulating graph-like
structures proved useful in many settings for representing the
dynamics of distributed systems: the graph is used to represent the
entities and their relations within states, while rewriting steps,
modifying the graph, model computation steps that result in changes to
the system state. The DPO (double-pushout) approach~\cite{EhrigPS73}
is nowadays considered a standard one for these structures, thanks to
its locality (rule application has a local effect on a state,
differently from more recent approaches like
SqPO~\cite{CorradiniHHK06} or PBPO$^+$~\cite{OverbeekER23}) and to its
flexibility: the rules allow to specify that, in a rewriting step,
some entities in the current state are removed, some others are needed
for the rewriting step to occur but remain unaltered, and some new
entities can be generated. Concerning concurrency, the chosen notion
is switch (also referred to as shift) equivalence, and independence is formally captured by what is
called the Church-Rosser theorem of DPO
rewriting~\cite[Chapter 3, Section~3.4]{CorradiniMREHL97}. 

A noteworthy advantage of DPO rewriting is that it can be formulated over
adhesive categories and their variants~\cite{lack2005adhesive,ehrig2006weak}, which include, among others, toposes~\cite{johnstone2007quasitoposes} and string diagrams~\cite{bonchi2022string}. Operating within the framework of adhesive categories allows one to devise the foundations of a rewriting theory that can be then instantiated to the context of interest, thus avoiding the need of
repeatedly proving similar ad-hoc results for each specific setting. 

So far, most of the theoretical results focused on \emph{linear rules}: intuitively, a step is required only to consume, preserve and generate entities.  However, in some situations it is also necessary to ``merge'' parts of the state.  This happens in the context of string diagrams mentioned above, but also in graphical implementations of nominal calculi where, as a result of name passing, the received name is identified with a local one at the receiver~\cite{CVY:ESSPE,Gad07}, or in the visual modelling of bonding in biological/chemical processes~\cite{PUY:MBPE}. A similar mechanism is at the core of e-graphs (equality graphs), used for rewrite-driven compiler optimisations~\cite{WNW:egg}: rather thanmerging nodes corresponding to equal expressions, the idea -- conceptually and, as we will see, also mathematically similar -- is to maintain an
equivalence over the nodes.

Technically, to acquire the expressiveness needed for capturing fusions, i.e.~the merge of some elements in the state, requires to move from linear to left-linear rewriting rules. While left-linear rules have been considered in various instances (mainly in categories of graphs) and some results concerning the corresponding rewriting theory have been put forward~\cite{Ehrig1976,EHP:BRfTToHLRS,baldan2011adhesivity}, the phenomena arising when devising a theory of independence and concurrency for rewriting with left-linear rules in the general
setting of adhesive and quasi-adhesive categories have not been systematically explored, even if an investigation in the setting of quasi-topoi, considering non-linear rules yet restricting their applicability, has been presented in~\cite{behr2021concurrency,BehrHK23}.

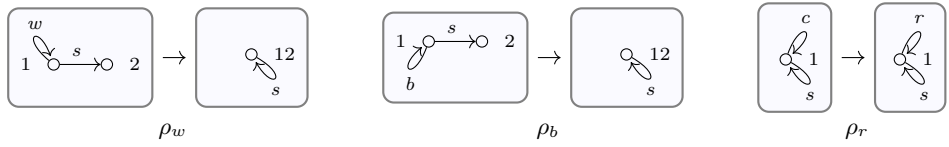
\begin{figure}
  \begin{center}

  %
  %
  \begin{tikzpicture}[node distance=3mm, font=\small]
      \node (l) {
      \begin{tikzpicture}[node distance=2mm, label distance=1mm, font=\scriptsize]
      \node at (0,1.0) [node, label=left:$1$] (1) {}
        edge [in=110, out=140, loop] node[above] {$\mathit{w}$} ();

        \node at (0,0.6) {};

        \node at (0.7,1.0) [node, label=right:$2$] (2) {};

        \draw[->] (1) to node[above, pos=0.4]{$\mathit{s}$} (2);        
      \pgfBox
      \end{tikzpicture} 
    };
    \node [right=of l] (r) {
      \begin{tikzpicture}[node distance=2mm, label distance=1mm, font=\scriptsize]
        \node at (0,1.45) {};
        \node at (0,0.5) {};
        \node at (0.6,1.0) [node, label=right:$12$] (12) {}
          edge [in=300, out=330, loop] node[below] {$\mathit{s}$} (); 
      \pgfBox
      \end{tikzpicture}
    };
    \path (l) edge[->] (r);
    \node at ($(l)!0.54!(r)$) [below=7.5mm] {$\rho_w$}; 
    \end{tikzpicture}
    \hspace{5mm}
    %
    %
    \begin{tikzpicture}[node distance=3mm, font=\small]
      \node (l) {
      \begin{tikzpicture}[node distance=2mm, label distance=1mm, font=\scriptsize]
       \node at (0,1.0) [node, label=left:$1$] (1) {}
        edge [in=220, out=250, loop] node[below] {$\mathit{b}$} ();

        \node at (0,0.6) {};

        \node at (0.7,1.0) [node, label=right:$2$] (2) {};

        \draw[->] (1) to node[above, pos=0.4]{$\mathit{s}$} (2);

      \pgfBox
      \end{tikzpicture} 
    };
    \node [right=of l] (r) {
      \begin{tikzpicture}[node distance=2mm, label distance=1mm, font=\scriptsize]
      \node at (0,1.45) {};  
      \node at (0,0.5) {};
      \node at (0.6,1.0) [node, label=right:$12$] (12) {}
          edge [in=300, out=330, loop] node[below] {$\mathit{s}$} ();        
      \pgfBox
      \end{tikzpicture}
    };
    \path (l) edge[->]  (r);
    \node at ($(l)!0.54!(r)$) [below=7.5mm] {$\rho_b$};         
    \end{tikzpicture}
    \hspace{5mm}
    %
    \begin{tikzpicture}[node distance=3mm, font=\small]
      \node (l) {
      \begin{tikzpicture}[node distance=2mm, label distance=1mm, font=\scriptsize]
      \node at (0.4,0.5) {};
      \node at (0.6,1.0) [node, label=right:$1$] (1) {}
        edge [in=40, out=70, loop] node[above] {$\mathit{c}$} ()
        edge [in=300, out=330, loop] node[below] {$\mathit{s}$} ();
      \pgfBox
      \end{tikzpicture} 
    };
    \node [right=of l] (r) {
      \begin{tikzpicture}[node distance=2mm, label distance=1mm, font=\scriptsize]
      \node at (0.4,0.5) {};
      \node at (0.6,1.0) [node, label=right:$1$] (1) {}
        edge [in=40, out=70, loop] node[above] {$\mathit{r}$} ()
        edge [in=300, out=330, loop] node[below] {$\mathit{s}$} ();
      \pgfBox
      \end{tikzpicture}
    };
    \path (l) edge[->] (r);    
    \node at ($(l)!0.54!(r)$) [below=7.5mm] {$\rho_r$}; 
    \end{tikzpicture}
\end{center}

  
  \caption{Rewriting rules for the coffee example.}
  \label{fi:coffee-rules}
\end{figure}

\begin{figure}
  \begin{center}

  %
  %
  %
  \begin{tikzpicture}[node distance=8mm, font=\small]
    \node (G0) {
      \begin{tikzpicture}[node distance=2mm, label distance=1mm, font=\scriptsize]
        \node at (0,1.0) [node, label=left:$1$] (1) {}
        edge [in=110, out=140, loop] node[above] {$\mathit{w}$} ()
        edge [in=220, out=250, loop] node[below] {$\mathit{b}$} ();
        
        \node at (0.7,1.0) [node, label=right:$2$] (2) {}
        edge [in=40, out=70, loop] node[above] {$\mathit{c}$} ();

        \draw[->] (1) to node[above, pos=0.4]{$s$} (2);        
      \pgfBox
      \end{tikzpicture} 
    };
    \node[below] at (G0.south) {$G_0$};
    \node [right=of G0] (G1) {
      \begin{tikzpicture}[node distance=2mm, label distance=1mm,, font=\scriptsize]
      \node at (0.6,1.0) [node, label=right:$12$] (12) {}
        edge [in=220, out=250, loop] node[below] {$\mathit{b}$} ()
        edge [in=40, out=70, loop] node[above] {$\mathit{c}$} ()
        edge [in=300, out=330, loop] node[below] {$\mathit{s}$} ();        
      \pgfBox
      \end{tikzpicture}
    };
    \node[below] at (G1.south) {$G_1$};

    \node [right=of G1] (G2) {        
      \begin{tikzpicture}[node distance=2mm, label distance=1mm, font=\scriptsize]
        \node at (0.4,0.5) {};
      \node at (0.6,1.0) [node, label=right:$12$] (12) {}
        edge [in=40, out=70, loop] node[above] {$\mathit{c}$} ()
        edge [in=300, out=330, loop] node[below] {$\mathit{s}$} ();        
      \pgfBox
    \end{tikzpicture}
    };
    \node[below] at (G2.south) {$G_2$};

    \node [right=of G2] (G3) {      
      \begin{tikzpicture}[node distance=2mm, label distance=1mm, font=\scriptsize]
        \node at (0.4,0.5) {};
      \node at (0.6,1.0) [node, label=right:$12$] (12) {}
        edge [in=40, out=70, loop] node[above] {$\mathit{r}$} ()
        edge [in=300, out=330, loop] node[below] {$\mathit{s}$} ();        
        \pgfBox
      \end{tikzpicture}
    };
    \node[below] at (G3.south) {$G_3$};
   
    \path (G0) edge[->] node[trans, above] {$\rho_w$} (G1);
    \path (G1) edge[->] node[trans, above] {$\rho_b$} (G2);
    \path (G2) edge[->] node[trans, above] {$\rho_r$} (G3);    
  \end{tikzpicture}
\end{center}

  \caption{A sequence of rewriting steps.}
  \label{fi:coffee-rewriting}
\end{figure}
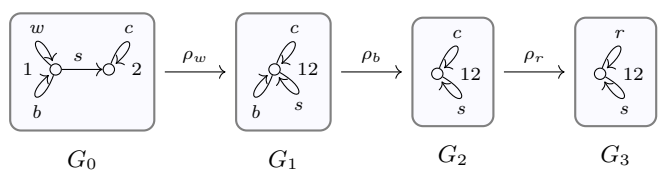

Just to get some basic intuition about left-linear rewriting systems, their capabilities and the problems that can arise, assume we want to model a situation in which a coffee comes with a bag of white sugar and a bag of brown sugar. The situation is represented by the graph $G_0$ in Fig.~\ref{fi:coffee-rewriting}: the loop $\mathit{c}$ is the coffee, the two loops $b$ and $w$  represent the white and brown sugar. The corresponding nodes are connected to the coffee node by an $s$-labelled edge to indicate that they can be possibly added to the coffee. The available rules are in Fig.~\ref{fi:coffee-rules}. Rules $\rho_w$ and $\rho_b$ model the addition of white and brown sugar, respectively, to the coffee. Note that adding one kind of sugar, say $x$, to the coffee is realised by deleting loop $x$ and merging the corresponding node with the coffee node so that a self-loop $s$ is created.

Once some sugar has been added, the coffee is ready for drinking. This is represented by rule $\rho_r$, which checks for the presence of sugar, represented by the loop $s$, in the coffee, and brings it to the $r$ state, ready for drinking.

A rewriting sequence is depicted in Fig.~\ref{fi:coffee-rewriting}, applying $\rho_w$, $\rho_b$, and $\rho_r$ in sequence, thus producing a coffee with both white and brown sugar, ready for drinking. A first interesting observation is that after applying rule $\rho_w$ to $G_0$, thus producing graph $G_1$, we can still apply rule $\rho_b$, with a non-injective match. Intuitively, once the coffee is ready for drinking because some sugar has been added, adding more sugar will not change its state. Indeed, the rules could also be applied in reverse order. Should they be considered independent? Moreover, since rule $\rho_r$ just requires the presence of sugar, it could be applied immediately after $\rho_w$. This might lead to think that $\rho_b$ and $\rho_r$ are independent. However, this is not completely clear, since in absence of $\rho_w$, the application of $\rho_b$ would be essential to enable $\rho_r$.

In this paper we perform a systematic exploration of these phenomena at the level of $\mathcal{M}$-adhesive categories, which encompass a large family of DPO-based formalisms.

As an initial step, we observe that for left-linear rules some fundamental properties that play a role
in the concurrency theory of linear rewriting need to be deeply revised, or plainly do not hold. Firstly, the notion of sequential independence and the Church-Rosser theorem, ensuring that sequential independent steps can be performed in reverse order, does not hold out of the box. Even worse, given two independent steps it might be possible to switch them in several
different ways. As we will observe, this prevents the development of a sensible theory
of concurrency, since switches are used as a basis for equating
sequences of rewriting steps that differ only in the order of
independent steps.

We single out a class of left-linear rewriting systems, referred to as \emph{well-switching rewriting systems}, where sequential independence ensures existence and uniqueness of the switch, and argue that this is the right setting for dealing with left-linear rewriting systems.

On the one hand, they capture most categories of interest for rewriting. Indeed, we identify general classes of left-linear rewriting systems on graph-like structures that are well-switching. As a sanity check, we prove that linear rewriting systems are well-switching.

On the other hand, we argue that a concurrency theory of rewriting can be developed for well-switching rewriting systems, recovering, sometimes in weakened form, most of the results that hold in the linear case. The development faces a main technical obstacle. In the literature on linear rewriting systems, a central result is the characterisation of switch equivalence of sequences of steps based on
what are called processes or consistent permutations~\cite{CMR:GP,Handbook,BaldanCHKS06,heindel2009category}. In turn this is the key to derive some fundamental properties for linear rewriting, namely
\begin{enumerate}
\item \emph{globality of independence}: if two independent steps are
  moved forward or backward due to the application of switchings to a
  sequence, they remain independent;
\item \emph{consistency of switching}: when transforming
  sequences of rewriting steps into equivalent ones by switching independent steps, the
  result does not depend on the order in which switchings are applied, but
  only on the associated permutation.
\end{enumerate}
Properties (1) and (2) allow one to deduce a canonical form for equivalence proofs 
between sequences of rewriting steps seen as concurrent computations.

The fact that for left-linear rewriting systems the characterisation of switch equivalence via consistent permutations fails, and thus cannot be used for proving (1) and (2), leads to asking if these
properties hold.  We show that much of the theory can be retained: globality of independence holds in a weaker form, conceptually linked to the fact that fusions introduce a form of disjunctive causality, while consistency of switching holds unchanged. Finally, we prove that a canonical form for switching
sequences can also be recovered. These results represent a solid basis for developing a satisfactory theory of concurrency for left-linear rules.

\emph{Synopsis.} 
In \S\ref{sec:ade} we recall the basic definitions of DPO rewriting for $\mathcal{M}$-adhesive categories. 
In \S\ref{sec:equi} we present the key novelties of our proposal, introducing the standard notion of independence and a novel axiomatic notion of switchability, showing that the former fails to imply the latter for left-linear rewriting systems. We then define a subclass of left-linear rewriting systems
that we call well-switching rewriting systems, and we prove that the classical results for linear rewriting systems are recovered for such a class. Finally, in \S\ref{conclusions} we outline directions for future work. The appendices recall basic results on adhesive categories (Appendix~\ref{app:ade}), present \emph{strong enforcing} rewriting systems, a class of systems where the local Church-Rosser Theorem holds (Appendix~\ref{app:fill}), and give proofs of the results (Appendix~\ref{omitted}).

\section{$\mathcal{M}$-adhesive categories and rewriting systems}\label{sec:ade}

This  section recalls the basic theory of \emph{$\mathcal{M}$-adhesive categories} \cite{azzi2019essence,ehrig2012,ehrig2014adhesive,lack2005adhesive,heindel2009category}. Given a category $\X$ we will not distinguish notationally between $\X$ and its class of objects, so
``$X\in \X$'' means that $X$ is an object of $\X$. We let $\mor(\X)$, $\mon(\X)$ and $\reg(\X)$ denote the class of all arrows, monos and regular monos of $\X$, respectively.  Given an integer $n\in \mathbb{Z}$, $[0,n]$ denotes the set of natural numbers less than or equal to $n$; in particular, $[0,n]=\emptyset$ if $n<0$.

\subsection{$\mathcal{M}$-adhesivity}\label{subsec:ade}
The key property of $\mathcal{M}$-adhesive categories is the \emph{Van Kampen condition}~\cite{brown1997van,johnstone2007quasitoposes,lack2005adhesive}. In order to define it we need to introduce some terminology.  Let  $\X$ be a category. A subclass $\mathcal{A}$ of
  $\mor(\X)$ is called
  \begin{itemize}
    \parbox{11cm}{\item
      \emph{stable under pushouts (pullbacks)} if for every pushout (pullback) square as the one on the right, 	if $m \in \mathcal{A}$ ($n\in \mathcal{A}$) then $n \in \mathcal{A}$ ($m \in \mathcal{A}$);
    \item \emph{closed under composition} if $h, k\in \mathcal{A}$ implies $h\circ k\in \mathcal{A}$ whenever $h$ and $k$ are composable.{\tiny }}\hfill
    \parbox{2cm}{$\xymatrix{A\ar[r]^f  \ar[d]_{m}& B \ar[d]^n \\ C \ar[r]_g & D}$}
    \parbox{11cm}{}\hfill
  \end{itemize}
  
  \begin{definition}[Van Kampen property]
    Let $\X$ be a category and consider the diagram\\
    \parbox{10cm}{
      aside.
    Given  a class of arrows $\mathcal{A}\subseteq \mor(\X)$, we say that the bottom square
    is an \emph{$\mathcal{A}$-Van Kampen square} if
      \begin{enumerate}
      \item it is a pushout square;
    \item 	whenever the cube above has pullbacks as back and left faces and the vertical arrows belong to $\mathcal{A}$, then its top face is a pushout 
      if and only if the front and right faces are pullbacks.
      \end{enumerate}}
    \parbox{2cm}{$\xymatrix@C=10pt@R=6pt{&A'\ar[dd]|\hole_(.65){a}\ar[rr]^{f'} \ar[dl]_{m'} && B' \ar[dd]^{b} \ar[dl]_{n'} \\ C'  \ar[dd]_{c}\ar[rr]^(.7){g'} & & D' \ar[dd]_(.3){d}\\&A\ar[rr]|\hole^(.65){f} \ar[dl]^{m} && B \ar[dl]^{n} \\C \ar[rr]_{g} & & D }$ }\\
    Pushout squares that enjoy only the ``if'' half of item (2) above are called \emph{$\mathcal{A}$-stable}.
    
    A $\mor(\X)$-Van Kampen square is called  \emph{Van
      Kampen} and a $\mor(\X)$-stable square  \emph{stable}.
  \end{definition}

We can now define $\mathcal{M}$-adhesive categories.

\begin{definition}[$\mathcal{M}$-adhesive category]
  Let $\X$ be a category and $\mathcal{M}$ a subclass of
  $\mon(\X)$  including  all isomorphisms, closed under composition,  and stable under pullbacks and pushouts.  The category  $\X$ is said to be \emph{$\mathcal{M}$-adhesive} if
  \begin{enumerate}
  \item it has \emph{$\mathcal{M}$-pullbacks}, i.e.~pullbacks along arrows of $\mathcal{M}$;
  \item it has \emph{$\mathcal{M}$-pushouts}, i.e.~pushouts along arrows of $\mathcal{M}$;
  \item  $\mathcal{M}$-pushouts are $\mathcal{M}$-Van Kampen squares.
  \end{enumerate}
  
  A category $\X$ is said to be \emph{strictly $\mathcal{M}$-adhesive}
  if $\mathcal{M}$-pushouts are Van Kampen squares.
\end{definition}

We write $m\colon X\mto Y$ to denote that an arrow $m\colon X\to Y$ belongs to $\mathcal{M}$.

\begin{remark}
  \label{rem:salva}
  \emph{Adhesivity} and \emph{quasiadhesivity} 
  \cite{lack2005adhesive,garner2012axioms} coincide with strict
  $\mon(\X) $-adhesivity and strict $\reg(\X)$-adhesivity,
  respectively.
\end{remark}

\begin{example}
  \label{rem:iso}
  Every category $\X$ is $\mathsf{I(\X)}$-adhesive, where $\mathsf{I(\X)}$ is the class of 
  isomorphisms.
\end{example}

$\mathcal{M}$-adhesivity is well-behaved with respect to  the construction of slice and functor categories \cite{mac2013categories}, as shown by the following theorems~\cite{ehrig2006fundamentals,lack2005adhesive}.

\begin{theorem}
  \label{thm:slice-functors}
  Let $\X$ be an $\mathcal{M}$-adhesive category. Given an object $X$
  the category $\X/X$ is $\mathcal{M}/X$-adhesive with
  $\mathcal{M}/X:=\{m\in \mor(\X/X) \mid m\in
  \mathcal{M}\}$. Similarly, $X/\X$ is $X/\mathcal{M}$-adhesive with
  $X/\mathcal{M}:=\{m\in \mor(X/\X) \mid m\in \mathcal{M}\}$.
  
  Moreover for every small category $\Y$, the category $\X^\Y$ of
  functors $\Y\to \X$ is $\mathcal{M}^{\Y}$-adhesive, where
  $\mathcal{M}^{\Y}:=\{\eta \in \mor(\X^\Y) \mid \eta_Y \in
    \mathcal{M} \text{ for every } Y\in \Y\}$.
\end{theorem}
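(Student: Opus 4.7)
The plan is to reduce each part of the theorem to the $\mathcal{M}$-adhesivity of $\X$ via an appropriate forgetful functor that creates the relevant universal constructions. The key observations are: (i) the forgetful functor $U\colon \X/X \to \X$ creates pullbacks and creates pushouts of non-empty connected diagrams, (ii) the dual statement holds for the coslice $X/\X$, and (iii) limits and colimits in $\X^\Y$ are computed pointwise, with membership in $\mathcal{M}^\Y$ being a pointwise condition.

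For the slice case, I would first check that $\mathcal{M}/X$ satisfies the required properties: it contains the isomorphisms and is closed under composition since these are inherited from $\mathcal{M}$ componentwise under $U$; its stability under pullbacks and pushouts in $\X/X$ reduces, using the creation property of $U$, to the corresponding stability of $\mathcal{M}$ in $\X$. Existence of $\mathcal{M}/X$-pullbacks and $\mathcal{M}/X$-pushouts in $\X/X$ likewise follows by lifting those in $\X$ along $U$. For the Van Kampen condition, a cube in $\X/X$ is precisely a cube in $\X$ together with a compatible cocone to $X$; since $U$ detects pushouts and pullbacks on all the faces appearing in the axiom, the $\mathcal{M}/X$-Van Kampen property at a given $\mathcal{M}/X$-pushout in $\X/X$ becomes the $\mathcal{M}$-Van Kampen property of its $U$-image, which holds by assumption. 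The coslice $X/\X$ is handled by a dual argument, using that the coslice forgetful functor creates limits and non-empty connected colimits.

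For the functor category $\X^\Y$, the entire argument is pointwise. Since $\Y$ is small and $\X$ has $\mathcal{M}$-pullbacks and $\mathcal{M}$-pushouts, the category $\X^\Y$ has $\mathcal{M}^\Y$-pullbacks and $\mathcal{M}^\Y$-pushouts, computed componentwise. Closure under composition and stability of $\mathcal{M}^\Y$ under pullbacks and pushouts follow immediately from their counterparts in $\X$ since all these notions are defined pointwise. A square in $\X^\Y$ is an $\mathcal{M}^\Y$-pushout (respectively a pullback) iff for every $Y\in\Y$ its $Y$-component is an $\mathcal{M}$-pushout (respectively a pullback) in $\X$. Consequently a Van Kampen cube in $\X^\Y$ evaluates at each $Y$ to a cube in $\X$ with the same hypotheses, and the required biconditional follows by applying the $\mathcal{M}$-Van Kampen property of $\X$ pointwise.

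The main obstacle I anticipate is the careful bookkeeping around the Van Kampen cube: one must check both directions of its defining biconditional and verify that the creation/reflection properties of $U$ (in the slice case) and of pointwise evaluation (in the functor case) correctly apply to all five faces involved---the pushout bottom, the given pullback back and left, and the putative pullback front and right, together with the top face whose pushout status is to be transferred. Once this bookkeeping is made explicit, each condition reduces to the corresponding condition in $\X$, and the theorem follows.
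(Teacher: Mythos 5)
The paper does not actually prove this statement: it is quoted from the literature (the citation to Ehrig et al.\ and Lack--Soboci\'nski accompanies the theorem), so there is no in-paper proof to compare against. Your reduction along the slice/coslice forgetful functors and pointwise evaluation is the standard argument and is essentially sound: all of the closure properties of $\mathcal{M}/X$, $X/\mathcal{M}$ and $\mathcal{M}^\Y$, the existence of the required pullbacks and pushouts, and the Van Kampen biconditional do reduce to $\X$ exactly as you describe.

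One point in your ``bookkeeping'' deserves to be made explicit, because it is the only place where something could actually go wrong. The projection $U\colon \X/X\to\X$ \emph{creates} pushouts (and connected limits) but does not in general \emph{preserve} them: a span can have a pushout in $\X/X$ without having one in $\X$ (e.g.\ in a poset where two incomparable minimal upper bounds exist but only one lies below $X$), and then the $U$-image of that pushout square is not a pushout. So ``$U$ detects pushouts and pullbacks on all the faces'' is not automatic; it holds here only because every face of the Van Kampen cube is a pullback or a pushout along an arrow of $\mathcal{M}$ --- for the vertical faces because the vertical arrows lie in $\mathcal{M}/X$, and for the top face because its edge $m'$ is the pullback of $m\in\mathcal{M}$ along the left face, hence lies in $\mathcal{M}$ by pullback-stability. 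Since $\X$ has $\mathcal{M}$-pullbacks and $\mathcal{M}$-pushouts, the underlying (co)limits exist in $\X$, and creation then yields both preservation and reflection on exactly the squares you need. The same remark applies to the stability of $\mathcal{M}/X$ under pushouts and pullbacks. With that observation spelled out, your proof goes through; the functor-category case has the analogous caveat, resolved the same way because pointwise $\mathcal{M}$-pushouts and $\mathcal{M}$-pullbacks exist and therefore compute the corresponding (co)limits in $\X^\Y$.
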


We can list various examples of $\mathcal{M}$-adhesive categories (see
\cite{castelnovo2023thesis,CastelnovoGM22,lack2006toposes}).

\begin{example}
  \label{ex:adhesive}
  $\cat{Set}$ is adhesive, and, more generally, every topos is
  adhesive~\cite{lack2006toposes}. By the closure properties above, every presheaf $[\cat{X},\cat{Set}]$ is adhesive, thus the category
  $\cat{Graph} = [ E \rightrightarrows V, \cat{Set}]$ is adhesive
  where $E \rightrightarrows {V}$ is the two objects category with two
  morphisms $s,t \colon{E} \to {V}$. Similarly, various
  categories of hypergraphs can be shown to be adhesive, such as term
  graphs and hierarchical graphs~\cite{CastelnovoGM24}. Note that the category $\cat{sGraphs}$ of simple graphs, 
  i.e.~graphs without parallel edges, is
  $\reg{(\cat{sGraphs})}$-adhesive~\cite{BehrHK23} but not
  quasiadhesive.
\end{example}

A number of properties of adhesive categories that play a role in the
theory of rewriting generalise to $\mathcal{M}$-adhesive
categories. These include $\mathcal{M}$-pushout-pullback decomposition
and uniqueness of pushouts complements (details and proofs are in
\Cref{app:ade}).

\subsection{DPO rewriting systems derivations}\label{subsec:DPO}

$\mathcal{M}$-adhesive categories are the right context in which to
perform abstract rewriting using what is called the Double-Pushout
(DPO) approach.

\begin{definition}[\!\!\cite{habel2012mathcal,heindel2009category}]
  Let $\X$ be an $\mathcal{M}$-adhesive category. A \emph{left
    $\mathcal{M}$-linear} rule $\rho$ is a pair  of
  arrows $\rho = (l: K \to L, r: K \to R)$ such that $l$ belongs to
  $\mathcal{M}$.  The rule $\rho$ is \emph{$\mathcal{M}$-linear} if
  $r\in \mathcal{M}$ too.
  \full{A rule $\rho$ is said to be
    \emph{consuming} if $l$ is not an isomorphism.}
  The object $L$ is
  the \emph{left-hand side}, $R$ the \emph{right-hand side}, and
  $K$ the \emph{interface}. 

  A \emph{left-linear rewriting system} is a pair $(\X, \R)$ where
  $\X$ is an $\mathcal{M}$-adhesive category and $\R$ a set of left
  $\mathcal{M}$-linear rules. It  is  called $\mathcal{M}$-\emph{linear} if every rule in $\R$ is  $\mathcal{M}$-linear.

  \vspace{.1cm}
  \noindent
  \parbox{10cm}{\hspace{15pt}Given two objects $G$ and $H$ and a rule
    $\rho=(l,r)$ in $\R$, a \emph{direct derivation $\mathscr{D}$ from
      $G$ to $H$ applying rule $\rho$} is a diagram as the one
    aside, in which both squares are pushouts. The arrow $m$ is called
    the \emph{match} of the derivation, while $h$ is its
    \emph{co-match}.  We denote a direct derivation $\dder{D}$
    from $G$ to $H$ as $\dder{D}\colon G\Mapsto H$. }
  \parbox{3cm}{$\xymatrix{L \ar[d]_{m}& K \ar[d]^{k}\ar@{>->}[l]_{l}
      \ar[r]^{r} & R \ar[d]^{h}\\G & \ar@{>->}[l]^{f} D \ar[r]_{g}&
      H}$}
\end{definition}

A derivation can now be defined simply as a sequence of direct derivations.

\begin{definition}[Derivation]
  Let $\X$ be an $\mathcal{M}$-adhesive category and $(\X, \R)$ a
  left-linear rewriting system.  Given $G, H \in \X$, a
  \emph{derivation} $\der{D}$ with source $G$ and target $H$, written
  $\der{D}: G \Mapsto H$, is defined as a sequence
  $\{\dder{D}_i\}_{i=0}^n$ of direct derivations such that
  $\dder{D}_i : G_i \Mapsto G_{i+1}$ is a direct derivation for every
  index $i$ and $G_0=G$, $G_{n+1}=H$. Moreover, we have an \emph{empty derivation} $G : G \Mapsto G$ for each $G \in \X$.  We denote by $\lgh(\der{D})$ the \emph{length} of a derivation $\der{D}$.
 
  Given a derivation $\der{D}=\{\dder{D}_i\}_{i=0}^n$, we define $r(\der{D})  = \{\rho_i\}_{i=0}^n$ as the sequence of 
 rules such that every $\dder{D}_i$ applies rule $\rho_i\in \R$.  
\end{definition}

Derivations naturally compose: given $\der{D}: G \Mapsto H$ and $\der{E}: H \Mapsto F$, we can consider their composition $\der{D} \cdot \der{E}: G \to F$, defined in the obvious way.

Often, we are interested in derivations only up to some notion of coherent isomorphism between them. This leads us to the following definition. 

\noindent
\parbox{8cm}{
  \begin{definition}[Abstraction equivalence]
    Let  $\X$ be an $\mathcal{M}$-adhesive category and 
    $(\X, \R)$ a left-linear rewriting system. 
    An \emph{abstraction equivalence} between derivations $\der{D}$
    and $\der{D'}$ with the same length and
    $r(\der{D})=r(\der{D}')$ is a family of isomorphisms
    $\{\phi_X\}_{X\in \{G_i,D_i\}}$ such that the diagram on the right commutes
    for $i\in [0, \min(0, \lgh(\der{D})-1 )]$. We say that $\der{D}$ are $\der{D}'$ are \emph{abstraction
      equivalent}, written $\der{D}\equiv_a \der{D}'$, if there exists an abstraction equivalence between
    them.
  \end{definition}
}
\parbox{4cm}{\vspace{-1.35em}$\xymatrix@C=40pt{G'_i&D'_i \ar[r]^{g'_i}
    \ar@{>->}[l]_{f'_i}&G'_{i+1}\\ L_i \ar[u]^{m'_i} \ar[d]_{m_i}& K_i
    \ar[u]^{k'_i} \ar[d]_{k_i} \ar[r]^{r_i} \ar@{>->}[l]_{l_i}
    &R_i \ar[u]^{h'_i} \ar[d]_{h_i}\\G_i
    \ar@/_.45cm/[uu]_(.35){\phi_{G_i}}|\hole&D_i\ar@{>->}[l]^{f_i}\ar@/_.45cm/[uu]_(.35){\phi_{D_i}}|\hole
    \ar[r]_{g_i}&G_{i+1}\ar@/_.45cm/[uu]_{\phi_{G_{i+1}}}}$}

$\mathcal{M}$-adhesivity of $\X$ ensures the
uniqueness of the result of applying a rule to an object: two
direct derivations using the same match are abstraction equivalent
(see Proposition~\ref{prop:unique}).

\section{Independence in rewriting}
\label{sec:equi}

In this section we discuss the notion of independence between two
consecutive rewriting steps, a fundamental ingredient of the rewriting
theory, which comes into play when viewing a sequence of steps as a
concurrent computation. We observe that moving from linear to
left-linear rules leads to the failure of various basic properties of
the classical notion of independence used in the linear setting and we
single out a framework in which these can be re-established, possibly
in a weakened form.

\subsection{Sequentially independent and switchable derivations }\label{subsec:switch}

Sequential independence is the canonical notion of independence in the DPO approach.

\begin{definition}[Sequential independence]
  \label{de:sequential-independence}
  Let $\X$ be an $\mathcal{M}$-adhesive category 
  and $(\X, \R)$ a left-linear rewriting system. 
  Let also
  $\der{D}=\{\dder{D}_i\}_{i=0}^1$ be a derivation of length $2$, with
  $\dder{D}_i$ using rule $\rho_i = (l_i,r_i)$.  We say that $\dder{D}_0$ and
  $\dder{D}_1$ are
  \emph{sequentially independent}
  if there is a pair of arrows
  $i_0\colon R_0\to D_1$ and $i_1\colon L_1\to D_0$ such that the
  following diagram commutes. In this case $(i_0,i_1)$ is called an  \emph{independence
    pair}.
  \[\xymatrix@R=16pt@C=15pt{L_0 \ar[d]_{m_0}&& K_0
      \ar[d]_{k_0}\ar@{>->}[ll]_{l_0} \ar[r]^{r_0} & R_0
      \ar@{.>}@/^.35cm/[drrr]|(.3)\hole_(.4){i_0}
      \ar[dr]|(.3)\hole_{h_0} && L_1 \ar@{.>}@/_.35cm/[dlll]^(.4){i_1}
      \ar[dl]|(.3)\hole^{m_1}& K_1 \ar[d]^{k_1}\ar@{>->}[l]_{l_1}
      \ar[rr]^{r_1} && R_1 \ar[d]^{h_1}\\G_0 && \ar@{>->}[ll]^{f_0}
      D_0 \ar[rr]_{g_0}&& G_1 && \ar@{>->}[ll]^{f_1} D_1
      \ar[rr]_{g_1}&& G_2}
  \]
\end{definition}

Intuitively, the existence of the independence pair
captures the possibility of switching the application of the two rules:
$f_0 \circ i_1$ is a match for $\rho_1$ in $G_0$ and the
existence of $i_0 : R_0 \to D_1$ means that $\rho_1$ does not delete
items in the image of $K_0$, thus $\rho_0$ can be applied after
$\rho_1$.

\begin{remark}\label{rem:uni}
	It is worth mentioning that if $(\X, \R)$ is a linear rewriting system then two derivations $\dder{D}_0$ and $\dder{D}_1$ can have at most one independence pair. Indeed if $(i_0,i_1)$ and $(i'_0, i'_1)$ are independence pairs, then
	\[g_0\circ i_1 = g_0\circ i'_1 \qquad f_1\circ i_0=f_1\circ i'_0\]
But in linear systems $g_0$ and $f_1$ are both monos, entailing $i_0=i'_0$ and $i_1=i'_1$.
\end{remark}

We next formalise what it means for a derivation being the switch of another.

\begin{definition}[Switch]
  \label{def:switch}
  Let $\X$ be an $\mathcal{M}$-adhesive category 
  and $(\X, \R)$ a left-linear rewriting system. 
  Let also
  $\der{D}=\{\dder{D}_i\}_{i=0}^1$ be a derivation made
  of two sequentially independent derivations $\dder{D}_0$,
  $\dder{D}_1$ with independence pair $(i_0, i_1)$, as in the diagram
  on the left below. A \emph{switch} of $\der{D}$ along $(i_0,i_1)$ is
  a derivation $\der{E}=\{\dder{E}_i\}_{i=0}^1$, between the same objects and using the same rules
  in reverse order, as in the diagram on the right below
  \[
    \def\objectstyle{\scriptstyle}\def\labelstyle{\scriptstyle}
    \xymatrix@C=1.1mm@R=8mm{
      {L_0} \ar[d]_{m_0}
      & & {K_0} \ar@{>->}[ll]_{l_0} \ar[rr]^{r_0} \ar[d]_{k_0}
      & & {R_0} \ar[drr]|(.35)\hole_(.6){h_0}  \ar@/^.22cm/[drrrrrr]|(.3)\hole^(.75){i_0}
      & & & & 
      {L_1}\ar[dll]|(.35)\hole^(.6){m_1} \ar@/_.22cm/_(.75){i_1}[dllllll]
      & & {K_1} \ar@{>->}[ll]_{l_1} \ar[rr]^{r_1} \ar[d]^{k_1}
      & & {R_1} \ar[d]^{h_1} \\
      {G_0}
      & & {D_0} \ar@{>->}[ll]^{f_0} \ar[rrrr]_{g_0}
      & & & & {G_1} & &
      & &  {D_1} \ar@{>->}[llll]^{f_1} \ar[rr]_{g_1}
      & & {G_2}
    }
    \xymatrix@C=1.1mm@R=8mm{
      {L_1} \ar[d]_{m_0'}
      & & {K_1} \ar@{>->}[ll]_{l_1} \ar[rr]^{r_1} \ar[d]_{k_0'}
      & & {R_1} \ar[drr]|(.35)\hole_(.6){h_0'}  \ar@/^.22cm/@{.>}^(.75){i_0'}|(.3)\hole[drrrrrr]
      & & & & 
      {L_0}\ar[dll]|(.35)\hole^(.6){m_1'} \ar@/_.22cm/@{.>}_(.75){i_1'}[dllllll] 
      & & {K_0} \ar@{>->}[ll]_{l_0} \ar[rr]^{r_0} \ar[d]^{k_1'}
      & & {R_0} \ar[d]^{h_1'} \\
      {G_0}
      & & {D_0'} \ar@{>->}[ll]^{f_0'} \ar[rrrr]_{g_0'}
      & & & & {G'_1} & &
      & &  {D_1'} \ar@{>->}[llll]^{f_1'} \ar[rr]_{g_1'}
      & & {G_2}  }
  \]
  such that there is an independence pair $(i_0', i_1')$ between
  $\dder{E}_0$ and $\dder{E}_1$ and 
  \begin{center}   
    $m_0=f_0' \circ i_1'$
    \qquad $h_1=g_1' \circ i_0'$
    \qquad $m_0'= f_0 \circ i_1$
    \qquad $h_1'= g_{1}\circ i_0$.
  \end{center}

  If a switch of $\dder{D}_0$ and $\dder{D}_1$ exists we say that
  they are \emph{switchable}.
\end{definition}

It can be shown that switches along the same independence pair are unique up to abstraction
equivalence (see Proposition~\ref{thm:switch_uni}).

\begin{example}
  \label{ex:seq-ind}
  Consider a rewriting system in $\cat{Graph}$, the category of
  directed graphs and graph morphisms, which is adhesive. The set of
  rules $\R = \{ \rho_0, \rho_1, \rho_2\}$ is in
  Fig.~\ref{fi:rules}, where numbers are used to represent the
  morphisms from the interface to the left- and right-hand
  sides. Rules $\rho_0$ and $\rho_1$ are linear: $\rho_0$ generates a
  new node and a new edge, while $\rho_1$ creates a self-loop edge. Instead,
  $\rho_2$ is left-linear but not linear, as it ``merges'' two nodes.

  \begin{figure}
      \begin{center}
    %
    %
    %
    %
  \begin{tikzpicture}[node distance=2mm, font=\small]
      \node (l) {
      \begin{tikzpicture}
        \node at (0,0.53) {};
        \node at (0,0) [node, label=below:$1$] (1) {} ;        
      \pgfBox
      \end{tikzpicture} 
    };
    \node[font=\scriptsize, above] at (l.north) {$L_0$};
    \node [right=of l] (k) {
      \begin{tikzpicture}
        \node at (0,0.53) {}; 
        \node at (0,0) [node, label=below:$1$] (1) {};
      \pgfBox
      \end{tikzpicture} 
    };
    \node[font=\scriptsize, above] at (k.north) {$K_0$};
    \node[below] at (k.south) {$\rho_0$};
    \node  [right=of k] (r) {
      \begin{tikzpicture}
        \node at (0,0.53) {}; 
        \node at (0,0) [node, label=below:$1$] (1) {};
        \node at (.5,0) [node, label=below:$2$] (2) {};
        \draw[coloredge] (1) to[out=20, in=160] (2);
        \pgfBox
      \end{tikzpicture}
    };
    \node[font=\scriptsize, above] at (r.north) {$R_0$};
    \path (k) edge[->] node[trans, above] {} (l);
    \path (k) edge[->] node[trans, above] {} (r);    
    \end{tikzpicture}
    %
    \hspace{1cm}
  %
    %
    \begin{tikzpicture}[node distance=2mm, font=\small]
      \node (l) {
      \begin{tikzpicture}
        \node at (0,0.53) {}; 
        \node at (0,0) [node, label=below:$1$] (1) {} ;
      \pgfBox
      \end{tikzpicture} 
    };
    \node[font=\scriptsize, above] at (l.north) {$L_1$};
    \node [right=of l] (k) {
      \begin{tikzpicture}
        \node at (0,0.53) {}; 
        \node at (0,0) [node, label=below:$1$] (1) {};
        \pgfBox
      \end{tikzpicture} 
    };
    \node[font=\scriptsize, above] at (k.north) {$K_1$};
    \node[below] at (k.south) {$\rho_1$};
    \node  [right=of k] (r) {
      \begin{tikzpicture}
        \node at (0,0) [node, label=below:$1$] (1) {}
         edge [in=55, out=85, loop] ();
        \pgfBox
      \end{tikzpicture}
    };
    \node[font=\scriptsize, above] at (r.north) {$R_1$};
    \path (k) edge[->] node[trans, above] {} (l);
    \path (k) edge[->] node[trans, above] {} (r);
    \end{tikzpicture}
  %
    \hspace{1cm}
  %
    %
    \begin{tikzpicture}[node distance=2mm, font=\small]
      \node (l) {
      \begin{tikzpicture}
        \node at (0,0.53) {}; 
        \node at (0,0) [node, label=below:$1$] (1) {} ;
        \node at (0.5,0) [node, label=below:$2$] (2) {} ;
      \pgfBox
      \end{tikzpicture} 
    };
    \node[font=\scriptsize, above] at (l.north) {$L_2$};
    \node [right=of l] (k) {
      \begin{tikzpicture}
        \node at (0,0.53) {}; 
        \node at (0,0) [node, label=below:$1$] (1) {} ;
        \node at (0.5,0) [node, label=below:$2$] (2) {} ;
      \pgfBox
      \end{tikzpicture} 
    };
    \node[font=\scriptsize, above] at (k.north) {$K_2$};
    \node[below] at (k.south) {$\rho_2$};
    \node  [right=of k] (r) {
      \begin{tikzpicture}
        \node at (0,0.53) {}; 
        \node at (0,0) [node, label=below:$12$] (12) {};
        \pgfBox
      \end{tikzpicture}
    };
    \node[font=\scriptsize, above] at (r.north) {$R_2$};
    \path (k) edge[->] node[trans, above] {} (l);
    \path (k) edge[->] node[trans, above] {} (r);
    \end{tikzpicture}
  %
\end{center}
    \caption{A rewriting system in $\cat{Graph}$.}
    \label{fi:rules}
  \end{figure}
  
  A derivation $\der{D}$ consisting of three steps $\dder{D}_0$,
  $\dder{D}_1$, and $\dder{D}_2$, each applying the corresponding rule
  $\rho_i$, is depicted in Fig.~\ref{fi:derD}. Note that the numbers in $\rho_1$
  were changed consistently to represent the vertical morphisms.
  Steps $\dder{D}_1$ and $\dder{D}_2$ are clearly sequential
  independent via the independence pair $(R_1 \to D_2, L_2 \to D_1)$,
  mapping nodes according to their numbering.
  A switch of $\der{D}$ along such independence pair is the derivation
  $\der{E}$ in Fig.~\ref{fi:derE}. Instead, in $\der{D}$ the first two
  steps applying $\rho_0$ and $\rho_1$ are not sequential independent,
  intuitively because $\rho_1$ uses the node produced by $\rho_0$ for
  attaching a self-loop.
  
  \begin{figure}[t]
      \begin{center}
    \begin{tikzpicture}[node distance=2mm, font=\small, baseline=(current bounding box.center)]      
      \node (L0) at (0,2) {
        \begin{tikzpicture}
          \node at (0,0.53) {};
          \node at (0,0) [node, label=below:$1$] (1) {} ;
          \pgfBox
        \end{tikzpicture} 
      };
      \node [right=of L0] (K0) {
        \begin{tikzpicture}
          \node at (0,0.53) {}; 
          \node at (0,0) [node, label=below:$1$] (1) {};
          \pgfBox
        \end{tikzpicture} 
      };
      \node [above] at (K0.north) {$\rho_0$};
      \node [right=of K0](R0) {
        \begin{tikzpicture}
          \node at (0,0.53) {}; 
          \node at (0,0) [node, label=below:$1$] (1) {};
          \node at (.5,0) [node, label=below:$2$] (2) {};
          \draw[coloredge] (1) to[out=20, in=160] (2);
          \pgfBox
        \end{tikzpicture}
      };
      \path (K0) edge[->] node[trans, above] {} (L0);
      \path (K0) edge[->] node[trans, above] {} (R0);

      \node at (4,2) (L1) {
        \begin{tikzpicture}
          \node at (0,0.53) {}; 
          \node at (0,0) [node, label=below:$2$] (2) {} ;
          \pgfBox
        \end{tikzpicture} 
      };
      \node [right=of L1] (K1) {
        \begin{tikzpicture}
          \node at (0,0.53) {}; 
          \node at (0,0) [node, label=below:$2$] (2) {};
          \pgfBox
        \end{tikzpicture} 
      };
      \node [above] at (K1.north) {$\rho_1$};
      \node [right=of K1] (R1) {
        \begin{tikzpicture}
          \node at (0,0) [node, label=below:$2$] (2) {}
           edge [in=55, out=85, loop] ();        
          \pgfBox
        \end{tikzpicture}
      };
      \path (K1) edge[->] node[trans, above] {} (L1);
      \path (K1) edge[->] node[trans, above] {} (R1);

      \node at (8,2) (L2) {
        \begin{tikzpicture}
          \node at (0,0.53) {}; 
          \node at (0,0) [node, label=below:$1$] (1) {} ;
          \node at (0.5,0) [node, label=below:$2$] (2) {} ;
          \pgfBox
        \end{tikzpicture} 
      };
      \node [right=of L2] (K2) {
        \begin{tikzpicture}
          \node at (0,0.53) {}; 
          \node at (0,0) [node, label=below:$1$] (1) {} ;
          \node at (0.5,0) [node, label=below:$2$] (2) {} ;
          \pgfBox
        \end{tikzpicture} 
      };
      \node [above] at (K2.north) {$\rho_2$};
      \node [right=of K2] (R2) {
        \begin{tikzpicture}
          \node at (0,0.53) {}; 
          \node at (0,0) [node, label=below:$12$] (12) {};
          \pgfBox
        \end{tikzpicture}
      };
      \path (K2) edge[->] node[trans, above] {} (L2);
      \path (K2) edge[->] node[trans, above] {} (R2);

      \node at (0,0) (G0) {
        \begin{tikzpicture}
          \node at (0,0.53) {};
          \node at (0,0) [node, label=below:$1$] (1) {} ;
          \pgfBox
        \end{tikzpicture} 
      };
      \node [right=of G0] (D0) {
        \begin{tikzpicture}
          \node at (0,0.53) {}; 
          \node at (0,0) [node, label=below:$1$] (1) {};
          \pgfBox
        \end{tikzpicture} 
      };
      \node at (3,0) (G1) {
        \begin{tikzpicture}
          \node at (0,0.53) {}; 
          \node at (0,0) [node, label=below:$1$] (1) {} ;
          \node at (0.5,0) [node, label=below:$2$] (2) {} ;
          \draw[coloredge] (1) to[out=20, in=160] (2);
          \pgfBox
        \end{tikzpicture} 
      };
      \path (D0) edge[->] node[trans, above] {} (G0);
      \path (D0) edge[->] node[trans, above] {} (G1);
      \path (L0) edge[->] node[trans, above] {} (G0);
      \path (K0) edge[->] node[trans, above] {} (D0);
      \path (R0) edge[->] node[trans, above] {} (G1);

      \node at (5,0) (D1) {
        \begin{tikzpicture}
          \node at (0,0.53) {};
          \node at (0,0) [node, label=below:$1$] (1) {} ;
          \node at (0.5,0) [node, label=below:$2$] (2) {} ;
          \draw[coloredge] (1) to[out=20, in=160] (2);
          \pgfBox
        \end{tikzpicture} 
      };
      \node at (7,0) (G2) {
        \begin{tikzpicture}
          \node at (0,0.53) {}; 
          \node at (0,0) [node, label=below:$1$] (1) {};
          \node at (0.5,0) [node, label=below:$2$] (2) {}
          edge [in=55, out=85, loop] (); 
          \draw[coloredge] (1) to[out=20, in=160] (2);
          \pgfBox
        \end{tikzpicture} 
      };

      \path (D1) edge[->] node[trans, above] {} (G1);
      \path (D1) edge[->] node[trans, above] {} (G2);
      \path (L1) edge[->] node[trans, above] {} (G1);
      \path (K1) edge[->] node[trans, above] {} (D1);
      \path (R1) edge[->] node[trans, above] {} (G2);
      
      \node at (9.46,0) (D2) {
        \begin{tikzpicture}
          \node at (0,0) [node, label=below:$1$] (1) {};
          \node at (0.5,0) [node, label=below:$2$] (2) {}
          edge [in=55, out=85, loop] ();
          \draw[coloredge] (1) to[out=20, in=160] (2);
          \pgfBox
        \end{tikzpicture} 
      };
      \node [right=of D2] (G3) {
        \begin{tikzpicture}
          \node at (0,0) [node, label=below:$12$] (12) {}
          edge [in=55, out=85, loop] ()
          edge [in=125, out=155, colorloop] ();
          \pgfBox
        \end{tikzpicture} 
      };
      \node[font=\scriptsize, below] at (G0.south) {$G_0$};
      \node[font=\scriptsize, below] at (D0.south) {$D_0$};      
      \node[font=\scriptsize, below] at (G1.south) {$G_1$};
      \node[font=\scriptsize, below] at (D1.south) {$D_1$};      
      \node[font=\scriptsize, below] at (G2.south) {$G_2$};
      \node[font=\scriptsize, below] at (D2.south) {$D_2$};      
      \node[font=\scriptsize, below] at (G3.south) {$G_3$};      
      \path (D2) edge[->] node[trans, above] {} (G2);
      \path (D2) edge[->] node[trans, above] {} (G3);
      \path (L2) edge[->] node[trans, above] {} (G2);
      \path (K2) edge[->] node[trans, above] {} (D2);
      \path (R2) edge[->] node[trans, above] {} (G3);
    \end{tikzpicture}
\end{center}
    \caption{The derivation $\der{D}$.}
    \label{fi:derD}
  \end{figure}
  \begin{figure}[b]
      \begin{center}
    \begin{tikzpicture}[node distance=2mm, font=\small,  baseline=(current bounding box.center)]      
      \node (L1) at (0,2) {
        \begin{tikzpicture}
          \node at (0,0.53) {};
          \node at (0,0) [node, label=below:$1$] (1) {} ;
          \pgfBox
        \end{tikzpicture} 
      };
      \node [right=of L1] (K1) {
        \begin{tikzpicture}
          \node at (0,0.53) {}; 
          \node at (0,0) [node, label=below:$1$] (1) {};
          \pgfBox
        \end{tikzpicture} 
      };
      \node [above] at (K1.north) {$\rho_0$};
      \node [right=of K1](R1) {
        \begin{tikzpicture}
          \node at (0,0.53) {}; 
          \node at (0,0) [node, label=below:$1$] (1) {};
          \node at (.5,0) [node, label=below:$2$] (2) {}; 
          \draw[coloredge] (1) to[out=20, in=160] (2);
          \pgfBox
        \end{tikzpicture}
      };
      \path (K1) edge[->] node[trans, above] {} (L1);
      \path (K1) edge[->] node[trans, above] {} (R1);

      \node at (4,2) (L3) {
        \begin{tikzpicture}
          \node at (0,0.53) {}; 
          \node at (0,0) [node, label=below:$1$] (1) {} ;
          \node at (0.5,0) [node, label=below:$2$] (2) {} ;
          \pgfBox
        \end{tikzpicture} 
      };
      \node [right=of L3] (K3) {
        \begin{tikzpicture}
          \node at (0,0.53) {}; 
          \node at (0,0) [node, label=below:$1$] (1) {} ;
          \node at (0.5,0) [node, label=below:$2$] (2) {} ;
          \pgfBox
        \end{tikzpicture} 
      };
      \node [above] at (K3.north) {$\rho_2$};
      \node [right=of K3] (R3) {
        \begin{tikzpicture}
          \node at (0,0.53) {}; 
          \node at (0,0) [node, label=below:$12$] (12) {};
          \pgfBox
        \end{tikzpicture}
      };
      \path (K3) edge[->] node[trans, above] {} (L3);
      \path (K3) edge[->] node[trans, above] {} (R3);

      \node at (8,2) (L2) {
        \begin{tikzpicture}
          \node at (0,0.53) {}; 
          \node at (0,0) [node, label=below:$12$] (1) {} ;
          \pgfBox
        \end{tikzpicture} 
      };
      \node [right=of L2] (K2) {
        \begin{tikzpicture}
          \node at (0,0.53) {}; 
          \node at (0,0) [node, label=below:$12$] (1) {};
          \pgfBox
        \end{tikzpicture} 
      };
      \node [above] at (K2.north) {$\rho_1$};

      \node [right=of K2] (R2) {
        \begin{tikzpicture}
          \node at (0,0) [node, label=below:$12$] (1) {}
          edge [in=55, out=85, loop] ();
          \pgfBox
        \end{tikzpicture}
      };
      \path (K2) edge[->] node[trans, above] {} (L2);
      \path (K2) edge[->] node[trans, above] {} (R2);

      \node at (0,0) (G1) {
        \begin{tikzpicture}
          \node at (0,0.53) {};
          \node at (0,0) [node, label=below:$1$] (1) {} ;
          \pgfBox
        \end{tikzpicture} 
      };
      \node [right=of G1] (D1) {
        \begin{tikzpicture}
          \node at (0,0.53) {}; 
          \node at (0,0) [node, label=below:$1$] (1) {};
          \pgfBox
        \end{tikzpicture} 
      };
      \node at (3,0) (G2) {
        \begin{tikzpicture}
          \node at (0,0.53) {}; 
          \node at (0,0) [node, label=below:$1$] (1) {} ;
          \node at (0.5,0) [node, label=below:$2$] (2) {} ;
          \draw[coloredge] (1) to[out=20, in=160] (2);
          \pgfBox
        \end{tikzpicture} 
      };
      \path (D1) edge[->] node[trans, above] {} (G1);
      \path (D1) edge[->] node[trans, above] {} (G2);
      \path (L1) edge[->] node[trans, above] {} (G1);
      \path (K1) edge[->] node[trans, above] {} (D1);
      \path (R1) edge[->] node[trans, above] {} (G2);

      \node at (5.5,0) (D2) {
        \begin{tikzpicture}
          \node at (0,0.53) {};
          \node at (0,0) [node, label=below:$1$] (1) {} ;
          \node at (0.5,0) [node, label=below:$2$] (2) {} ;
          \draw[coloredge] (1) to[out=20, in=160] (2);
          \pgfBox
        \end{tikzpicture} 
      };
      \node at (7.4,0) (G3) {
        \begin{tikzpicture}
          \node at (0,0.53) {}; 
          \node at (0,0) [node, label=below:$12$] (12) {}
          edge [in=125, out=155, colorloop] ();
          \pgfBox
        \end{tikzpicture} 
      };

      \path (D2) edge[->] node[trans, above] {} (G2);
      \path (D2) edge[->] node[trans, above] {} (G3);
      \path (L3) edge[->] node[trans, above] {} (G2);
      \path (K3) edge[->] node[trans, above] {} (D2);
      \path (R3) edge[->] node[trans, above] {} (G3);
      
      \node at (9.1,0) (D3) {
        \begin{tikzpicture}
          \node at (0,0.53) {}; 
          \node at (0,0) [node, label=below:$12$] (12) {}
          edge [in=125, out=155, colorloop] ();
          \pgfBox
        \end{tikzpicture} 
      };
      \node [right=of D3] (G4) {
        \begin{tikzpicture}
          \node at (0,0) [node, label=below:$12$] (12) {}
          edge [in=55, out=85, loop] ()
          edge [in=125, out=155, colorloop] ();
          \pgfBox
        \end{tikzpicture} 
      };

      \node[font=\scriptsize, below] at (G1.south) {$G_0$};
      \node[font=\scriptsize, below] at (D1.south) {$D_0$};      
      \node[font=\scriptsize, below] at (G2.south) {$G_1$};
      \node[font=\scriptsize, below] at (D2.south) {$D_1'$};      
      \node[font=\scriptsize, below] at (G3.south) {$G_2'$};
      \node[font=\scriptsize, below] at (D3.south) {$D_2'$};      
      \node[font=\scriptsize, below] at (G4.south) {$G_3$};      

      \path (D3) edge[->] node[trans, above] {} (G3);
      \path (D3) edge[->] node[trans, above] {} (G4);
      \path (L2) edge[->] node[trans, above] {} (G3);
      \path (K2) edge[->] node[trans, above] {} (D3);
      \path (R2) edge[->] node[trans, above] {} (G4);
    \end{tikzpicture}
\end{center}
    \caption{The derivation $\der{E}$.}
    \label{fi:derE}
  \end{figure}
\end{example}

As we mentioned, for linear rewriting systems
it is canonical to identify
derivations that are equal ``up to switching'', i.e.~that differ
only in the order of independent steps. The same notion can be given
for left-linear rewriting systems by relying on the notion of switch.

We  recall some notations on permutations.  A \emph{permutation} on
$\interval[0]{n}$ is a bijection
$\sigma : \interval[0]{n} \to \interval[0]{n}$. It is a
\emph{transposition} $\nu$ if there are $i, j \in \interval{n}$,
$i \neq j$ such that $\sigma(i)=j$, $\sigma(j) = i$, and
$\sigma(k) = k$ otherwise; it is denoted as $\transp{i}[j]$. Given a permutation $\perm$, 
an \emph{inversion} for $\sigma$ is a pair $(i,j)$ such that $i<j$ and
$\sigma(j)< \sigma(i)$; $\inv{\sigma}$ denotes the set of
inversions of $\perm$.

Switch equivalence is now defined as the equivalence relating derivations that 
can be obtained one from another by a sequence of switches. Moreover, intermediate graphs can be taken up to isomorphism according to abstraction equivalence.

\begin{definition}[Switch equivalence]
  \label{de:switch-equivalence}
  Let $(\X, \R)$ be a left-linear rewriting system.  Let also
  $\der{D}, \der{E} : G \Mapsto H$ be derivations with the same
  length, $\der{D}=\{\dder{D}_{i}\}_{i=0}^n$ and
  $\der{E}=\{\dder{E}_{i}\}_{i=0}^n$. If
  $\dder{D}_i \cdot \dder{D}_{i+1}$ is a switch of
  $\dder{E}_i \cdot \dder{E}_{i+1}$ for some $i \in [0,n-1]$ and  $\dder{D}_j = \dder{E}_j$ for each $j \not \in \{i,i+1\}$ then we write
  $\der{D} \shift{\transp{i}} \der{E}$. 
 A \emph{switching sequence} is a sequence $\{\der{D}_{k}\}_{k=0}^m$
  of derivations such that
    $\der{D}_0 \shift{\nu_1} \der{D}_1 \shift{\nu_2} \ldots
    \shift{\nu_m} \der{D}_m$  with $\nu_{k} = \transp{i_k}$.
 
    Let us denote by $\nu_{k,h}$ the composition
  $\nu_h \circ \nu_{h-1} \circ \ldots \nu_k$. We say that the
  switching sequence \emph{consists of inversions} if for all
  $k \in \interval[0]{m}$ the transposition $\nu_k$ is an inversion
  for $\nu_{1,m}$.
  
  Two derivations $\der{D}, \der{E}:G\Mapsto H$ are \emph{switch
    equivalent}, written $\der{D}\equiv^{sh} \der{E}$, if there is a
  switching sequence $\{\der{D}_{k}\}_{k=0}^m$ such that
  $\der{D}\equiv_a \der{D}_0 \shift{\nu_1} \der{D}_1 \shift{\nu_2}
  \ldots \shift{\nu_m} \der{D}_m \equiv _a \der{E}$.   
  To point out a chosen permutation of rewriting steps, we will also write $\der{D}\equiv^{sh}_{\sigma} \der{E}$, 
  where $\sigma$ is the composition of the transposition appearing in a chosen switching sequence. 
\end{definition}

\begin{remark}\label{rem:abst}
	The possibility of an empty switching sequence assures that two abstraction equivalent derivations are switch equivalent.
\end{remark}

\subsection{Church-Rosser: Sequential independence and switchability}\label{subsec:CR}

The fact that sequential independence implies switchability always
holds for linear rules (see \Cref{prop:equi}). The result is so
indispensable that typically, in the literature, it is not even
stated, in the sense that switchability is not introduced
axiomatically as in Definition~\ref{def:switch}, but it is based on
the explicit construction of a switch.

For left-linear rewriting systems
sequential independence does not imply switchability (while the
converse implication clearly holds), as shown by the next example.

\begin{figure}
  \begin{subfigure}{0.2\textwidth}
    {\small
      \xymatrix@C=16pt@R=1pt{
        & 0 \ar@{-}[dd] \ar@{-}[ddddddl] \ar@{-}[ddddddr] &  \\
        &  & \\
        & 1 \ar@{-}[dd] \ar@{-}[ddddl]  \ar@{-}[ddddr]   &  \\
        &  & \\
        & 2 \ar@{.}[d] \ar@{-}[ddl]   \ar@{-}[ddr]     &  \\
        &  \ar@{.}[dl]   \ar@{.}[dr] & \\
        b \ar@{-}[dr] & & c \ar@{-}[dl]\\
        & a 
      }
    }
    \caption{Poset $(P, \sqsubseteq)$}
    \label{fig:diff1:P}
  \end{subfigure}
  \begin{subfigure}{0.58\textwidth}
    \xymatrix@C=20pt{
      a \ar[d]_{\sqsubseteq} & a \ar[d]_{\sqsubseteq} \ar@{>->}[l]_{\id{a}} \ar[r]^{\sqsubseteq} & 
      0 \ar@{>->}@/^.35cm/[drrr]|(.3)\hole^(.75){\id{0}}
      \ar@{>->}[dr]|(.3)\hole_{\id{0}} &  &
      a \ar@/_.35cm/[dlll]_(.75){\sqsubseteq}
      \ar[dl]|(.3)\hole^{\sqsubseteq} & 
      a \ar[d]^{\sqsubseteq}\ar@{>->}[l]_{\id{a}} \ar[r]^{\sqsubseteq} &
      b \ar[d]_{\sqsubseteq}\\
      c &
      \ar@{>->}[l]^{\id{c}} c \ar[rr]_{\sqsubseteq}& &  0 &&
      \ar@{>->}[ll]^{\id{0}} 0 \ar@{>->}[r]_{\id{0}}& 0
    }
    \caption{Two rewriting steps}
    \label{fig:diff1:rew}
  \end{subfigure}
  \begin{subfigure}{0.20\textwidth}
    \xymatrix@C=20pt{a \ar[d]^{\sqsubseteq}& a
      \ar[d]^{\sqsubseteq}\ar@{>->}[l]_{\id{a}} \ar[r]^{\sqsubseteq} & b \\c & c
      \ar@{>->}[l]^{\id{c}}
    }
    \caption{No rewriting step}
    \label{ex:diff1:no-step}
  \end{subfigure}
  \caption{Sequential independence does not imply switchability}
  \label{fig:diff1} 
\end{figure}
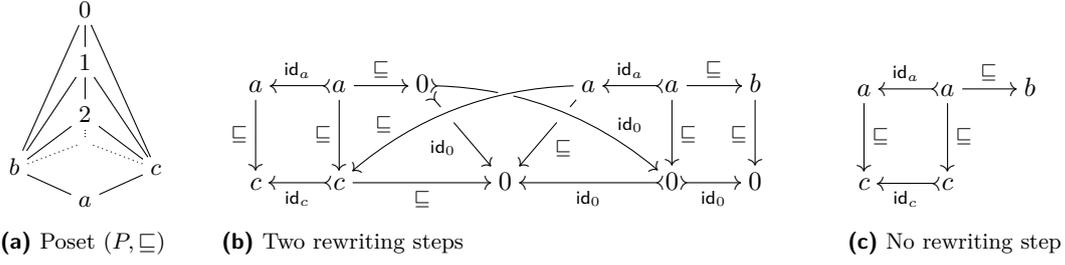

\begin{example}
  \label{ex:diff1}
  Consider the poset $(P, \sqsubseteq)$ in Fig.~\ref{fig:diff1:P} where
  $P = \mathbb{N} \cup \{a,b,c\}$ and $\sqsubseteq$ is defined by
  $m \sqsubseteq n$ if $m \geq n$, $a \sqsubseteq x$ for all $x \in P$
  and $b, c \sqsubseteq n$ for all $n \in \mathbb{N}$.
  Let $\X$ be the  
  category associated with this order, which by \Cref{rem:iso} is
  $\mathsf{I(\X)}$-adhesive. 
  Consider a rewriting
  system whose set of rules contains the following 
  \[\xymatrix{a & a \ar[r]^{\sqsubseteq} \ar@{>->}[l]_{\id{a}} & 0 & a & a
      \ar[r]^{\sqsubseteq} \ar@{>->}[l]_{\id{a}} & b }\]
  We can then consider the  derivation
  $\der{D}=\{\dder{D}_i\}_{i=0}^1$ in Fig.~\ref{fig:diff1:rew}.
  %
  Note that the two direct derivations are sequential
  independent. However there is no switch since the rule applied by
  $\dder{D}_1$ cannot be applied to $c$. In fact, there is a unique morphism
  $a\to c$, yielding the diagram in Fig.~\ref{ex:diff1:no-step}. But $b$ and $c$
  do not have a supremum in the poset underlying $\X$, thus the arrows
  $a\to b$ and $a\to c$ do not have a
  pushout. Hence we do not get a direct derivation from $c$.
\end{example}

The conditions guaranteeing switchability are inspired
by the notion of \emph{canonical filler}
\cite{heindel2009category}.

\begin{definition}[Strong independence pair]
  \label{def:filler}
  Let $\X$ be an $\mathcal{M}$-adhesive category 
  and $(\X, \R)$ a left-linear rewriting system. 
  Let also $(i_0, i_1)$ be an independence pair between two direct
  derivations $\dder{D}_0$, $\dder{D}_1$
  as in the solid part of the diagram below
  \[
    \xymatrix@C=18pt@R=17pt{L_0 \ar[d]_{m_0}&& K_0
      \ar[d]_{k_0}\ar@{>->}[ll]_{l_0} \ar[r]^{r_0} \ar@{.>}@/^.20cm/[ddrr]|(.32)\hole|(.57)\hole_(.7){u_0} & R_0
      \ar@/^.35cm/[drrr]|(.3)\hole^(.2){i_0} \ar[dr]|(.3)\hole_{h_0}
      && L_1 \ar@/_.35cm/[dlll]_(.2){i_1} \ar[dl]|(.3)\hole^{m_1}& K_1
      \ar[d]^{k_1}\ar@{>->}[l]_{l_1} \ar[rr]^{r_1} \ar@{.>}@/_.20cm/[ddll]|(.32)\hole|(.57)\hole^(.7){u_1} && R_1 \ar[d]^{h_1} \\
      G_0 &&
      \ar@{>->}[ll]^{f_0} D_0 \ar[rr]_(.3){g_0}&& G_1 &&
      \ar@{>->}[ll]^(.3){f_1} D_1 \ar[rr]_{g_1}&& G_2\\
      & & & & P \ar@/^.3cm/@{>.>}[ull]^{p_0} \ar@/_.3cm/@{.>}[urr]_{p_1}
    }
  \]

  Consider the pullback of $g_0$ and $f_1$, which yields $p_i : P \to D_i$
  for $i \in \{0,1\}$ and the mediating arrows $u_i\colon K_i\to P$ 
  for $i \in \{0,1\}$ 
  into the pullback object (see \Cref{prop:tec} for details). 
  We say that $(i_0, i_1)$ is a \emph{strong independence pair} if
  the first two squares depicted  
  below are pushouts and if the pushout of $r_1 : K_1 \to R_1$ and $u_1 : K_1 \to P$ exists
  \[
    \xymatrix@R=16pt{
      K_0 \ar[r]^{r_0} \ar[d]_{u_0}& R_0 \ar[d]^{i_0} & K_1
      \ar@{>->}[r]^{l_1} \ar[d]_{u_1}& L_1 \ar[d]^{i_1}
       &K_1 \ar[r]^{r_1} \ar[d]_{u_1}& R_1\ar@{.>}[d]^{j_0}
      \\
      P \ar[r]_{p_1} & D_1 &P \ar[r]_{p_0} & D_0
       & P \ar@{.>}[r]_{q_1} & Q_1
    }
  \]
\end{definition}

We can now prove a Local Church-Rosser Theorem for strong independence pairs.

\begin{restatable}[Local Church-Rosser Theorem]{proposition}{prChurch}
  \label{pr:church}
  Let $(i_0, i_1)$ be a strong independence pair
  between $\dder{D}_0$ and $\dder{D}_1$. Then $\dder{D}_0$ and
  $\dder{D}_1$ are switchable.
\end{restatable}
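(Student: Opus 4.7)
The strategy is to construct the switched derivation $\der{E} = \{\dder{E}_0, \dder{E}_1\}$ explicitly, using the three pushouts guaranteed by the strong independence condition, and then to exhibit an independence pair between $\dder{E}_0$ and $\dder{E}_1$. The key insight is that the pullback object $P$ serves as a ``shared interface'' over which the pushouts of the two rules can be recomposed in reverse order, while the third pushout $Q_1 = R_1 +_{K_1} P$ plays the role of the intermediate state obtained from $P$ after applying $\rho_1$. Throughout the argument I will use repeatedly the identities $p_0 \circ u_0 = k_0$, $p_1 \circ u_1 = k_1$, $p_0 \circ u_1 = i_1 \circ l_1$ and $p_1 \circ u_0 = i_0 \circ r_0$, which characterise the mediating arrows $u_0, u_1$ into the pullback $P$.

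First I would build the pushout complement for the first step of $\der{E}$: let $D'_0$ be the pushout of $L_0 \xleftarrow{l_0} K_0 \xrightarrow{u_0} P$, which exists because $l_0 \in \mathcal{M}$; denote its injections by $\alpha \colon L_0 \to D'_0$ and $\beta \colon P \to D'_0$. The arrow $f'_0 \colon D'_0 \to G_0$ is the mediator induced by the cocone $(m_0, f_0 \circ p_0)$. An application of the pushout pasting lemma, using the pushouts $D_0 = L_1 +_{K_1} P$ and $G_0 = L_0 +_{K_0} D_0$ (where $k_0 = p_0 \circ u_0$) together with $D'_0 = L_0 +_{K_0} P$, shows that $G_0$ is \emph{also} the pushout of $L_1 \xleftarrow{l_1} K_1 \xrightarrow{\beta \circ u_1} D'_0$ with cocone $(f_0 \circ i_1,\ f'_0)$. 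This yields $\dder{E}_0$: apply $\rho_1$ at match $m'_0 = f_0 \circ i_1$ with pushout complement $D'_0$, producing $G'_1$ as the pushout $R_1 +_{K_1} D'_0$.

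Second I would take $D'_1 := Q_1$ as the pushout complement for the second step. A further round of pushout pasting, swapping the order of the iterated pushouts, identifies $G'_1$ with $L_0 +_{K_0} Q_1$, via the third pushout $Q_1 = R_1 +_{K_1} P$ of the strong independence condition combined with $D'_0 = L_0 +_{K_0} P$. Hence $Q_1$ is indeed a pushout complement for $\rho_0$ at the match $m'_1 \colon L_0 \to G'_1$ read off from this description. Applying $\rho_0$ produces the pushout $R_0 +_{K_0} Q_1$, and a final application of pasting, combining $D_1 = R_0 +_{K_0} P$ (the first pushout of the strong independence condition) with the original $G_2 = R_1 +_{K_1} D_1$, shows that this object is canonically isomorphic to $G_2$. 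This delivers $\dder{E}_1$ with co-matches $g'_1 \colon Q_1 \to G_2$ and $h'_1 \colon R_0 \to G_2$.

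Finally I would take as independence pair for $\der{E}$ the arrows $i'_0 := j_0 \colon R_1 \to Q_1$ and $i'_1 := \alpha \colon L_0 \to D'_0$. The conditions $m'_0 = f_0 \circ i_1$ and $h'_1 = g_1 \circ i_0$ of \Cref{def:switch} hold by construction of the matches and co-matches, while $m_0 = f'_0 \circ \alpha$ and $h_1 = g'_1 \circ j_0$ are built into the universal properties defining $f'_0$ and $g'_1$. The main technical hurdle will be the bookkeeping in the pushout pasting arguments, in particular checking that the cocones assembled in each step genuinely commute and that the mediator $f'_0$ to $G_0$ and the resulting identification of $R_0 +_{K_0} Q_1$ with $G_2$ agree with the data of the original derivation; all such checks reduce to the four identities for $u_0, u_1$ mentioned at the outset and to routine applications of the universal property of pushouts in the $\mathcal{M}$-adhesive category $\X$.
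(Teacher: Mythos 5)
Your proposal is correct and follows essentially the same route as the paper's proof: you build $D'_0 = L_0 +_{K_0} P$ (the paper's $Q_0$), use the pushout $Q_1 = R_1 +_{K_1} P$ guaranteed by strong independence, obtain the intermediate object as the pushout of $Q_0 \leftarrow P \to Q_1$ (which exists because $P \to D'_0$ is an $\mathcal{M}$-arrow, being a pushout of $l_0$), and identify all required squares by pasting, ending with the same independence pair $(j_0, j_1)$. The only presentational caveat is that the existence of $R_1 +_{K_1} D'_0$ should be \emph{derived} from the swap through $Q_0 +_P Q_1$ rather than asserted up front in your first paragraph, but your second paragraph supplies exactly that argument.
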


The correspondence between sequential independence and switchability
is fundamental. We name the class of rewriting systems where this property holds.

\begin{definition}[Strong enforcing rewriting systems]
  A left-linear rewriting system is \emph{strong enforcing} if
  every independence pair between two direct derivations is strong.
\end{definition}

We can identify a large class of adhesive categories such that all
left-linear rewriting systems over such categories are strong
enforcing. This class includes $\cat{Set}$ and it is closed
under comma and functor category constructions.  As such, it includes
essentially all categories (e.g., presheaves over set) that are
typically considered for modelling purposes. Notably, it contains the
category $\cat{Graph}$ of directed graphs.
This is a natural generalisation from adhesive to $\mathcal{M}$-adhesive categories of a class studied in \cite{baldan2011adhesivity} (see \Cref{app:fill} for details).

Still, there are $\mathcal{M}$-adhesive rewriting systems that are not strong enforcing.

\begin{example}[Non-strong enforcing left-linear rewriting system]
  \label{ex:diff2}
  In light of \Cref{pr:church}, \Cref{ex:diff1} provides an example of
  an independence pair that is not strong. This gives an example of a
  left-linear rewriting system in a $\mathcal{M}$-adhesive category that is quite
  pathological since $\mathcal{M}=\mathsf{I(\X)}$. However, this is expected, 
  since all natural examples seem to belong to the well-behaved class mentioned above. 
\end{example}

\begin{remark}
  By \Cref{pr:church} the existence of a strong independence pair
  entails switchability, which in turn entails sequential
  independence by construction. Strong enforcing rewriting systems are exactly
  those rewriting systems in which these three notions coincide.
\end{remark}

Consistency with the theory of linear rewriting systems is ensured by the fact that all linear rewriting systems are strong enforcing (see \cref{prop:equi}).

\subsection{Well-switching rewriting systems}
\label{subsec:verytame}

Even if we work in strong enforcing rewriting systems where sequential
independence ensures switchability, when dealing with left-linear
rules there is a further, possibly more serious issue, 
namely that there can be more than one independence pair between
the same derivations (cfr. \Cref{rem:uni}). This hinders the very idea of using
sequential independence as a basis of a theory of concurrency for
rewriting systems, since exchanges performed using different
independence pairs may lead to derivations that are not abstraction
equivalent, thus equating computations that should
definitively be taken apart, as shown in the example below.

\begin{example}
  Consider the derivation $\der{E}$ from Example~\ref{ex:seq-ind} (see Fig.~\ref{fi:derE}).
   The last two steps are sequential independent, but one easily sees
  that there are two distinct independence pairs, as the left-hand side
  of $\rho_1$ can be mapped either to node $1$ or to node
  $2$ in $D_1'$. Correspondingly, there
  are two switches of $\der{E}$: one is the derivation $\der{D}$ in
  Fig.~\ref{fi:derD} we started from, the other is the derivation
  $\der{D}'$ in Fig.~\ref{fi:derD1}.
  
  \begin{figure}
      \begin{center}
    \begin{tikzpicture}[node distance=2mm, font=\small, baseline=(current bounding box.center)]      
      \node (L1) at (0,2) {
        \begin{tikzpicture}
          \node at (0,0.53) {};
          \node at (0,0) [node, label=below:$1$] (1) {} ;
          \pgfBox
        \end{tikzpicture} 
      };
      \node [right=of L1] (K1) {
        \begin{tikzpicture}
          \node at (0,0.53) {}; 
          \node at (0,0) [node, label=below:$1$] (1) {};
          \pgfBox
        \end{tikzpicture} 
      };
      \node [above] at (K1.north) {$\rho_0$};
      \node [right=of K1](R1) {
        \begin{tikzpicture}
          \node at (0,0.53) {}; 
          \node at (0,0) [node, label=below:$1$] (1) {};
          \node at (.5,0) [node, label=below:$2$] (2) {};
          \draw[coloredge] (1) to[out=20, in=160] (2);
          \pgfBox
        \end{tikzpicture}
      };
      \path (K1) edge[->] node[trans, above] {} (L1);
      \path (K1) edge[->] node[trans, above] {} (R1);

      \node at (4,2) (L2) {
        \begin{tikzpicture}
          \node at (0,0.53) {}; 
          \node at (0,0) [node, label=below:$1$] (1) {} ;
          \pgfBox
        \end{tikzpicture} 
      };
      \node [right=of L2] (K2) {
        \begin{tikzpicture}
          \node at (0,0.53) {}; 
          \node at (0,0) [node, label=below:$1$] (1) {};
          \pgfBox
        \end{tikzpicture} 
      };
            \node [above] at (K2.north) {$\rho_1$};
      \node [right=of K2] (R2) {
        \begin{tikzpicture}
          \node at (0,0) [node, label=below:$1$] (1) {}
          edge [in=55, out=85, loop] ();        
          \pgfBox
        \end{tikzpicture}
      };
      \path (K2) edge[->] node[trans, above] {} (L2);
      \path (K2) edge[->] node[trans, above] {} (R2);

      \node at (8,2) (L3) {
        \begin{tikzpicture}
          \node at (0,0.53) {}; 
          \node at (0,0) [node, label=below:$1$] (1) {} ;
          \node at (0.5,0) [node, label=below:$2$] (2) {} ;
          \pgfBox
        \end{tikzpicture} 
      };
      \node [right=of L3] (K3) {
        \begin{tikzpicture}
          \node at (0,0.53) {}; 
          \node at (0,0) [node, label=below:$1$] (1) {} ;
          \node at (0.5,0) [node, label=below:$2$] (2) {} ;
          \pgfBox
        \end{tikzpicture} 
      };
      \node [above] at (K3.north) {$\rho_2$};
      \node [right=of K3] (R3) {
        \begin{tikzpicture}
          \node at (0,0.53) {}; 
          \node at (0,0) [node, label=below:$12$] (12) {};
          \pgfBox
        \end{tikzpicture}
      };
      \path (K3) edge[->] node[trans, above] {} (L3);
      \path (K3) edge[->] node[trans, above] {} (R3);

      \node at (0,0) (G1) {
        \begin{tikzpicture}
          \node at (0,0.53) {};
          \node at (0,0) [node, label=below:$1$] (1) {} ;
          \pgfBox
        \end{tikzpicture} 
      };
      \node [right=of G1] (D1) {
        \begin{tikzpicture}
          \node at (0,0.53) {}; 
          \node at (0,0) [node, label=below:$1$] (1) {};
          \pgfBox
        \end{tikzpicture} 
      };
      \node at (3,0) (G2) {
        \begin{tikzpicture}
          \node at (0,0.53) {}; 
          \node at (0,0) [node, label=below:$1$] (1) {} ;
          \node at (0.5,0) [node, label=below:$2$] (2) {} ;
          \draw[coloredge] (1) to[out=20, in=160] (2);
          \pgfBox
        \end{tikzpicture} 
      };
      \path (D1) edge[->] node[trans, above] {} (G1);
      \path (D1) edge[->] node[trans, above] {} (G2);
      \path (L1) edge[->] node[trans, above] {} (G1);
      \path (K1) edge[->] node[trans, above] {} (D1);
      \path (R1) edge[->] node[trans, above] {} (G2);

      \node at (5,0) (D2) {
        \begin{tikzpicture}
          \node at (0,0.53) {};
          \node at (0,0) [node, label=below:$1$] (1) {} ;
          \node at (0.5,0) [node, label=below:$2$] (2) {} ;
          \draw[coloredge] (1) to[out=20, in=160] (2);
          \pgfBox
        \end{tikzpicture} 
      };
      \node at (7,0) (G3) {
        \begin{tikzpicture}
          \node at (0,0.53) {}; 
          \node at (0,0) [node, label=below:$1$] (1) {}
          edge [in=55, out=85, loop] ();
          \node at (0.5,0) [node, label=below:$2$] (2) {} ;
          \draw[coloredge] (1) to[out=20, in=160] (2);
          \pgfBox
        \end{tikzpicture} 
      };

      \path (D2) edge[->] node[trans, above] {} (G2);
      \path (D2) edge[->] node[trans, above] {} (G3);
      \path (L2) edge[->] node[trans, above] {} (G2);
      \path (K2) edge[->] node[trans, above] {} (D2);
      \path (R2) edge[->] node[trans, above] {} (G3);
      
      \node at (9.46,0) (D3) {
        \begin{tikzpicture}
          \node at (0,0) [node, label=below:$1$] (1) {}
          edge [in=55, out=85, loop] ();
          \node at (0.5,0) [node, label=below:$2$] (2) {} ;
          \draw[coloredge] (1) to[out=20, in=160] (2);
          \pgfBox
        \end{tikzpicture} 
      };
      \node [right=of D3] (G4) {
        \begin{tikzpicture}
          \node at (0,0) [node, label=below:$12$] (12) {}
          edge [in=55, out=85, loop] ()
          edge [in=125, out=155, colorloop] ();
          \pgfBox
        \end{tikzpicture} 
      };
      \node[font=\scriptsize, below] at (G1.south) {$G_0$};
      \node[font=\scriptsize, below] at (D1.south) {$D_0$};      
      \node[font=\scriptsize, below] at (G2.south) {$G_1$};
      \node[font=\scriptsize, below] at (D2.south) {$D_1''$};      
      \node[font=\scriptsize, below] at (G3.south) {$G_2''$};
      \node[font=\scriptsize, below] at (D3.south) {$D_2''$};      
      \node[font=\scriptsize, below] at (G4.south) {$G_3$};      

      \path (D3) edge[->] node[trans, above] {} (G3);
      \path (D3) edge[->] node[trans, above] {} (G4);
      \path (L3) edge[->] node[trans, above] {} (G3);
      \path (K3) edge[->] node[trans, above] {} (D3);
      \path (R3) edge[->] node[trans, above] {} (G4);
    \end{tikzpicture}
\end{center}
    \caption{The derivation $\der{D}'$.}
    \label{fi:derD1}
  \end{figure}
  
  As a consequence, $\der{D}$ and $\der{D}'$ would be switch
  equivalent, but this is not acceptable when viewing equivalence
  classes of derivations as concurrent computations: in $\der{D}$ the
  first two steps are not sequential independent, while in $\der{D}'$
  they are, intuitively because in $\der{D}$ rule $\rho_1$ uses the
  node generated by $\rho_0$ (adding a self-loop to it), while in
  $\der{D}'$ rule $\rho_1$ uses the node that was in the initial
  graph. Also observe that the graphs $G_2$ and $G_2'$ produced after
  two steps in $\der{D}$ and $\der{D'}$ are not isomorphic. From the
  technical point of view, the property of being switch equivalent is
  not closed by prefix, and this prevents deriving a sensible
  concurrent semantics: In fact
  $\der{D} = \dder{D}_0\cdot \dder{D}_1 \cdot \dder{D}_2$ and
  $\der{D}' = \dder{D}_0'\cdot \dder{D}_1' \cdot \dder{D}_2'$ are
  switch equivalent, while if we consider the first two steps,
  derivations $\dder{D}_0 \cdot \dder{D}_1$ and
  $\dder{D}_0' \cdot \dder{D}_1'$ are not switch equivalent.
\end{example} 

\full{Moreover, limiting sequential independence to the case in which the independence
pair is unique (as suggested in~\cite{baldan2017domains}) brings to the same problems.}

For these reasons we believe a theory of rewriting for
left-linear rules in adhesive categories should be developed for
systems where the uniqueness of the independence pair is ensured.

\begin{definition}[Well-switching rewriting systems]
A left-linear rewriting system $(\X, \R)$ is \emph{well-switching} if it is strong enforcing and, for every derivation $\der{D}:=\{\dder{D}_{i}\}_{i=0}^1$, there is at most one independence pair between $\dder{D}_0$ and $\dder{D}_1$.
\end{definition}

Clearly, linear rewriting systems are well-switching (see Proposition~\ref{pr:weak}).
Moreover, we next observe that various classes of rewriting systems,
comprising all the ones used in modelling the applications to the
encoding of process calculi or of bio and chemical systems mentioned
in the introduction, are actually well-switching.

A first class consists of those rewriting systems over
possibly hierarchical graphical structures obtained as algebras of
suitable signatures where rules are constrained not to merge elements
of top level sorts in the hierarchy (for graphs, nodes can be
merged while edges cannot). The idea here is to consider graph
structures as presheaves on categories in which there are objects that play 
the role of \emph{roots}, i.e.~objects that are not the codomain of 
any arrow besides the identity. 

\begin{definition}[Root-preserving graphical rewriting systems]
	Let $\X$ be a category, an object $X\in \X$ is a \emph{root} if the only arrow with codomain $X$ is $\id{X}$.
	The category $\gph{X}$ of \emph{$\X$-graphs} is the category
	$\Set^{\X}$. A
	\emph{root-preserving graphical rewriting system} is a left-linear
	rewriting system $(\gph{X}, \R)$ such that for each rule
	$(l\colon K\to L, r\colon K\to R)$ in $\R$ it holds
	\begin{enumerate}
		\item for every $X\in \X$ and $x\in L(X)$, there exists a root $Y$
		 and an arrow $f\colon Y\to X$ such that $x$ belongs to the image of
		$L(f)\colon L(Y)\to L(X)$;
		\item $r\colon K\to R$ is mono on the roots, i.e.~for every root $X\in \X$ the component $r_X:K(X)\to R(X)$ is injective.
	\end{enumerate}
\end{definition}

For instance, the category $\cat{Graph}$ can be obtained
by taking as $\X$ the category generated by $E \rightrightarrows V$. In this case $E$ is the only root, hence, condition $1$ asks that in the left-hand side of each rule
there are no isolated nodes, while condition $2$ asks that the
morphism $r: K \to R$ is injective on edges, i.e.~it can only merge nodes.

\begin{restatable}{lemma}{lemVTame}
\label{bono}
  All root-preserving graphical rewriting systems are well-switching.
\end{restatable}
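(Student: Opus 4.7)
The plan is to verify the two defining requirements of a well-switching rewriting system separately, namely that the system is strong enforcing and that any pair of sequentially independent direct derivations admits at most one independence pair.

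For the strong enforcing part, I would simply invoke the general closure result noted in the discussion of strong enforcing systems: the class of $\mathcal{M}$-adhesive categories over which every left-linear rewriting system is strong enforcing contains $\cat{Set}$ and is closed under functor category constructions. Since $\gph{X} = \Set^{\X}$, this directly gives that any left-linear rewriting system on $\gph{X}$ is strong enforcing, independently of the root-preservation constraints.

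The heart of the proof is uniqueness of the independence pair. Suppose $(i_0,i_1)$ and $(i_0',i_1')$ are independence pairs for a derivation $\dder{D}_0\cdot\dder{D}_1$. The equality $i_0 = i_0'$ is immediate: $f_1$ is obtained as the pushout of $l_1\in\mathcal{M}$ and, by stability of $\mathcal{M}$ under pushouts, lies in $\mathcal{M}$ itself, hence is mono; together with $f_1\circ i_0 = h_0 = f_1\circ i_0'$ this forces $i_0 = i_0'$. For $i_1 = i_1'$ the argument is more delicate, since $g_0$ need not be mono in the left-linear setting. I would proceed pointwise, exploiting that colimits in $\Set^{\X}$ are computed componentwise and that pushouts of monos in $\Set$ are again mono. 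Condition~(2) ensures $(r_0)_Y$ is injective for every root $Y$, hence $(g_0)_Y$ is injective, and the equation $g_0\circ i_1 = m_1 = g_0\circ i_1'$ then fixes $(i_1)_Y = (i_1')_Y$ on every root $Y$. For a non-root object $X$ and $x\in L_1(X)$, condition~(1) supplies a root $Y$, an arrow $f\colon Y\to X$ and $y\in L_1(Y)$ with $L_1(f)(y)=x$; naturality of $i_1$ and $i_1'$ then yields
\[
(i_1)_X(x)=D_0(f)\bigl((i_1)_Y(y)\bigr)=D_0(f)\bigl((i_1')_Y(y)\bigr)=(i_1')_X(x),
\]
so $i_1 = i_1'$. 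The main obstacle is precisely this reduction from non-root to root components: without condition~(1) there could be elements of $L_1(X)$ not reachable via $L_1$ from any root, on which the two pairs could plausibly differ, while without condition~(2) the root components themselves would fail to be pinned down by the mono-on-roots argument. The two conditions together are exactly what is needed to close the argument.
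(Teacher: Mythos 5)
Your proposal is correct and follows essentially the same route as the paper: strong enforcing via the closure of the type-$\mathbb{B}^+$ class under functor categories, $i_0=i_0'$ from $f_1$ being mono, and $i_1=i_1'$ by first pinning down the root components (using condition~(2) and componentwise pushouts so that $(g_0)_Y$ is injective on roots) and then propagating to all other components via condition~(1) and naturality. No gaps; you even state the naturality step with $D_0(f)$ more cleanly than the paper's own writeup.
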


Another interesting class of well-switching rewriting systems is given by e-graphs.

\begin{example}[E-graphs]
  Consider the category $\cat{EGraphs}$, where objects are (directed)
  graphs endowed with an equivalence over nodes, and arrows are graph
  morphisms that preserve the equivalence, as considered
  in~\cite{BaldanGM06}, closely related to e-graphs~\cite{WNW:egg}. 
  Formally, $\cat{EGraphs}$ can be seen as the
  subcategory of the presheaf
  $[E \rightrightarrows V \to Q, \cat{Set}]$ where objects are
  constrained to have the component $V \to Q$ surjective. 
  
  Explicitly, an
  e-graph $G$ is a triple $\langle s_G, t_G, q_G \rangle$ where
  $s_G, t_G: E_G \rightrightarrows V_G$ provides the graphical
  structure, while the surjective function $q_G : V_G \to Q_G$ maps
  each node to the corresponding equivalence class. 
  Notice that the inclusion functor into  $[E \rightrightarrows V \to Q, \cat{Set}]$ 
  creates pullbacks and pushouts \cite{mac2013categories}, so that they are computed component-wise.
  
  A morphism in $\cat{EGraph}$ is mono if the components over $E$
  and $V$ are mono, i.e.~if it is mono as a morphism in
  $\cat{Graph}$. It is regular mono if also the component on
  $Q$ is mono, i.e.~if it reflects equivalence classes besides
  preserving them. This characterisation of regular monos and the fact that pullbacks and pushouts are computed component-wise allows us to prove quasi-adhesivity of $\cat{EGraphs}$ at once. Moreover, one can deduce that every rewriting system $(\X, \R)$ that is left-linear with respect to $\reg(\cat{EGraphs})$ is strong enforcing: this is done exploiting again the inclusion functor into $[E \rightrightarrows V \to Q, \cat{Set}]$.
  
Left-linear rewriting systems with respect to $\reg(\cat{EGraphs})$ are well-switching. They have been used in~\cite{BaldanGM06} for the graphical implementation of nominal calculi, where,
differently from~\cite{Gad07}, as a result of name passing the received name is not merged with a local one, but put in the same equivalence class, better tracing the causal dependencies among reductions.
\end{example}

\subsection{A canonical form for  switch equivalences}
\label{subsec:canonical}

As discussed in the introduction, in the linear case a number of
basic properties of switch equivalence, i.e.~the globality of
independence, the consistency of switching, and the existence of a
canonical form for equivalence proofs, can be derived from a
characterisation of switch equivalence in terms of consistent
permutations or processes.

When dealing with left-linear rules, such a characterisation
fails. Leaving aside the formal details (the interested reader can
refer to \Cref{def:permcon}), the intuition is quite simple. In the
linear case one proves that two derivations using the same rules in
different orders are switch equivalent when the colimits of the two
derivations, seen as diagrams, are isomorphic and the isomorphism
properly commutes with the embeddings of the rules. When considering
left-linear derivations, this is no longer true. Intuitively this is
due to the fact that a rewriting step can merge parts of an
object thus making the colimit little informative.
This effect can be seen for derivations $\der{D}$ in
Fig.~\ref{fi:derD} and $\der{D}'$ in Fig.~\ref{fi:derD1}. They are not
abstraction equivalent but the colimit of both derivations is a single node with two self-loops, in a way that  all commutativity requirements are
trivially satisfied.  

In this section we show that, despite the failure of such
characterisation, actually much of the theory can be retained, taking
a different and sensibly more complex route for the proofs: globality
of independence holds in a weak form, conceptually linked to the fact
that fusions introduce a form of disjunctive causality, while
consistency of switching holds unchanged. Finally, a canonical form
for switching sequences can also be recovered.

We start by proving two lemmas dealing with derivations of length
$3$. Despite being technical, these lemmas are fundamental building blocks for obtaining our results.

\smallskip

\noindent
\begin{minipage}{.8\textwidth}
  \setlength\parindent{1.5em}
  The first lemma is at the core of globality for independence. It shows that if we have a derivation consisting of three rewriting steps, say $\dder{D}_0\cdot \dder{D}_1 \cdot \dder{D}_2$ where $\dder{D}_0$ and $\dder{D}_1$ are independent, then if we switch $\dder{D}_2$ to the first position, 
obtaining a derivation $\dder{D}_2'' \cdot  \dder{D}_0'\cdot\dder{D}_1'$, as in the diagram on the right where vertical lines represent permutations, the steps $\dder{D}_0'$ and $\dder{D}_1'$ are still independent.
\end{minipage}
\begin{minipage}{.18\textwidth}
\ \hspace{-14mm}
\begin{tikzpicture}[node/.style={draw, minimum size=8mm}]
  \begin{tikzcd}[row sep=2mm, column sep=1mm]
    \dder{D}_0 \arrow[perm,d]  & \dder{D}_1 \arrow[perm,dr] & \dder{D}_2 \arrow[perm,dl]\\
    \dder{D}_0 \arrow[perm,dr]  & \dder{D}_2' \arrow[perm,dl] & \dder{D}_1' \arrow[perm,d]\\
    \dder{D}_2''   & \dder{D}_0' & \dder{D}_1'
  \end{tikzcd}
\end{tikzpicture}
\end{minipage}

\smallskip

\noindent
\begin{minipage}{.8\textwidth}
  If instead $\dder{D}_1$ and $\dder{D}_2$ are
independent in $\dder{D}_0\cdot \dder{D}_1 \cdot \dder{D}_2$ and we switch $\dder{D}_0$ to the last position, obtaining a
derivation $\dder{D}_1'\cdot \dder{D}_2'\cdot  \dder{D}_0''$, as in the diagram on the right, the steps $\dder{D}_1'$ and $\dder{D}_2'$ could be no
longer independent. 
Intuitively, this is due to the fact that dealing with
left-linear rewriting systems introduces disjunctive forms of
causality. Hence, if in $\dder{D}_0\cdot \dder{D}_1\cdot \dder{D}_2$ we have that $\dder{D}_0$ and $\dder{D}_1$ are
disjunctive causes of $\dder{D}_2$, at least one of the two is needed to
enable the application of $\dder{D}_2$ and this explain why in
$\dder{D}_1'\cdot \dder{D}_2'\cdot \dder{D}_0''$ we have that $\dder{D}_1'$ and $\dder{D}_2'$ are no longer
independent. This is exemplified below.
\end{minipage}
\begin{minipage}{.18\textwidth}
\ \hspace{-14mm}
\begin{tikzpicture}[node/.style={draw, minimum size=8mm}]
     \begin{tikzcd}[row sep=2mm, column sep=1mm]
      \dder{D}_0 \arrow[perm,dr]  & \dder{D}_1 \arrow[perm,dl] & \dder{D}_2 \arrow[perm,d]\\
      \dder{D}_1' \arrow[perm,d]  & \dder{D}_0' \arrow[perm,dr] & \dder{D}_2 \arrow[perm,dl]\\
      \dder{D}_1'   & \dder{D}_2' & \dder{D}_0''
    \end{tikzcd}
  \end{tikzpicture} 
\end{minipage}

\begin{example}
  \label{ex:non-global}
  Consider the rewriting system in $\cat{Graph}$ consisting of the
  rules $\lambda_0$, $\lambda_1$, and $\lambda_2$ in
  Fig.~\ref{fi:rules-cas}. Consider the derivation $\der{F}$ in Fig.~\ref{fi:derF}. Note that rule
  $\lambda_2$ needs the black self-loop on node $12$ to be applied. This can be
  generated either by $\lambda_0$ or by $\lambda_1$ merging nodes $1$ and
  $2$, hence at least one of them must precede the application of
  $\lambda_2$. Indeed, in $\der{F}$ it is easily seen that the second
  and the third step applying $\lambda_1$ and $\lambda_2$ are
  independent. However, we could switch twice the application of
  $\lambda_0$, bringing it in the last position, obtaining a derivation
  $\der{F}'$ as depicted in Fig.~\ref{fi:derF1}, where the
  applications of $\lambda_1$ and $\lambda_2$ are no longer independent.
  \begin{figure}[t]
    \begin{center}
    %
    \begin{tikzpicture}[node distance=2mm, font=\small]
      \node (l) {
      \begin{tikzpicture}
        \node at (0,0.53) {}; 
        \node at (0,0) [node, label=below:$1$] (1) {};
        \node at (0.5,0) [node, label=below:$2$] (2) {} ;
      \pgfBox
      \end{tikzpicture} 
    };
    \node[font=\scriptsize, above] at (l.north) {$L_0$};
    \node [right=of l] (k) {
      \begin{tikzpicture}
        \node at (0,0.53) {}; 
        \node at (0,0) [node, label=below:$1$] (1) {};
        \node at (0.5,0) [node, label=below:$2$] (2) {} ;
      \pgfBox
      \end{tikzpicture} 
    };
    \node[font=\scriptsize, above] at (k.north) {$K_0$};
    \node[below] at (k.south) {$\lambda_0$};
    \node  [right=of k] (r) {
      \begin{tikzpicture}
        \node at (0,0.53) {}; 
        \node at (0,0) [node, label=below:$12$] (12) {};
        \pgfBox
      \end{tikzpicture}
    };
    \node[font=\scriptsize, above] at (r.north) {$R_0$};
    \path (k) edge[->] node[trans, above] {} (l);
    \path (k) edge[->] node[trans, above] {} (r);
    \end{tikzpicture}
  %
  \hspace{1cm} 
  %
  %
  \begin{tikzpicture}[node distance=2mm, font=\small]
      \node (l) {
      \begin{tikzpicture}
        \node at (0,0.53) {}; 
        \node at (0,0) [node, label=below:$1$] (1) {};
        \node at (0.5,0) [node, label=below:$2$] (2) {} ;
        \draw[coloredge] (1) to[out=20, in=160] (2);
      \pgfBox
      \end{tikzpicture} 
    };
    \node[font=\scriptsize, above] at (l.north) {$L_1$};
    \node [right=of l] (k) {
      \begin{tikzpicture}
        \node at (0,0.53) {}; 
        \node at (0,0) [node, label=below:$1$] (1) {};
        \node at (0.5,0) [node, label=below:$2$] (2) {} ;
        \draw[coloredge] (1) to[out=20, in=160] (2);                
      \pgfBox
      \end{tikzpicture} 
    };
    \node[font=\scriptsize, above] at (k.north) {$K_1$};
    \node[below] at (k.south) {$\lambda_1$};
    \node  [right=of k] (r) {
      \begin{tikzpicture}
        \node at (0,0.53) {}; 
        \node at (0,0) [node, label=below:$12$] (12) {}
        edge [in=65, out=35, colorloop] ();
        \pgfBox
      \end{tikzpicture}
    };
    \node[font=\scriptsize, above] at (r.north) {$R_1$};
    \path (k) edge[->] node[trans, above] {} (l);
    \path (k) edge[->] node[trans, above] {} (r);
    \end{tikzpicture}
  %
  \hspace{1cm} 
  %
    %
    \begin{tikzpicture}[node distance=2mm, font=\small]
      \node (l) {
      \begin{tikzpicture}
        \node at (0,0.53) {}; 
        \node at (0,0) [node, label=below:$1$] (1) {} 
        edge [in=145, out=115, loop] ();
      \pgfBox
      \end{tikzpicture} 
    };
    \node[font=\scriptsize, above] at (l.north) {$L_2$};
    \node [right=of l] (k) {
      \begin{tikzpicture}
        \node at (0,0.53) {}; 
        \node at (0,0) [node, label=below:$1$] (1) {};
        \pgfBox
      \end{tikzpicture} 
    };
    \node[font=\scriptsize, above] at (k.north) {$K_2$};
    \node[below] at (k.south) {$\lambda_2$};
    \node  [right=of k] (r) {
      \begin{tikzpicture}
        \node at (0,0.53) {}; 
        \node at (0,0) [node, label=below:$1$] (1) {};        
        \pgfBox
      \end{tikzpicture}
    };
    \node[font=\scriptsize, above] at (r.north) {$R_2$};
    \path (k) edge[->] node[trans, above] {} (l);
    \path (k) edge[->] node[trans, above] {} (r);
    \end{tikzpicture}
  %
  %
\end{center}    
    \caption{A new rewriting system in $\cat{Graph}$.}
    \label{fi:rules-cas}
  \end{figure}
  \begin{figure}[t]
      \begin{center}
    \begin{tikzpicture}[node distance=2mm, font=\small, baseline=(current bounding box.center)]      
      \node (L0) at (0,2) {
        \begin{tikzpicture}
          \node at (0,0.53) {}; 
          \node at (0,0) [node, label=below:$1$] (1) {};
          \node at (0.5,0) [node, label=below:$2$] (2) {} ;          
          \pgfBox
        \end{tikzpicture} 
      };
      \node [right=of L0] (K0)  {
        \begin{tikzpicture}
          \node at (0,0.53) {}; 
          \node at (0,0) [node, label=below:$1$] (1) {};
          \node at (0.5,0) [node, label=below:$2$] (2) {} ;
          \pgfBox
        \end{tikzpicture} 
      };
      \node [above] at (K0.north) {$\lambda_0$};
      \node [right=of K0](R0) {
        \begin{tikzpicture}
          \node at (0,0.53) {}; 
          \node at (0,0) [node, label=below:$12$] (12) {};
          \pgfBox
        \end{tikzpicture}
      };
      \path (K0) edge[->] node[trans, above] {} (L0);
      \path (K0) edge[->] node[trans, above] {} (R0);

      \node at (4.5,2) (L1) {
        \begin{tikzpicture}
          \node at (0,0.53) {}; 
          \node at (0,0) [node, label=below:$12$] (1) {} ;
          \node at (0.5,0) [node, label=below:$12$] (2) {} ;
          \draw[coloredge] (1) to[out=20, in=160] (2);
          \pgfBox
        \end{tikzpicture} 
      };
      \node [right=of L1] (K1) {
        \begin{tikzpicture}
          \node at (0,0.53) {}; 
          \node at (0,0) [node, label=below:$12$] (1) {} ;
          \node at (0.5,0) [node, label=below:$12$] (2) {} ;
          \draw[coloredge] (1) to[out=20, in=160] (2);
          \pgfBox
        \end{tikzpicture} 
      };
      \node [above] at (K1.north) {$\lambda_1$};
      \node [right=of K1] (R1) {
        \begin{tikzpicture}
          \node at (0,0.53) {}; 
          \node at (0,0) [node, label=below:$12$] (12) {}
          edge [in=65, out=35, colorloop] ();
          \pgfBox
        \end{tikzpicture}
      };
      \path (K1) edge[->] node[trans, above] {} (L1);
      \path (K1) edge[->] node[trans, above] {} (R1);

      \node at (9,2) (L2) {
        \begin{tikzpicture}
          \node at (0,0) [node, label=below:$12$] (2) {}
          edge [in=145, out=115, loop] ();        
          \pgfBox
        \end{tikzpicture} 
      };
      \node [right=of L2] (K2) {
        \begin{tikzpicture}
          \node at (0,0.53) {}; 
          \node at (0,0) [node, label=below:$12$] (12) {};
          \pgfBox
        \end{tikzpicture} 
      };
      \node [above] at (K2.north) {$\lambda_2$};
      \node [right=of K2] (R2) {
        \begin{tikzpicture}
          \node at (0,0.53) {}; 
          \node at (0,0) [node, label=below:$12$] (12) {} ;
          \pgfBox
        \end{tikzpicture}
      };
      \path (K2) edge[->] node[trans, above] {} (L2);
      \path (K2) edge[->] node[trans, above] {} (R2);

      \node at (0,0) (G0) {
        \begin{tikzpicture}
          \node at (0,0.53) {}; 
          \node at (0,0) [node, label=below:$1$] (1) {};
          \node at (0.5,0) [node, label=below:$2$] (2) {} ;
          \draw[coloredge] (1) to[out=30, in=150] (2);
          \draw[->] (2) to[out=210, in=330] (1);
          \pgfBox
        \end{tikzpicture} 
      };
      \node [right=of G0] (D0) {
        \begin{tikzpicture}
          \node at (0,0.53) {}; 
          \node at (0,0) [node, label=below:$1$] (1) {};
          \node at (0.5,0) [node, label=below:$2$] (2) {} ;
          \draw[coloredge] (1) to[out=30, in=150] (2);
          \draw[->] (2) to[out=210, in=330] (1);
          \pgfBox
        \end{tikzpicture} 
      };
      \node at (3.5,0) (G1) {
        \begin{tikzpicture}
          \node at (0,0.53) {}; 
          \node at (0,0) [node, label=below:$12$] (12) {}
          edge [in=65, out=35, colorloop] ()
          edge [in=145, out=115, loop] ();
          \pgfBox
        \end{tikzpicture} 
      };
      \path (D0) edge[->] node[trans, above] {} (G0);
      \path (D0) edge[->] node[trans, above] {} (G1);
      \path (L0) edge[->] node[trans, above] {} (G0);
      \path (K0) edge[->] node[trans, above] {} (D0);
      \path (R0) edge[->] node[trans, above] {} (G1);
      \node at (6.1,0) (D1) {
        \begin{tikzpicture}
          \node at (0,0.53) {}; 
          \node at (0,0) [node, label=below:$12$] (12) {}
          edge [in=65, out=35, colorloop] ()
          edge [in=145, out=115, loop] (); 
          \pgfBox
        \end{tikzpicture} 
      };
      \node at (8.5,0) (G2) {
        \begin{tikzpicture}
          \node at (0,0.53) {}; 
          \node at (0,0) [node, label=below:$12$] (12) {}
          edge [in=65, out=35, colorloop] ()
          edge [in=145, out=115, loop] ();
          \pgfBox
        \end{tikzpicture} 
      };

      \path (D1) edge[->] node[trans, above] {} (G1);
      \path (D1) edge[->] node[trans, above] {} (G2);
      \path (L1) edge[->] node[trans, above] {} (G1);
      \path (K1) edge[->] node[trans, above] {} (D1);
      \path (R1) edge[->] node[trans, above] {} (G2);
      
      \node at (10.2,0) (D2) {
        \begin{tikzpicture}
          \node at (0,0) [node, label=below:$12$] (12) {}
          edge [in=65, out=35, colorloop] ();
          \pgfBox
        \end{tikzpicture} 
      };
      \node [right=of D2] (G3) {
        \begin{tikzpicture}
          \node at (0,0.53) {};                    
          \node at (0,0) [node, label=below:$12$] (12) {}
          edge [in=65, out=35, colorloop] ();
          \pgfBox
        \end{tikzpicture} 
      };
      \node[font=\scriptsize, below] at (G0.south) {$G_0$};
      \node[font=\scriptsize, below] at (D0.south) {$D_0$};      
      \node[font=\scriptsize, below] at (G1.south) {$G_1$};
      \node[font=\scriptsize, below] at (D1.south) {$D_1$};      
      \node[font=\scriptsize, below] at (G2.south) {$G_2$};
      \node[font=\scriptsize, below] at (D2.south) {$D_2$};      
      \node[font=\scriptsize, below] at (G3.south) {$G_3$};      
      \path (D2) edge[->] node[trans, above] {} (G2);
      \path (D2) edge[->] node[trans, above] {} (G3);
      \path (L2) edge[->] node[trans, above] {} (G2);
      \path (K2) edge[->] node[trans, above] {} (D2);
      \path (R2) edge[->] node[trans, above] {} (G3);
    \end{tikzpicture}
\end{center}
    \caption{The derivation $\der{F}$.} 
    \label{fi:derF}
  \end{figure}
  \begin{figure}[t]
      \begin{center}
    \begin{tikzpicture}[node distance=2mm, font=\small, baseline=(current bounding box.center)]      
      \node (L1) at (0,2) {
        \begin{tikzpicture}
          \node at (0,0.53) {}; 
          \node at (0,0) [node, label=below:$12$] (1) {} ;
          \node at (0.5,0) [node, label=below:$12$] (2) {} ;
          \draw[coloredge] (1) to[out=20, in=160] (2);
          \pgfBox
        \end{tikzpicture} 
      };
      \node [right=of L1] (K1) {
        \begin{tikzpicture}
          \node at (0,0.53) {}; 
          \node at (0,0) [node, label=below:$12$] (1) {} ;
          \node at (0.5,0) [node, label=below:$12$] (2) {} ;
          \draw[coloredge] (1) to[out=20, in=160] (2);
          \pgfBox
        \end{tikzpicture}         
      };
      \node [above] at (K1.north) {$\lambda_1$};
      \node [right=of K1](R1) {
        \begin{tikzpicture}
          \node at (0,0.53) {}; 
          \node at (0,0) [node, label=below:$12$] (12) {}
          edge [in=65, out=35, colorloop] ();
          \pgfBox
        \end{tikzpicture}
      };
      \path (K1) edge[->] node[trans, above] {} (L1);
      \path (K1) edge[->] node[trans, above] {} (R1);

      \node at (5,2) (L2) {
        \begin{tikzpicture}
          \node at (0,0) [node, label=below:$12$] (2) {}
          edge [in=145, out=115, loop] ();        
          \pgfBox
        \end{tikzpicture}
      };
      \node [right=of L2] (K2) {
        \begin{tikzpicture}
          \node at (0,0.53) {}; 
          \node at (0,0) [node, label=below:$12$] (12) {};
          \pgfBox
        \end{tikzpicture} 
      };
      \node [above] at (K2.north) {$\lambda_2$};
      \node [right=of K2] (R2) {
        \begin{tikzpicture}
          \node at (0,0.53) {}; 
          \node at (0,0) [node, label=below:$12$] (12) {};
          \pgfBox
        \end{tikzpicture}
      };
      \path (K2) edge[->] node[trans, above] {} (L2);
      \path (K2) edge[->] node[trans, above] {} (R2);

      \node at (9,2) (L0) {
        \begin{tikzpicture}
          \node at (0,0.53) {}; 
          \node at (0,0) [node, label=below:$12$] (1) {};
          \node at (0.5,0) [node, label=below:$12$] (2) {} ;
          \pgfBox
        \end{tikzpicture} 
      };
      \node [right=of L0] (K0) {
        \begin{tikzpicture}
          \node at (0,0.53) {}; 
          \node at (0,0) [node, label=below:$12$] (1) {};
          \node at (0.5,0) [node, label=below:$12$] (2) {} ;
          \pgfBox
        \end{tikzpicture} 
      };
      \node [above] at (K0.north) {$\lambda_0$};
      \node [right=of K0] (R0) {
        \begin{tikzpicture}
          \node at (0,0.53) {}; 
          \node at (0,0) [node, label=below:$12$] (12) {} ;
          \pgfBox
        \end{tikzpicture}
      };
      \path (K0) edge[->] node[trans, above] {} (L0);
      \path (K0) edge[->] node[trans, above] {} (R0);

      \node at (0,0) (G0) {
        \begin{tikzpicture}
          \node at (0,0.53) {}; 
          \node at (0,0) [node, label=below:$1$] (1) {};
          \node at (0.5,0) [node, label=below:$2$] (2) {} ;
          \draw[coloredge] (1) to[out=30, in=150] (2);
          \draw[->] (2) to[out=210, in=330] (1);
          \pgfBox
        \end{tikzpicture}  
      };
      \node at (1.6,0) (D0) {
         \begin{tikzpicture}
          \node at (0,0.53) {}; 
          \node at (0,0) [node, label=below:$1$] (1) {};
          \node at (0.5,0) [node, label=below:$2$] (2) {} ;
          \draw[coloredge] (1) to[out=30, in=150] (2);
          \draw[->] (2) to[out=210, in=330] (1);
          \pgfBox
        \end{tikzpicture}  
      };
      \node at (4,0) (G1) {
        \begin{tikzpicture}
          \node at (0,0.53) {}; 
          \node at (0,0) [node, label=below:$12$] (12) {}
          edge [in=65, out=35, colorloop] ()
          edge [in=145, out=115, loop] (); 
          \pgfBox
        \end{tikzpicture} 
      };
      \path (D0) edge[->] node[trans, above] {} (G0);
      \path (D0) edge[->] node[trans, above] {} (G1);
      \path (L1) edge[->] node[trans, above] {} (G0);
      \path (K1) edge[->] node[trans, above] {} (D0);
      \path (R1) edge[->] node[trans, above] {} (G1);

      \node at (6.2,0) (D1) {
        \begin{tikzpicture}
          \node at (0,0.53) {};                    
          \node at (0,0) [node, label=below:$12$] (12) {}
          edge [in=65, out=35, colorloop] ();
          \pgfBox
        \end{tikzpicture}
      };
      \node at (8.2,0) (G2) {
        \begin{tikzpicture}
          \node at (0,0.53) {};                    
          \node at (0,0) [node, label=below:$12$] (12) {}
          edge [in=65, out=35, colorloop] ();
          \pgfBox
        \end{tikzpicture}
      };

      \path (D1) edge[->] node[trans, above] {} (G1);
      \path (D1) edge[->] node[trans, above] {} (G2);
      \path (L2) edge[->] node[trans, above] {} (G1);
      \path (K2) edge[->] node[trans, above] {} (D1);
      \path (R2) edge[->] node[trans, above] {} (G2);
      
      \node at (10.6,0) (D2) {
        \begin{tikzpicture}
          \node at (0,0.53) {};                    
          \node at (0,0) [node, label=below:$12$] (12) {}
          edge [in=65, out=35, colorloop] ();
          \pgfBox
        \end{tikzpicture}
      };
      \node [right=of D2] (G3) {
        \begin{tikzpicture}
          \node at (0,0.53) {};                    
          \node at (0,0) [node, label=below:$12$] (12) {}
          edge [in=65, out=35, colorloop] ();
          \pgfBox
        \end{tikzpicture}
      };
      \node[font=\scriptsize, below] at (G0.south) {$G_0$};
      \node[font=\scriptsize, below] at (D0.south) {$D_0'$};      
      \node[font=\scriptsize, below] at (G1.south) {$G_1'$};
      \node[font=\scriptsize, below] at (D1.south) {$D_1'$};      
      \node[font=\scriptsize, below] at (G2.south) {$G_2'$};
      \node[font=\scriptsize, below] at (D2.south) {$D_2'$};      
      \node[font=\scriptsize, below] at (G3.south) {$G_3$};      
      \path (D2) edge[->] node[trans, above] {} (G2);
      \path (D2) edge[->] node[trans, above] {} (G3);
      \path (L0) edge[->] node[trans, above] {} (G2);
      \path (K0) edge[->] node[trans, above] {} (D2);
      \path (R0) edge[->] node[trans, above] {} (G3);
    \end{tikzpicture}
\end{center}
    \caption{The derivation $\der{F}'$.} 
    \label{fi:derF1}
  \end{figure}
\end{example}
\begin{restatable}[Globality of independence]{lemma}{lemIndepGlobalLeft}
  \label{lem:indep-global-left}
  Let $\X$ be an $\mathcal{M}$-adhesive category and $(\X,\R)$ a
  well-switching rewriting system.  Let
  $\dder{D}_0\cdot \dder{D}_1 \cdot \dder{D}_2$ be a three-steps
  derivation.
  \begin{enumerate}
  \item
    \label{lem:indep-global-left:1}
    If $\dder{D}_0$ and $\dder{D}_1$ are switchable and there is a
    switching sequence
    $\dder{D}_0\cdot \dder{D}_1 \cdot \dder{D}_2 \shift{(1,2)}
    \dder{D}_0 \cdot \dder{D}_2' \cdot \dder{D}_1' \shift{(0,1)}
    \dder{D}_2'' \cdot \dder{D}_0' \cdot \dder{D}_1'$ then
    in the last derivation $\dder{D}_0'$ and $\dder{D}_1'$ are switchable;
    
  \item
    \label{lem:indep-global-left:2}
    Suppose that $\dder{D}_1$ and $\dder{D}_2$ are switchable, hence $\dder{D}_0\cdot \dder{D}_1 \cdot \dder{D}_2 \shift{(1,2)}
    \dder{D}_0 \cdot \dder{D}_2' \cdot \dder{D}_1'$ and $\dder{D}_0$ and $\dder{D}_2'$ are switchable.  If there
    is a switching sequence
    $\dder{D}_0\cdot \dder{D}_1 \cdot \dder{D}_2 \shift{(0,1)}
    \dder{D}_1' \cdot \dder{D}_0' \cdot \dder{D}_2 \shift{(1,2)}
    \dder{D}_1' \cdot \dder{D}_2' \cdot \dder{D}_0''$ then in the last derivation $\dder{D}_1'$ and $\dder{D}_2'$ are switchable.
  \end{enumerate}
\end{restatable}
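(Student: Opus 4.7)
The plan is to prove each item by explicitly constructing an independence pair between the two designated direct derivations in the final sequence, and then invoking the strong enforcing hypothesis together with the Local Church-Rosser Theorem (\Cref{pr:church}) to conclude switchability. Since the rewriting system is well-switching, every relevant independence pair is unique, and this uniqueness will be essential for identifying arrows obtained along distinct diagrammatic routes.

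For item (\ref{lem:indep-global-left:1}), write $\rho_i=(l_i,r_i)$ for the rule applied by $\dder{D}_i$ (and shared by $\dder{D}_i'$ and $\dder{D}_i''$), and denote the pushout complements of $\dder{D}_0'$ and $\dder{D}_1'$ in the final derivation by $D_0^\sharp$ and $D_1^\sharp$. The hypotheses supply three independence pairs: a pair $(i_0, i_1)$ between $\dder{D}_0$ and $\dder{D}_1$, the pair realising the first switch between $\dder{D}_1$ and $\dder{D}_2$, and the pair realising the second switch between $\dder{D}_0$ and $\dder{D}_2'$. The first step is to transport $i_0\colon R_0 \to D_1$ along the switch of $\dder{D}_1$ and $\dder{D}_2$ to produce an arrow $R_0 \to D_1^\sharp$; dually, transport $i_1\colon L_1 \to D_0$ along the switch of $\dder{D}_0$ and $\dder{D}_2'$ to produce $L_1 \to D_0^\sharp$. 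The commutativity equations listed in \Cref{def:switch} make these composites well-defined, and a diagram chase shows that the resulting pair satisfies the definition of an independence pair between $\dder{D}_0'$ and $\dder{D}_1'$. Strong enforcing then yields switchability via \Cref{pr:church}.

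Item (\ref{lem:indep-global-left:2}) is proved by a structurally analogous transport, starting from the independence pair between $\dder{D}_1$ and $\dder{D}_2$ and pushing its two components through the switch of $\dder{D}_0, \dder{D}_1$ and then through the switch of $\dder{D}_0', \dder{D}_2$, so as to obtain candidate arrows $R_1 \to D_{\dder{D}_2'}$ and $L_2 \to D_{\dder{D}_1'}$. Here the asymmetry with item (\ref{lem:indep-global-left:1}) becomes relevant: the candidate arrows are obtained as composites that can be formed along two distinct routes, so one must verify that the two composites coincide. This is precisely the point where uniqueness of independence pairs in a well-switching system is invoked to identify them and close the required commutativity for an independence pair between $\dder{D}_1'$ and $\dder{D}_2'$.

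The main obstacle in both items is the combinatorial bookkeeping of arrows through two consecutive switches: each elementary transport is a routine application of universal properties and the equations of \Cref{def:switch}, but stitching the pieces into a valid independence pair requires chasing several squares and triangles simultaneously. The delicate point is concentrated in item (\ref{lem:indep-global-left:2}), where the well-switching hypothesis is what tames the disjunctive causality induced by fusions and rules out the pathological alternatives illustrated by \Cref{ex:non-global}.
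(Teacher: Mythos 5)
Your proposal follows essentially the same route as the paper's proof: both items are handled by explicitly constructing the required independence pair, transporting the components of the given pairs through the intermediate switches via the universal property of the pullback objects arising in the strong-independence construction, and then concluding switchability from the strong enforcing hypothesis together with \Cref{pr:church}. You also correctly isolate the crux of item (2) --- identifying, via uniqueness of independence pairs in a well-switching system, the transported arrow with the one supplied by the hypothesis --- which is precisely where the paper invokes well-switching.
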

  
The second lemma captures the essence of the consistency of
switchings. If in a derivation of three
rewriting steps, say $\dder{D}_0\cdot \dder{D}_1\cdot  \dder{D}_2$, we invert them leading to a
sequence reordered as $\dder{D}_2\cdot  \dder{D}_1\cdot  \dder{D}_0$, we obtain the same derivation,
independently from the order of switchings.

\begin{restatable}[Consistency of switchings]{lemma}{lemSwitchConfluence}
	\label{lem:switch-confluence}
	Let $\X$ be an $\mathcal{M}$-adhesive category and $(\X,\R)$ a well-switching rewriting system.
	Consider a derivation $\der{D}=\{\dder{D}_i\}_{i=0}^2$ and suppose that we have the following two switching sequences
	\[\der{D} \shift{(0,1)} \der{D}' \shift{(1,2)} \der{D}'' \shift{(0,1)}\der{D}'''\\
          \qquad \der{D}\shift{(1,2)}\der{E}'\shift{(0,1)} \der{E}'' \shift{(1,2)}\der{E}''' \]
Then $\der{D}'''$ and $\der{E}'''$ are abstraction equivalent. 
\end{restatable}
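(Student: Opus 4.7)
\medskip

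\noindent\textbf{Proof plan for \Cref{lem:switch-confluence}.}
The plan is to exploit the twin pillars of a well-switching system, namely the uniqueness of the independence pair between two consecutive steps (\Cref{rem:uni} upgraded by \emph{well-switching}) and the uniqueness of the switch along a fixed independence pair up to abstraction equivalence (\Cref{thm:switch_uni}), in order to perform a direct Yang--Baxter style coherence argument. Both switching sequences realise the full reversal permutation $(0\,1\,2)\mapsto(2\,1\,0)$, so $\der{D}'''$ and $\der{E}'''$ apply the three rules in the same order $(\rho_2,\rho_1,\rho_0)$, share source $G_0$ and target $G_3$, and by \Cref{lem:indep-global-left} every switch required along the two paths is legitimate. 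Hence both derivations are well-defined and the only thing left to prove is the existence of an abstraction equivalence between them.

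First I would fix notation by explicitly naming the objects and arrows in the two switching chains, obtaining, for each of them, three consecutive DPO diagrams together with the independence pairs mediating each transposition. Applying \Cref{pr:church} (strong enforcement) at every step produces the pullback object $P$ and the canonical arrow $u_i$ of \Cref{def:filler}, which serve as the ``canonical factorisation'' of the match of each rule through the pushout complement of the step being moved past. Because the rewriting system is well-switching, \emph{both} the independence pair used at each elementary switch and the resulting pushout complement are forced uniquely, so each intermediate derivation $\der{D}', \der{D}''$ and $\der{E}', \der{E}''$ is determined, up to abstraction equivalence, by the derivation obtained at the previous step together with the chosen adjacent transposition.

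The core of the proof is then the component-wise comparison of the matches and co-matches of $\der{D}'''$ and $\der{E}'''$. For the leading $\rho_2$-step, the match of $L_2$ into $G_0$ in $\der{D}'''$ arises by first switching $\dder{D}_1$ back along the independence pair $(\dder{D}_1,\dder{D}_2)$ shifting it through $\dder{D}_0$ as well, whereas in $\der{E}'''$ it arises by the opposite order. By uniqueness of the independence pairs and of pushout complements, both routes exhibit the same composite arrow from $L_2$ into $G_0$, so the resulting match is the same up to canonical iso; likewise for the $\rho_0$-step, where the symmetric argument applies with the co-match of $R_0$ into $G_3$. For the middle $\rho_1$-step one exploits that the interface object $D$ of that direct derivation is the pullback in a universal cone that, via \Cref{def:filler}, does not depend on the order in which the two surrounding switches were performed. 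Combining these three componentwise isomorphisms yields a family of isomorphisms on the intermediate $G$- and $D$-objects of $\der{D}'''$ and $\der{E}'''$.

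The hard part will be verifying that this family constitutes an abstraction equivalence, i.e.~that it commutes with every leg of each DPO square. This requires chasing four pushout squares per step and carefully tracking how the mediating arrows $u_i$, $p_j$ and $i_j$ from each strong independence pair propagate across the two switching paths. The main obstacle is that the characterisation of switch equivalence via consistent permutations, which in the linear case trivialises exactly this coherence, is unavailable in the left-linear setting; so each commutativity must be derived directly from the uniqueness of pushouts and pullbacks together with the uniqueness of independence pairs granted by well-switching. Once those commutativities are established, the family of isomorphisms is by definition an abstraction equivalence, and hence $\der{D}'''\equiv_a\der{E}'''$, as required.
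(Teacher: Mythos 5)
Your strategy is the paper's: write out both switching sequences explicitly, use the well-switching hypothesis to pin down every independence pair, characterise the transported pairs through the pullback objects attached to strong independence pairs, and assemble the abstraction equivalence by comparing matches and invoking uniqueness of DPO results (\Cref{prop:unique}, \Cref{thm:switch_uni}). All the right tools are named.

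The gap is that the one concrete claim you make about the ``hard part'' is not a valid inference as stated, and it is precisely where the paper's long diagram chase lives. Uniqueness of independence pairs (\Cref{rem:uni} strengthened by well-switching) compares two pairs between the \emph{same} two consecutive direct derivations; it does not by itself let you conclude that the match of $L_2$ in $G_0$ obtained along $\der{D}\shift{(0,1)}\der{D}'\shift{(1,2)}\der{D}''$ equals the one obtained along $\der{D}\shift{(1,2)}\der{E}'\shift{(0,1)}\der{E}''$, because the two candidate independence pairs live on \emph{different} intermediate derivations. The paper bridges this by re-using the explicit construction inside the proof of \Cref{lem:indep-global-left}: the transported independence pair is realised concretely as a mediating arrow into a pullback built from the strong independence pairs, well-switching then forces the abstract pair to coincide with that concrete one, and the equality of the two leading matches (the identity $f'_0\circ c_1=f_0\circ z_1$ in the paper's notation) falls out of a chase through the pullback projections. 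Moreover, the three componentwise comparisons cannot be carried out independently, as your description of the middle $\rho_1$-step suggests: the paper builds the abstraction equivalence left to right, obtaining $\phi_0,\psi_1$ from the equality of the initial matches via \Cref{prop:unique} and then proving that the next match of one derivation, \emph{composed with} $\psi_1$, equals the corresponding match of the other, so each comparison is made only modulo the isomorphisms already constructed. Neither point is a wrong turn, but without them your outline does not yet contain the content of the proof.
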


The lemmas above open the way to the key result of this section, which establishes a canonical form for
switching derivations.

The first result shows that when equating two derivation sequences by performing a series of switchings, we can limit ourselves to switchings that invert the order of steps that in the target derivation has to be reversed, i.e.~we can limit ourselves to applying inversions.

\begin{restatable}[No need of useless switches]{theorem}{thNoNeed}
  \label{th:no-need}
  Let $\der{D}$ and $\der{D}'$ be derivation sequences. If $\der{D} \shifteq \der{D}'$ then there is a switching sequence 
    $\der{D} \equiv_a \der{D}_0 \shift{\nu_1} \der{D}_1 \shift{\nu_2} \ldots
    \shift{\nu_n} \der{D}_n \equiv_a \der{D}'$
  consisting only of inversions.
\end{restatable}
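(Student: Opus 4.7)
The plan is to proceed by induction on the length $m$ of the given switching sequence
\[\der{D} \equiv_a \der{D}_0 \shift{\nu_1} \der{D}_1 \shift{\nu_2} \ldots \shift{\nu_m} \der{D}_m \equiv_a \der{D}',\]
showing that any sequence that does not already consist of inversions for $\sigma = \nu_{1,m}$ can be rewritten, while preserving the abstraction-equivalence classes of its endpoints, into a shorter sequence.

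The reduction relies on three types of moves on the word $\nu_1 \cdots \nu_m$, analogous to the Coxeter relations for the symmetric group. \emph{Commutation}: when $\nu_k = \transp{i}$ and $\nu_{k+1} = \transp{j}$ with $|i-j| \geq 2$, the two switches act on disjoint triples of consecutive steps and commute trivially. \emph{Braid}: when $\nu_k$ and $\nu_{k+1}$ share a position, the pattern $\nu_k \nu_{k+1} \nu_k$ can be replaced by $\nu_{k+1} \nu_k \nu_{k+1}$, the two resulting length-three sub-derivations being abstraction-equivalent by Lemma~\ref{lem:switch-confluence} (consistency of switchings); feasibility of the intermediate switches is supplied by Lemma~\ref{lem:indep-global-left} (globality of independence). \emph{Cancellation}: when $\nu_k = \nu_{k+1}$, performing the same switch twice returns the derivation to (a derivation abstraction-equivalent to) its starting point, by uniqueness of switches along a fixed independence pair (Proposition~\ref{thm:switch_uni}).

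Once these three moves are established as valid operations on switching sequences, a standard Coxeter-theoretic argument applies: any non-reduced word in adjacent transpositions admits, after a finite number of commutation and braid moves, an occurrence of two identical adjacent generators, and the corresponding cancellation strictly shortens the sequence. Iterating until no reduction is possible yields a reduced sequence, which by direct combinatorial inspection consists only of inversions for $\sigma$, completing the induction.

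The principal obstacle I expect is handling the asymmetry in the globality-of-independence lemma: part~(1) of Lemma~\ref{lem:indep-global-left}, concerning pushing a step to an earlier position, holds unconditionally, whereas part~(2), concerning pushing a step to a later one, requires an additional switchability hypothesis, reflecting the disjunctive causality introduced by fusions. The reduction must therefore be orchestrated so that the appropriate variant of the lemma applies at each braid move; a natural strategy is to prefer moves that bring later steps forward, carefully tracking which feasibility hypotheses are needed throughout the induction, and only invoking part~(2) when the required extra switchability has been explicitly produced by a previous application of part~(1).
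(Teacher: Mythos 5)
Your three moves---commutation, braid (via \Cref{lem:switch-confluence} for the equality of outcomes and \Cref{lem:indep-global-left} for feasibility), and cancellation (via uniqueness of switches, \Cref{thm:switch_uni})---are exactly the ingredients of the paper's proof, and your top-level structure (induct on the length of the switching sequence, shortening any non-inversion-only sequence) matches \Cref{le:switchD} and the proof of \Cref{th:no-need}. You also correctly locate the crux in the asymmetry of globality of independence. The gap is the sentence ``a standard Coxeter-theoretic argument applies.'' In the symmetric group, turning a non-reduced word into one with two equal adjacent generators uses braid moves freely in both directions; here the braid move is \emph{not} freely available, because part~(2) of \Cref{lem:indep-global-left} carries an extra switchability hypothesis that must be established before the move can be performed. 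So one cannot first run the word-problem algorithm on the underlying permutation word and then check feasibility afterwards: nothing guarantees that the reduction it outputs only ever demands braid moves whose hypotheses happen to hold. Your closing remark that one should ``prefer moves that bring later steps forward'' and invoke part~(2) only after part~(1) has supplied the needed independence names the right strategy, but it is precisely the part that has to be proved, and it is where all the work lies.

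The paper closes this gap with a specific orchestration spread over Lemmas~\ref{le:switchA}--\ref{le:switchD}. It isolates a maximal inversion-only suffix of the switching sequence; by maximality, the switch immediately preceding that suffix must be undone within it, which pins down a cancellable pair. \Cref{le:switchA} shows that the inversion of \emph{largest} index can always be front-loaded (this particular choice is what guarantees the hypotheses of part~(1) of \Cref{lem:indep-global-left} at every step, cf.\ \Cref{rem:switchA}); \Cref{le:switchB} extends this to any inversion whose two steps are already sequentially independent, and its case $k=i+1$ is the one place part~(2) is invoked---only after part~(1) has been used to establish the required independence. \Cref{le:switchC} then cancels the adjacent pair, shortening the sequence by two, and \Cref{le:switchD} assembles these into the reduction step. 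To complete your proposal you would essentially have to reprove these four lemmas; the plan is the right one, but the combinatorial core is missing.
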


\full{[Proof~\ref{thNoNeed-proof}]}

While inversions can be
performed in any order in the case of linear rewriting systems, this is no longer true for left-linear
rewriting systems due to the fact that, as already observed, globality of independence holds in a weak form. 
For instance, if we get back to Example~\ref{ex:non-global} and consider derivation $\der{F}$ in Fig.~\ref{fi:derF}, 
the last two steps applying rules $\lambda_1$ and $\lambda_2$ can be switched, thus leading to a derivation, 
call it $\der{F}''$,  applying $\lambda_0, \lambda_2, \lambda_1$. This $\der{F}''$ is switch equivalent to $\der{F}'$ in Fig.~\ref{fi:derF1}, 
hence starting from $\der{F}''$ we can obtain $\der{F}'$ via a switching sequence, but we cannot start 
by switching $\lambda_0$ and $\lambda_2$, even if this is an inversion.

\newpage
Still, we can identify a
canonical form for switching derivations: at each step one can apply a switch to the inversion with largest index. 

\begin{restatable}[Canonical form]{corollary}{coCanonical}
  \label{co:canonical}
  Let $\der{D}$ and $\der{D}'$ be derivation sequences. If
  $\der{D} \shifteq[\sigma]\der{D}'$ then there is a switching sequence
  \begin{center}
    $\der{D} \equiv_a \der{D}_0 \shift{\nu_1} \der{D}_1 \shift{\nu_2} \ldots
    \shift{\nu_n} \der{D}_n \equiv_a\der{D}'$
  \end{center}
  where for $i \in \interval{n}$ we have $\nu_i  = \transp{k}$ with 
  $k = \max \{ j \mid \transp{j} \in \inv{\nu_{i,n}}\}$.
\end{restatable}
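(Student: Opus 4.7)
The plan is to refine Theorem~\ref{th:no-need} by reordering the resulting inversion-only sequence into the greedy canonical one. First I would apply Theorem~\ref{th:no-need} to obtain a switching sequence
\[
\der{D} \equiv_a \der{D}_0 \shift{\nu_1}\der{D}_1 \shift{\nu_2} \cdots \shift{\nu_n}\der{D}_n \equiv_a \der{D}'
\]
consisting only of inversions of $\sigma = \nu_{1,n}$. Since each $\nu_i$ is an inversion, the permutation $\nu_{i+1,n}$ has exactly one fewer inversion than $\nu_{i,n}$, so $n = |\inv{\sigma}|$ and the word $\nu_1, \ldots, \nu_n$ is a reduced decomposition of $\sigma$ as a product of adjacent transpositions.

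The corollary is then proved by induction on $n$. Let $k^* = \max\{j \mid \transp{j} \in \inv{\sigma}\}$; the inductive step reduces to exhibiting an inversion-only switching sequence of length $n$ from $\der{D}$ to $\der{D}'$ whose first transposition is $\transp{k^*}$. Once such a sequence is obtained, its tail is an inversion-only switching sequence of length $n-1$ for the permutation $\sigma \circ \transp{k^*}$, to which the inductive hypothesis can be applied, giving the canonical extension.

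To relocate $\transp{k^*}$ to the first position, I would appeal to Matsumoto's theorem on the symmetric group: any two reduced decompositions of $\sigma$ can be obtained one from the other by a finite sequence of braid moves $\transp{j}\transp{j+1}\transp{j} \leftrightarrow \transp{j+1}\transp{j}\transp{j+1}$ and commutation moves $\transp{j}\transp{j'} \leftrightarrow \transp{j'}\transp{j}$ with $|j-j'|\geq 2$. Each such move must then be realised at the level of switching sequences of derivations: commutations between distant transpositions correspond to reorderings of switches acting on disjoint triples of steps, whose preservation of switchability and abstraction equivalence is exactly the content of Lemma~\ref{lem:indep-global-left}; braid moves match one-to-one the equation established by Lemma~\ref{lem:switch-confluence}, applied to the relevant block of three consecutive steps.

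The principal obstacle is the careful bookkeeping required to check, at each local rearrangement, that the resulting sequence is still an inversion-only switching sequence and that it still coincides with $\der{D}$ and $\der{D}'$ up to abstraction equivalence at the endpoints. The first property is a purely combinatorial consequence of the fact that braid and commutation moves preserve reduced words; the delicate part is ensuring that every intermediate rewriting really corresponds to a valid switch in the sense of Definition~\ref{def:switch}, which amounts to instantiating Lemmas~\ref{lem:indep-global-left} and~\ref{lem:switch-confluence} on the exact triple of derivations produced by the previous rearrangement and threading the resulting abstraction equivalences consistently through the induction.
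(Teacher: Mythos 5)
Your opening moves coincide with the paper's: invoke Theorem~\ref{th:no-need} to get an inversion-only sequence, then peel off the largest-index inversion by induction, reducing everything to the claim that $\transp{k^*}$ can be brought to the front. The divergence, and the gap, is in how you justify that relocation. You appeal to Matsumoto's theorem to connect the given reduced word of $\sigma$ to the greedy one by braid and commutation moves, and then claim each move can be ``realised'' on derivations. But in a left-linear system the set of reduced words of $\sigma$ that are realisable as switching sequences is a \emph{proper} subset of all reduced words, and a single braid move can leave it: the paper's own discussion just before the corollary (derivations $\der{F}''$ and $\der{F}'$ built from Example~\ref{ex:non-global}) exhibits two reduced words of the same permutation related by one braid move, of which exactly one is realisable --- one cannot begin by switching $\lambda_0$ and $\lambda_2$ even though that transposition is an inversion. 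Matsumoto gives you \emph{some} chain of moves, with no control over which intermediate words it visits, so nothing prevents the chain from passing through unrealisable words; and the target greedy word is precisely the one whose realisability you are trying to establish, so you cannot assume it either.

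The deferral of this to ``bookkeeping'' also misreads what the two lemmas provide. Lemma~\ref{lem:switch-confluence} is a confluence statement: it assumes that \emph{both} orderings $(0,1)(1,2)(0,1)$ and $(1,2)(0,1)(1,2)$ exist as switching sequences and concludes only that their endpoints agree; it does not supply the existence of the braided counterpart. Existence is governed by Lemma~\ref{lem:indep-global-left}, which is asymmetric: item~(1) holds unconditionally, but item~(2) needs the extra hypothesis that the relevant pair was already switchable in the original derivation --- this asymmetry is exactly why arbitrary braid moves fail. (Distant commutations are indeed harmless, but they are trivial and are not what Lemma~\ref{lem:indep-global-left} is about; that lemma concerns overlapping transpositions $\transp{i}$ and $\transp{i+1}$.) The paper avoids all of this by proving a bespoke rearrangement result, Lemma~\ref{le:switchA} (together with Remark~\ref{rem:switchA}), whose induction is organised so that only the valid direction of globality is ever invoked when moving the chosen transposition to the front; the corollary then follows immediately. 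To repair your argument you would have to show that a Matsumoto chain can be chosen whose every intermediate word is realisable and which terminates at the greedy word --- and proving that is essentially re-proving Lemma~\ref{le:switchA}.
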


\section{Conclusions and further work}
\label{conclusions}

We performed an in-depth investigation of the notion of independence between
rewriting steps in the setting of left-linear rewriting systems
over $\mathcal{M}$-adhesive categories, which encompasses most of the
structures commonly used in analysing and modelling applications.

We showed that the canonical notion of independence adopted for linear
systems does not enjoy some properties that are essential for
developing a sensible theory of concurrency, notably a Church-Rosser
theorem, and we identified a subclass of left-linear rewriting systems,
which we call well-switching, as an appropriate setting where 
many key results can be re-established. Specifically, a Church-Rosser 
theorem, i.e.~the possibility of performing independent steps in any order, is
fully recovered. Moreover, the switch construction and correspondingly
the switch equivalence, at the core of a theory of concurrency for
rewriting systems, enjoy a number of fundamental properties: a weak form of globality of independence, 
consistency of switching and
the existence of a canonical form for switch equivalence proofs.

The class of well-switching $\mathcal{M}$-adhesive rewriting systems is large. Generalising a result in~\cite{baldan2011adhesivity},
we showed that one of its defining properties (the fact that sequential independence implies switchability) holds 
for a general class of $\mathcal{M}$-adhesive categories that include all presheaves over $\cat{Set}$. Moreover, the second property, uniqueness 
of switching, is ensured when rewriting possibly hierarchical graphical structures, as long as elements belonging to the sorts 
of the roots  are not merged (e.g., one can consider rewriting systems over directed graphs where only nodes are merged, as it 
happens in most of the approaches to the modelling of calculi and biological systems that we cited in the introduction, 
as well as in string diagrams).

In the DPO approach to rewriting, the notion of sequential
independence, identifying consecutive rewriting steps that 
can be performed in reverse order is typically
closely linked to that of \emph{parallel
	independence}~\cite{ehrig2014adhesive}. Parallel independence
relates two rewriting steps starting from
the same object, which can be applied in
any order producing, up to isomorphism, the same resulting
object. Parallel independent steps, when applied in sequence, lead to
sequential independent steps and, conversely, from two sequential
independent steps, as a byproduct of the switch construction, one can
get two parallel independent steps.  

Under mild additional conditions, one can have a notion of \emph{parallel rule} in a way that two
parallel independent steps can be also combined in a single rewriting step.
Even if this is not discussed explicitly in the paper, our theory can be easily accommodated to encompass a notion of parallel independence and parallel rule application, enjoying the properties outlined above. To ensure this, however, one must require that the underlying category $\X$ has binary coproducts and that the class $\mathcal{M}$ is closed under them.

The results in this paper open the way to the development of a
concurrent semantics for left-linear rewriting systems over $\mathcal{M}$-adhesive
categories. In~\cite{baldan2017domains} the authors argue that for
computational systems that allow one to express merging or fusions,
the right semantical models are weak prime domains, a generalisation
of prime domains, a staple in concurrency theory, and connected event
structures, their event-based counterpart. The mentioned paper
discusses only the case of graph rewriting and does not focus into the
problems induced by the occurring of ``merges'' to the theory of rewriting (the systems
considered there are implicitly assumed to be well-switching). 
We plan to consolidate the statement that weak prime domains are 
``the'' model for systems with fusions by showing that they allow to provide a semantics
for left-linear rewriting systems in $\mathcal{M}$-adhesive categories.

On the practical side, the present paper just surveyed the possibility
of modelling e-graphs. This appears to be quite interesting as
concurrent rewriting on such structures is an active area of
research~\cite{abs-2208-06295}.
We plan to make the correspondence precise
and investigate how the analysis techniques enabled by a concurrent
semantics of rewriting can impact on e-graphs.

\bibstyle{plainurl}
\bibliography{bibliog.bib}

\begin{thebibliography}{10}

\bibitem{azzi2019essence}
G.~G. Azzi, A.~Corradini, and L.~Ribeiro.
\newblock On the essence and initiality of conflicts in
  {$\mathcal{M}$}-adhesive transformation systems.
\newblock {\em Journal of Logical and Algebraic Methods in Programming},
  109:100482, 2019.

\bibitem{Handbook}
P.~Baldan, A.~Corradini, H.~Ehrig, M.~L{\"o}we, U.~Montanari, and F.~Rossi.
\newblock Concurrent semantics of algebraic graph transformation systems.
\newblock In G.~Rozenberg, editor, {\em Handbook of Graph Grammars and
  Computing by Graph Transformation. Volume 3: Concurrency}, pages 107--187.
  World Scientific, 1999.

\bibitem{baldan2017domains}
P.~Baldan, A.~Corradini, and F.~Gadducci.
\newblock Domains and event structures for fusions.
\newblock In {\em LICS 2017}, pages 1--12. {IEEE} Computer Society, 2017.

\bibitem{BaldanCHKS06}
P.~Baldan, A.~Corradini, T.~Heindel, B.~K{\"{o}}nig, and P.~Sobocinski.
\newblock Processes for adhesive rewriting systems.
\newblock In L.~Aceto and A.~Ing{\'{o}}lfsd{\'{o}}ttir, editors, {\em FOSSACS
  2006}, volume 3921 of {\em LNCS}, pages 202--216. Springer, 2006.

\bibitem{BaldanGM06}
P.~Baldan, F.~Gadducci, and U.~Montanari.
\newblock Concurrent rewriting for graphs with equivalences.
\newblock In C.~Baier and H.~Hermanns, editors, {\em CONCUR 2006}, volume 4137
  of {\em LNCS}, pages 279--294. Springer, 2006.

\bibitem{baldan2011adhesivity}
P.~Baldan, F.~Gadducci, and P.~Sobocinski.
\newblock Adhesivity is not enough: Local {C}hurch-{R}osser revisited.
\newblock In F.~Murlak and P.~Sankowski, editors, {\em MFCS 2011}, volume 6907
  of {\em LNCS}, pages 48--59. Springer, 2011.

\bibitem{behr2021concurrency}
N.~Behr, R.~Harmer, and J.~Krivine.
\newblock Concurrency theorems for non-linear rewriting theories.
\newblock In F.~Gadducci and T.~Kehrer, editors, {\em ICGT 2021}, volume 12741
  of {\em LNCS}, pages 3--21. Springer, 2021.

\bibitem{BehrHK23}
N.~Behr, R.~Harmer, and J.~Krivine.
\newblock Fundamentals of compositional rewriting theory.
\newblock {\em Journal of Logical and Algebraic Methods in Programming},
  135:100893, 2023.

\bibitem{bonchi2022string}
F.~Bonchi, F.~Gadducci, A.~Kissinger, P.~Soboci{\'n}ski, and F.~Zanasi.
\newblock String diagram rewrite theory {I}: rewriting with {F}robenius
  structure.
\newblock {\em Journal of the Association for Computing Machinery},
  69(2):1--58, 2022.

\bibitem{brown1997van}
R.~Brown and G.~Janelidze.
\newblock Van {K}ampen theorems for categories of covering morphisms in
  lextensive categories.
\newblock {\em Journal of Pure and Applied Algebra}, 119(3):255--263, 1997.

\bibitem{carboni1993introduction}
A.~Carboni, S.~Lack, and R.~F. Walters.
\newblock Introduction to extensive and distributive categories.
\newblock {\em Journal of Pure and Applied Algebra}, 84(2):145--158, 1993.

\bibitem{castelnovo2023thesis}
D.~Castelnovo.
\newblock {\em Fuzzy algebraic theories and $\mathcal{M}, \mathcal{N}$-adhesive
  categories}.
\newblock PhD thesis, University of Udine, 2023.

\bibitem{CastelnovoGM22}
D.~Castelnovo, F.~Gadducci, and M.~Miculan.
\newblock A new criterion for $\mathcal{M},\mathcal{N}$-adhesivity, with an
  application to hierarchical graphs.
\newblock In P.~Bouyer and L.~Schr{\"{o}}der, editors, {\em FOSSACS 2022},
  volume 13242 of {\em LNCS}, pages 205--224. Springer, 2022.

\bibitem{CastelnovoGM24}
D.~Castelnovo, F.~Gadducci, and M.~Miculan.
\newblock A simple criterion for $\mathcal{M}, \mathcal{N}$-adhesivity.
\newblock {\em Theoretical Computer Science}, 982:114280, 2024.

\bibitem{CorradiniHHK06}
A.~Corradini, T.~Heindel, F.~Hermann, and B.~K{\"{o}}nig.
\newblock Sesqui-pushout rewriting.
\newblock In A.~Corradini, H.~Ehrig, U.~Montanari, L.~Ribeiro, and
  G.~Rozenberg, editors, {\em ICGT 2006}, volume 4178 of {\em LNCS}, pages
  30--45. Springer, 2006.

\bibitem{CMR:GP}
A.~Corradini, U.~Montanari, and F.~Rossi.
\newblock Graph processes.
\newblock {\em Fundamenta Informaticae}, 26:241--265, 1996.

\bibitem{CorradiniMREHL97}
A.~Corradini, U.~Montanari, F.~Rossi, H.~Ehrig, R.~Heckel, and M.~L{\"{o}}we.
\newblock Algebraic approaches to graph transformation - {P}art {I}: {B}asic
  concepts and double pushout approach.
\newblock In G.~Rozenberg, editor, {\em Handbook of Graph Grammars and
  Computing by Graph Transformations. Volume 1: Foundations}, pages 163--246.
  World Scientific, 1997.

\bibitem{CVY:ESSPE}
S.~Crafa, D.~Varacca, and N.~Yoshida.
\newblock Event structure semantics of parallel extrusion in the
  $\pi$-calculus.
\newblock In L.~Birkedal, editor, {\em FOSSACS 2012}, volume 7213 of {\em
  LNCS}, pages 225--239. Springer, 2012.

\bibitem{ehrig2006fundamentals}
H.~Ehrig, K.~Ehrig, U.~Prange, and G.~Taentzer.
\newblock {\em Fundamentals of Algebraic Graph Transformation}.
\newblock Springer, 2006.

\bibitem{ehrig2012}
H.~Ehrig, U.~Golas, A.~Habel, L.~Lambers, and F.~Orejas.
\newblock $\mathcal{M}$-adhesive transformation systems with nested application
  conditions. {P}art 2: Embedding, critical pairs and local confluence.
\newblock {\em Fundamenta Informaticae}, 118(1-2):35--63, 2012.

\bibitem{ehrig2014adhesive}
H.~Ehrig, U.~Golas, A.~Habel, L.~Lambers, and F.~Orejas.
\newblock $\mathcal{M}$-adhesive transformation systems with nested application
  conditions. {P}art 1: Parallelism, concurrency and amalgamation.
\newblock {\em Mathematical Structures in Computer Science}, 24(4):240406,
  2014.

\bibitem{ehrig2004adhesive}
H.~Ehrig, A.~Habel, J.~Padberg, and U.~Prange.
\newblock Adhesive high-level replacement categories and systems.
\newblock In H.~Ehrig, G.~Engels, F.~Parisi{-}Presicce, and G.~Rozenberg,
  editors, {\em ICGT 2004}, LNCS, pages 144--160. Springe, 2004.

\bibitem{EHP:BRfTToHLRS}
H.~Ehrig, A.~Habel, and F.~Parisi{-}Presicce.
\newblock Basic results for two types of high-level replacement systems.
\newblock In M.~Bauderon and A.~Corradini, editors, {\em GETGRATS Closing
  Workshop 2001}, volume~51 of {\em ENTCS}, pages 127--138. Elsevier, 2001.

\bibitem{Ehrig1976}
H.~Ehrig and H.~Kreowski.
\newblock Parallelism of manipulations in multidimensional information
  structures.
\newblock In A.~W. Mazurkiewicz, editor, {\em MFCS 1976}, volume~45 of {\em
  LNCS}, pages 284--293. Springer, 1976.

\bibitem{EhrigPS73}
H.~Ehrig, M.~Pfender, and H.~J. Schneider.
\newblock Graph-grammars: An algebraic approach.
\newblock In {\em SWAT 1973}, pages 167--180. {IEEE} Computer Society, 1973.

\bibitem{ehrig2006weak}
H.~Ehrig and U.~Prange.
\newblock Weak adhesive high-level replacement categories and systems: {A}
  unifying framework for graph and {P}etri net transformations.
\newblock In K.~Futatsugi, J.~Jouannaud, and J.~Meseguer, editors, {\em
  Algebra, Meaning, and Computation, Essays Dedicated to Joseph A. Goguen},
  volume 4060 of {\em LNCS}, pages 235--251. Springer, 2006.

\bibitem{Gad07}
F.~Gadducci.
\newblock Graph rewriting and the $\pi$-calculus.
\newblock {\em Mathematical Structures in Computer Science}, 17(3):1--31, 2007.

\bibitem{garner2012axioms}
R.~Garner and S.~Lack.
\newblock On the axioms for adhesive and quasiadhesive categories.
\newblock {\em Theory and Applications of Categories}, 27(3):27--46, 2012.

\bibitem{habel2012mathcal}
A.~Habel and D.~Plump.
\newblock $\mathcal{M}, \mathcal{N}$-adhesive transformation systems.
\newblock In H.~Ehrig, G.~Engels, H.-J. Kreowski, and G.~Rozenberg, editors,
  {\em ICGT 2012}, volume 7562 of {\em LNCS}, pages 218--233. Springer, 2012.

\bibitem{heindel2009category}
T.~Heindel.
\newblock {\em A category theoretical approach to the concurrent semantics of
  rewriting}.
\newblock PhD thesis, Universit{\"a}t Duisburg--Essen, 2009.

\bibitem{johnstone2007quasitoposes}
P.~Johnstone, S.~Lack, and P.~Soboci{\'n}ski.
\newblock Quasitoposes, quasiadhesive categories and {A}rtin glueing.
\newblock In T.~Mossakowski, U.~Montanari, and M.~Haveraaen, editors, {\em
  CALCO 2007}, volume 4624 of {\em LNCS}, pages 312--326. Springer, 2007.

\bibitem{lack2005adhesive}
S.~Lack and P.~Soboci{\'n}ski.
\newblock Adhesive and quasiadhesive categories.
\newblock {\em RAIRO-Theoretical Informatics and Applications}, 39(3):511--545,
  2005.

\bibitem{lack2006toposes}
S.~Lack and P.~Soboci{\'n}ski.
\newblock Toposes are adhesive.
\newblock In A.~Corradini, H.~Ehrig, U.~Montanari, L.~Ribeiro, and
  G.~Rozenberg, editors, {\em ICGT 2006}, volume 4178 of {\em LNCS}, pages
  184--198. Springer, 2006.

\bibitem{abs-2208-06295}
H.~Lauko, L.~Korencik, and P.~Goodman.
\newblock On the optimization of equivalent concurrent computations.
\newblock {\em CoRR}, abs/2208.06295, 2022.

\bibitem{JJL80}
J.~L\'evy.
\newblock Optimal reductions in the lambda-calculus.
\newblock In J.~Seldin and J.~Hindley, editors, {\em To H.B. Curry, Essays on
  Combinatory Logic, Lambda Calculus and Formalism}, pages 159--191. Academic
  Press, 1980.

\bibitem{mac2013categories}
S.~MacLane.
\newblock {\em Categories for the working mathematician}.
\newblock Springer, 2013.

\bibitem{Mazurkiewicz86}
A.~W. Mazurkiewicz.
\newblock Trace theory.
\newblock In W.~Brauer, W.~Reisig, and G.~Rozenberg, editors, {\em Advances in
  Petri Nets 1986}, volume 255 of {\em LNCS}, pages 279--324. Springer, 1986.

\bibitem{Mes92}
J.~Meseguer.
\newblock Conditional rewriting logic as a unified model of concurrency.
\newblock {\em Theoretical Computer Science}, 96(1):73--155, 1992.

\bibitem{NPW:PNES}
M.~Nielsen, G.~Plotkin, and G.~Winskel.
\newblock Petri {N}ets, {E}vent {S}tructures and {D}omains, {P}art 1.
\newblock {\em Theoretical Computer Science}, 13:85--108, 1981.

\bibitem{OverbeekER23}
R.~Overbeek, J.~Endrullis, and A.~Rosset.
\newblock Graph rewriting and relabeling with {PBPO}\({}^{\mbox{+}}\): {A}
  unifying theory for quasitoposes.
\newblock {\em Journal of Logical and Algebraic Methods in Programming},
  133:100873, 2023.

\bibitem{PUY:MBPE}
I.~Phillips, I.~Ulidowski, and S.~Yuen.
\newblock Modelling of bonding with processes and events.
\newblock In G.~W. Dueck and D.~M. Miller, editors, {\em RC 2013}, volume 7948
  of {\em LNCS}, pages 141--154. Springer, 2013.

\bibitem{WNW:egg}
M.~Willsey, C.~Nandi, Y.~R. Wang, O.~Flatt, Z.~Tatlock, and P.~Panchekha.
\newblock egg: Fast and extensible equality saturation.
\newblock {\em Proceedings of the ACM on Programming Languages}, 5(POPL):1--29,
  2021.

\end{thebibliography}

\appendix
\section{Properties of $\mathcal{M}$-adhesive categories}\label{app:ade}

This first appendix is devoted to the proofs of a few well-known results about $\mathcal{M}$-adhesive categories.
Observe that our notion of $\mathcal{M}$-adhesivity follows
\cite{ehrig2012,ehrig2014adhesive} and is different from the one of
\cite{azzi2019essence}. What is called $\mathcal{M}$-adhesivity in
the latter paper corresponds to our strict
$\mathcal{M}$-adhesivity. Moreover, in \cite{azzi2019essence} the
class $\mathcal{M}$ is assumed to be only stable under
pullbacks. However, if $\mathcal{M}$ contains all split monos, then
stability under pushouts can be deduced from the other
axioms~\cite[Prop.~$5.1.21$]{castelnovo2023thesis}.

\subsection{Some results on $\mathcal{A}$-stable and $\mathcal{A}$-Van Kampen squares}
We start proving some general results regarding $\mathcal{A}$-Van Kampen and $\mathcal{A}$-stable squares. Let us begin recalling some classical results about  about pullbacks and pushouts\cite{mac2013categories}.

\noindent 
\parbox{10.5cm}{\begin{lemma}\label{lem:popb1} \label{lem:pb1} \label{lem:po1}
Let $\X$ be a category, and consider the diagram  on the right, then the following hold
	\begin{enumerate}
		\item if the right square is a pullback, then the whole rectangle is a pullback if and only if the left square is one;
		\item if the left square is a pushout, 	then the whole rectangle is a pushout if and only if the right square is one.
	\end{enumerate}
\end{lemma}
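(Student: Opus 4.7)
The plan is to prove the classical pasting lemma for pullbacks and pushouts via direct application of the respective universal properties, with item (2) following from item (1) by formal duality.

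For item (1), assume the right square is a pullback. For the forward direction, suppose additionally that the left square is a pullback, and take an arbitrary cone over the outer rectangle. The universal property of the right pullback lets me factor this cone uniquely through the top-middle vertex, producing a cone over the left square. The universal property of the left pullback then yields the required mediating arrow into the top-left vertex, and uniqueness follows from the uniqueness clauses of both pullbacks composed. For the converse, assume the rectangle is a pullback and take a cone over the left square. I would paste this cone with the top arrow from middle to right, and with the composite on the bottom, obtaining a cone over the rectangle; the universal property of the rectangle gives a candidate mediating arrow into the top-left vertex. The verification that this arrow makes the left square commute uses precisely the monomorphism-like cancellation provided by uniqueness in the right pullback, applied to the two factorisations of the relevant composite through the top-middle vertex.

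Item (2) is formally dual: pushouts in $\X$ coincide with pullbacks in $\X^{\mathrm{op}}$, and the horizontal pasting configuration is self-dual under reversal of arrows (swapping ``left'' and ``right'' of course). Applying item (1) to $\X^{\mathrm{op}}$ yields exactly the stated pushout pasting law.

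There is no substantial obstacle in this proof, since it is one of the most classical facts of elementary category theory; the only care needed is in the uniqueness half of the converse direction of item (1), where one must carefully invoke commutativity of the right pullback square together with its universal property in order to rule out spurious candidate mediators. Everything else is a routine diagram chase.
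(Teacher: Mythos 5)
Your proof is correct: it is the standard universal-property argument for the pasting law, and the step you flag in the converse direction of item (1) --- identifying the two factorisations of the cone through the top-middle object via uniqueness in the right pullback --- is exactly the only point requiring care, with item (2) following by the usual duality. The paper states this lemma without proof, merely recalling it as a classical fact with a citation to Mac~Lane, so there is no in-paper argument to compare against; your write-up supplies the expected textbook proof.
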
}
\parbox{4cm}{$\xymatrix{X \ar[d]_{a} \ar[r]^{f}& \ar[r]^{g} Y \ar[d]^{b}& Z \ar[d]^{c}\\ A \ar[r]_{h}& B \ar[r]_{k}& C}$}\\

The following proposition establishes a key property of $\mathcal{A}$-Van Kampen squares with a mono as a side: they are not only pushouts, but also pullbacks \cite{ehrig2004adhesive,BehrHK23,lack2005adhesive}.

\noindent 
\parbox{11.5cm}{\begin{proposition}\label{prop:pbpo} Let $\mathcal{A}$ be a class of arrows stable under pushouts and containing all the isomorphisms.  If the square to the right is $\mathcal{A}$-Van Kampen and $m\colon A\to C$ is mono and belongs to $\mathcal{A}$, then the square is a pullback and $n$ is a monomorphism.
\end{proposition}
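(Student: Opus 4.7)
My plan is to prove both conclusions simultaneously by constructing a carefully chosen cube and applying the $\mathcal{A}$-Van Kampen property of the given square. The strategy is the standard trick of ``trivialising'' the top face so that the front and right faces of the cube recover the information we want.

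Concretely, I would define the top face by taking $A' = A$, $B' = B$, $C' = A$, $D' = B$, with top arrows $f' := f$, $m' := \id_A$, $n' := \id_B$, and $g' := f$. The vertical arrows are chosen as $a := \id_A$, $b := \id_B$, $c := m$, and $d := n$. A short check shows all cube faces commute; the only non-trivial commutativity is the front face, where $d \circ g' = n \circ f = g \circ m = g \circ c$ using that the bottom is a pushout. The top face is clearly a pushout (pushout of $\id_A$ and $f$). The back face is a pushout/pullback built from identities and $f$, hence trivially a pullback. The left face takes the form
\[
\xymatrix{A \ar[r]^{\id} \ar[d]_{\id} & A \ar[d]^{m} \\ A \ar[r]_{m} & C}
\]
and is a pullback precisely because $m$ is mono (the pullback of $m$ along itself is the diagonal). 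Finally, the vertical arrows lie in $\mathcal{A}$: $a,b$ are identities (hence isos) and belong to $\mathcal{A}$ by hypothesis, $c = m \in \mathcal{A}$ by assumption, and $d = n \in \mathcal{A}$ since it is the pushout of $m \in \mathcal{A}$ along $f$ and $\mathcal{A}$ is stable under pushouts.

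Having met all hypotheses of the $\mathcal{A}$-Van Kampen property (top face is a pushout, back and left faces are pullbacks, vertical arrows in $\mathcal{A}$), the property yields that the front and right faces are pullbacks. The front face, with the identifications above, is exactly the original square
\[
\xymatrix{A \ar[r]^{f} \ar[d]_{m} & B \ar[d]^{n} \\ C \ar[r]_{g} & D}
\]
giving the first conclusion. The right face unravels to
\[
\xymatrix{B \ar[r]^{\id} \ar[d]_{\id} & B \ar[d]^{n} \\ B \ar[r]_{n} & D}
\]
and this being a pullback says exactly that the kernel pair of $n$ is trivial, i.e.\ $n$ is mono, giving the second conclusion.

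The only delicate point is getting the cube assignment right so that the front face actually coincides with the given square; the rest is routine verification. The use of $m$ mono appears precisely once, to ensure the left face is a pullback, and the closure assumptions on $\mathcal{A}$ (isomorphisms in $\mathcal{A}$, stability under pushouts) are exactly what make the vertical arrows admissible for the Van Kampen property. No further ingenuity is required.
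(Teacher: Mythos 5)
Your proof is correct, and it is essentially the standard argument from the literature the paper cites (the paper itself states this proposition without proof, referring to Lack--Soboci\'nski and related work, where exactly this ``degenerate cube'' construction is used). The cube you build is set up correctly: the left face is a pullback precisely because $m$ is mono, $d=n$ lies in $\mathcal{A}$ by stability under pushouts, and the Van Kampen condition then delivers the front face (the original square, hence a pullback) and the right face (the trivial kernel pair of $n$, hence $n$ mono) in one stroke.
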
}
\parbox{2cm}{\vspace{1.5ex}$\xymatrix{A\ar[r]^{g} \ar[d]_{m} & B \ar[d]^{n} \\ C \ar[r]_{f}  & D}$}

The previous proposition allows us to establish the following result.

\noindent \parbox{7.4cm}{
\begin{lemma}\label{lem:varie}Let $\mathcal{A}$ be a class of arrows stable under pullbacks, pushouts and containing all isomorphisms.  Suppose that the left square aside is $\mathcal{A}$-Van Kampen, while the vertical faces in the right cube are pullbacks. 	Suppose moreover that $m\colon A\to C$ and $d\colon D'\to D$ are mono and that $d$ belongs to $\mathcal{A}$. Then $d\leq n$ if and only if $c \leq m$.
\end{lemma}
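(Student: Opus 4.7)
The plan is first to derive additional structure from the hypotheses and then to apply standard mediator arguments. Since $\mathcal{A}$ is pullback-stable and $d\in \mathcal{A}$, the vertical arrows $b$ and $c$ (pullbacks of $d$ along $n$ and $g$ by the right and front faces) lie in $\mathcal{A}$, and so does $a$ (pullback of $b$ along $f$ by the back face); moreover, each of $a,b,c$ is a monomorphism, being a pullback of the mono $d$. By Proposition~\ref{prop:pbpo} applied to the $\mathcal{A}$-Van Kampen bottom square with $m\in \mathcal{A}$ mono, the bottom square is in fact a pullback and $n$ is mono. The $\mathcal{A}$-Van Kampen property then also forces the top face to be a pushout: the back and left faces are pullbacks, the vertical arrows belong to $\mathcal{A}$, and the front and right faces are pullbacks.

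For the forward direction, assume $d\leq n$ via $h\colon D'\to B$ with $n\circ h=d$. Front-face commutativity gives $g\circ c = d\circ g' = n\circ(h\circ g')$, so by the pullback property of the bottom square there is a unique mediator $c''\colon C'\to A$ satisfying $m\circ c'' = c$ and $f\circ c'' = h\circ g'$. The first identity yields $c\leq m$.

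For the backward direction, assume $c\leq m$ through $c''\colon C'\to A$ with $m\circ c''=c$. The plan is to produce $h\colon D'\to B$ by invoking the universal property of the top pushout on the cocone $B'\xrightarrow{b}B \xleftarrow{f\circ c''}C'$. Compatibility $(f\circ c'')\circ m' = b\circ f'$ follows by post-composing with the mono $m$: the left face gives $c\circ m' = m\circ a$, hence $m\circ c''\circ m' = m\circ a$, so $c''\circ m' = a$; then $f\circ c''\circ m' = f\circ a = b\circ f'$ by the back face. The universal property yields $h\colon D'\to B$ with $h\circ n'=b$ and $h\circ g' = f\circ c''$. To check $n\circ h = d$ I would pre-compose with $n'$ and with $g'$: the right face gives $n\circ h\circ n' = n\circ b = d\circ n'$, while the bottom and front faces give $n\circ h\circ g' = n\circ f\circ c'' = g\circ m\circ c'' = g\circ c = d\circ g'$. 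Uniqueness in the pushout then forces $n\circ h = d$, so $d\leq n$.

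The argument is essentially a set-theoretic intuition (inside $D$, $D'\subseteq B$ iff $C' = C\cap D'\subseteq C\cap B = A$) dressed in the appropriate pullback/pushout machinery. The only non-trivial step is the initial derivation that the bottom face is also a pullback and the top face a pushout, which is exactly where the $\mathcal{A}$-Van Kampen hypothesis and Proposition~\ref{prop:pbpo} enter; the rest is mediator construction and careful bookkeeping of face commutativities.
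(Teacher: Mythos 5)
Your argument is the standard one for this lemma and, up to a consistent relabelling of the arrows (in the paper's cube the horizontal sides are $g\colon A\to B$ and $f\colon C\to D$, and the top-face arrows into $D'$ are $n'$ and $f'$, not $g'$), both directions are carried out correctly. The preliminary observations are right: $b,c$ and then $a$ lie in $\mathcal{A}$ by pullback stability and are monos, so the $\mathcal{A}$-Van Kampen property applies to the cube and forces the top face to be a pushout. The backward direction then uses only this top pushout together with the hypothesis that $m$ is mono (to get $c''\circ m'=a$ and hence the cocone compatibility), and the uniqueness part of the pushout to conclude $n\circ h=d$; the forward direction uses only that the bottom square is a pullback. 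The paper states this lemma without proof, as a known fact, so there is no in-paper argument to compare against, but your route is the expected one.

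The one step you must justify more carefully is the claim that the bottom square is a pullback. You invoke Proposition~\ref{prop:pbpo} ``with $m\in\mathcal{A}$ mono'', but the hypotheses of Lemma~\ref{lem:varie} only assert that $m$ is a monomorphism, not that it belongs to $\mathcal{A}$, and Proposition~\ref{prop:pbpo} explicitly requires both. This is not cosmetic: the forward implication genuinely rests on the bottom pushout being a pullback (your mediator $c''$ comes from exactly that universal property), and with only ``$m$ mono'' there is no visible way to obtain it. In the one place the lemma is used (Lemma~\ref{lem:radj}) the relevant arrow is in $\mathcal{M}$, so the intended hypothesis is surely $m\in\mathcal{A}$ and your proof is then complete; but as a proof of the statement as literally written, this is the gap to flag --- either the hypothesis $m\in\mathcal{A}$ should be made explicit, or you need a separate argument that the bottom square is a pullback.
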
}
	\parbox{6cm}{$\xymatrix@C=10pt@R=10pt{&&&&&A'\ar[dd]|\hole_(.65){a}\ar[rr]^{g'} \ar[dl]_{m'} && B' \ar[dd]^{b} \ar[dl]_{n'} \\ A \ar[dd]_{m}\ar[rr]^{g}&&B\ar[dd]^{n}&&C'  \ar[dd]_{c}\ar[rr]^(.7){f'} & & D' \ar[dd]_(.3){d}\\&&&&&A\ar[rr]|\hole^(.65){g} \ar[dl]^{m} && B \ar[dl]^{n} \\C\ar[rr]_{f} &&D&&C \ar[rr]_{f} & & D}$}

\noindent 
\parbox{11cm}{\begin{remark}
	Recall that, given two monos $m:M\to X$ and $n:N\to X$ with the same codomain, $m\leq n$ means that there exists a, necessarily unique and necessarily mono, $k:M\to N$ fitting in the triangle aside.	
	
	\hspace{15pt}Notice, moreover, that  if $m\leq n$ and $n\leq m$, then the arrow $k:M\to N$ is an isomorphism.
\end{remark}}
\parbox{4cm}{$\xymatrix@C=15pt{M\ar@{.>}[rr]^{k}  \ar[dr]_{m}&& N \ar[dl]^{n}\\ & X}$}\\

Finally, we show that $\mathcal{A}$-stable pushouts enjoy a \emph{pullback-pushout decomposition} property.

\noindent 
\parbox{10cm}{
\begin{proposition}\label{prop:stab}Let $\X$ be a category and $\mathcal{A}$ a class of arrows stable under pullbacks. Suppose that, in the diagram aside, the whole rectangle is an $\mathcal{A}$-stable pushout and the right square a pullback. If the arrow $k$ is in $\mathcal{A}$ and it is a monomorphism,  then both squares are pushouts.
\end{proposition}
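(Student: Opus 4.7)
The strategy is to apply the ``if'' half of the Van Kampen biconditional that defines $\mathcal{A}$-stability (front and right faces pullbacks, together with back and left faces pullbacks and verticals in $\mathcal{A}$, force the top face to be a pushout) to a cube whose bottom is the given stable pushout rectangle and whose top is the left square, thereby proving the left square is a pushout. The right square will then follow by pasting.

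First I would establish two preliminary facts about $g\colon Y\to Z$: that it is a monomorphism and that it lies in $\mathcal{A}$. Both exploit the right square being a pullback. For monicity, suppose $g y_1 = g y_2$; composing with $c$ and using $cg = kb$ gives $k b y_1 = k b y_2$, so $b y_1 = b y_2$ since $k$ is mono, and then the pullback universal property of the right square forces $y_1 = y_2$. For $g\in\mathcal{A}$, observe that in the pullback right square $g$ is precisely the pullback of $k\in\mathcal{A}$ along $c$, and $\mathcal{A}$ is stable under pullbacks by hypothesis.

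Next I would consider the cube with bottom face the rectangle $(X,Z,A,C)$, top face the left square $(X,Y,A,B)$, and vertical arrows $\id{X}$, $g$, $\id{A}$, $k$. Commutativity is immediate from the original diagram. I would then verify the four side faces: the back face has identity verticals and is trivially a pullback; the left face $(X,Y,X,Z)$, with top $f$, bottom $gf$ and verticals $\id{X}, g$, is a pullback precisely because $g$ is mono (from $gfx = gy$ one deduces $fx = y$); the front face $(A,B,A,C)$, with top $h$, bottom $kh$ and verticals $\id{A}, k$, is a pullback by the analogous cancellation exploiting that $k$ is mono; the right face is the right square itself, a pullback by hypothesis. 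The verticals $\id{X}, \id{A}, g, k$ all lie in $\mathcal{A}$ (identities as usual, $g$ by the previous step, and $k$ by hypothesis).

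Having verified every hypothesis of $\mathcal{A}$-stability, the conclusion is that the top face of the cube, i.e.\ the left square, is a pushout. Finally, by the pushout pasting lemma (\cref{lem:po1}), the rectangle being a pushout and its left part being a pushout together imply that the right part is also a pushout. The main obstacle is spotting that $g$ is forced to be mono: this is the pivotal non-trivial use of both the monicity of $k$ and the pullback property of the right square, and it is precisely what unlocks the application of $\mathcal{A}$-stability, since without it the cube's left face would fail to be a pullback.
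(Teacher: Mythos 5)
Your argument is correct and coincides with the standard proof of pushout--pullback decomposition that the paper relies on: erect a cube over the $\mathcal{A}$-stable outer pushout with the left square as top face and verticals $\id{X}, g, \id{A}, k$, check that the two faces over the span legs are pullbacks (one trivially, one because $g$ is mono, $g$ being the pullback of the mono $k$), that the two faces over the cospan legs are pullbacks (one because $k$ is mono, the other by hypothesis), conclude from $\mathcal{A}$-stability that the top face is a pushout, and finish by pushout pasting. The only point you gloss over is the claim that $\id{X},\id{A}\in\mathcal{A}$, which is not among the stated hypotheses; it does follow, however, since each such identity arises as the pullback of the mono $k\in\mathcal{A}$ along a suitable composite (e.g.\ $\id{A}$ is the pullback of $k$ along $k\circ h$), so this is only a cosmetic omission.
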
}
\parbox{4cm}{\vspace{-.3cm}
	$\xymatrix{X \ar[d]_{a} \ar[r]^{f}& \ar[r]^{g} Y \ar[d]^{b}& Z \ar[d]^{c}\\ A \ar[r]_{h}& B \ar[r]_{k}& C}$}

\subsection{Useful properties of $\mathcal{M}$-adhesive categories}

We are now going to apply the results of the previous section to $\mathcal{M}$-adhesive categories in order to establish some \emph{high-level replacement properties} \cite{ehrig2004adhesive,ehrig2014adhesive,ehrig2006fundamentals}.  
A first important  result that can be immediately established, with the aid of \Cref{prop:pbpo}, is the following one.

\begin{proposition}
	\label{prop:pbpoad}
	Let $\X$ be an $\mathcal{M}$-adhesive category. Then
	$\mathcal{M}$-pushouts are pullbacks.
\end{proposition}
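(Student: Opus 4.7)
The plan is to derive Proposition~\ref{prop:pbpoad} as a direct corollary of Proposition~\ref{prop:pbpo}, which is already available in the excerpt.

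First I would unfold the definitions. By hypothesis, $\X$ is $\mathcal{M}$-adhesive, so the class $\mathcal{M}$ is a subclass of $\mon(\X)$, contains all isomorphisms, and is stable under pushouts. Given an $\mathcal{M}$-pushout, i.e.~a pushout square with one side $m\colon A\to C$ in $\mathcal{M}$, condition (3) in the definition of $\mathcal{M}$-adhesivity tells us that this square is $\mathcal{M}$-Van Kampen. Since $\mathcal{M}\subseteq \mon(\X)$, the arrow $m$ is in particular a monomorphism.

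Now I would invoke Proposition~\ref{prop:pbpo} with the choice $\mathcal{A}:=\mathcal{M}$: the hypotheses of that proposition (closure under pushouts, containment of isomorphisms, the square being $\mathcal{A}$-Van Kampen, and $m$ being both a mono and an element of $\mathcal{A}$) are all met. The conclusion is exactly that the square is a pullback (and, as a bonus, that the arrow opposite to $m$ is also a monomorphism).

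There is essentially no obstacle: the only conceptual point to check is that the class $\mathcal{M}$ satisfies the hypotheses of \Cref{prop:pbpo}, which is immediate from the definition of $\mathcal{M}$-adhesivity. I would write the proof as two or three lines, simply citing \Cref{prop:pbpo}.
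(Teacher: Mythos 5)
Your proposal is correct and follows exactly the route the paper takes: the paper states that Proposition~\ref{prop:pbpoad} is ``immediately established, with the aid of \Cref{prop:pbpo}'', which is precisely your argument of instantiating $\mathcal{A}:=\mathcal{M}$ and checking that the hypotheses (stability under pushouts, containment of isomorphisms, the $\mathcal{M}$-Van Kampen property from condition~(3) of $\mathcal{M}$-adhesivity, and $m$ being a mono in $\mathcal{M}$) are all satisfied. Nothing is missing.
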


From \Cref{prop:pbpoad}, in turn, we can derive the following corollaries.
\begin{corollary}\label{cor:rego}
	In a $\mathcal{M}$-adhesive category $\X$, every $m\in\mathcal{M}$ is a regular mono.
\end{corollary}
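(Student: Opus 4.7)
The plan is to show that any $m \colon A \mto B$ in $\mathcal{M}$ equals the equalizer of its cokernel pair, which is the standard characterization of regular monos in categories with enough colimits.

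First, I would form the pushout of $m$ along itself. This $\mathcal{M}$-pushout exists by axiom (2) of $\mathcal{M}$-adhesivity, since $m \in \mathcal{M}$. Call the resulting square
\[
\xymatrix{A \ar[r]^{m} \ar[d]_{m} & B \ar[d]^{n_2} \\ B \ar[r]_{n_1} & C}
\]
so that $n_1 \circ m = n_2 \circ m$. Note that $n_1, n_2 \in \mathcal{M}$ by stability of $\mathcal{M}$ under pushouts, although for the argument it will be enough that both sides commute.

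Next I would invoke \Cref{prop:pbpoad}, which states that $\mathcal{M}$-pushouts are pullbacks. Hence the square above is also a pullback of the pair $(n_1, n_2)$ with projections $m, m \colon A \to B$. Now for any arrow $f \colon X \to B$ satisfying $n_1 \circ f = n_2 \circ f$, the pair $(f, f)$ forms a cone over $(n_1, n_2)$, so by the pullback property there is a unique $u \colon X \to A$ with $m \circ u = f$. Uniqueness of such $u$ follows from the fact that $m$ is a monomorphism, being a member of $\mathcal{M} \subseteq \mon(\X)$. This exhibits $m$ as the equalizer of $n_1$ and $n_2$, so $m$ is a regular mono.

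I do not foresee a real obstacle: the whole argument is a direct consequence of \Cref{prop:pbpoad} together with the well-known fact that in a category with pushouts, a mono is regular if and only if it equalizes its cokernel pair. The only subtlety worth double-checking is that the pullback property indeed forces the two factorizations to coincide, but this is automatic because both projections out of $A$ are literally the same arrow $m$, so any cone $(f_1, f_2)$ with $n_1 \circ f_1 = n_2 \circ f_2$ and a common mediating $u$ must satisfy $f_1 = m \circ u = f_2$, confirming that one recovers precisely the equalizer universal property.
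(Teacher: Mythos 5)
Your proof is correct and follows exactly the route the paper intends: the corollary is stated as a direct consequence of \Cref{prop:pbpoad}, and the standard argument is precisely yours — the pushout of $m\in\mathcal{M}$ along itself exists by axiom (2) of $\mathcal{M}$-adhesivity, is a pullback by \Cref{prop:pbpoad}, and a mono that is the pullback of its cokernel pair is its equalizer, hence regular. No gaps; the only cosmetic remark is that uniqueness of the mediating arrow already comes for free from the pullback universal property, so the appeal to $m$ being mono is redundant (though harmless).
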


The following result now follows at once noticing that a regular monomorphism which is also epic is automatically an isomorphism.

\begin{corollary}\label{prop:bal}
	If $\X$ is an $\mathcal{M}$-adhesive categories, then every epimorphism in $\mathcal{M}$ is an isomorphism. In particular, every adhesive category $\X$ is \emph{balanced}: if a morphism is monic and epic, then it is an isomorphism.
\end{corollary}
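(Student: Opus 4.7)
The plan is to reduce the statement to the well-known categorical fact that a regular monomorphism which is also epic must be an isomorphism, and then to apply Corollary~\ref{cor:rego} to obtain the first assertion. The second assertion will then follow by specialising to the adhesive case, where $\mathcal{M}$ coincides with the class of all monomorphisms.

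First I would record the auxiliary observation explicitly. Suppose $m\colon A\to B$ is a regular monomorphism, i.e.~$m$ equalises some parallel pair $f,g\colon B\rightrightarrows C$. Then $f\circ m = g\circ m$ by definition of equaliser. If moreover $m$ is an epimorphism, this equality forces $f=g$. But the identity $\id{B}\colon B\to B$ also satisfies $f\circ \id{B} = g\circ \id{B}$, so by the universal property of the equaliser $m$ there is a unique arrow $s\colon B\to A$ with $m\circ s = \id{B}$. Composing on the right with $m$ gives $m\circ s\circ m = m = m\circ \id{A}$, and cancelling the mono $m$ yields $s\circ m = \id{A}$. Hence $m$ is an isomorphism.

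With this lemma in hand, the first part of the statement is immediate: let $m\in \mathcal{M}$ be an epimorphism; by Corollary~\ref{cor:rego} it is a regular mono, and by the observation above it is an isomorphism. For the balancedness of adhesive categories, recall from Remark~\ref{rem:salva} that adhesivity is strict $\mon(\X)$-adhesivity, so $\mathcal{M} = \mon(\X)$. Consequently any morphism $f$ that is both monic and epic belongs to $\mathcal{M}$ and is an epimorphism, and the first part of the corollary yields that $f$ is an isomorphism.

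No real obstacle arises: the whole argument rests on Corollary~\ref{cor:rego} (already at our disposal) combined with a standard and short manipulation of the equaliser universal property. The only point to state carefully is the identification $\mathcal{M} = \mon(\X)$ in the adhesive case, which is exactly the content of Remark~\ref{rem:salva}.
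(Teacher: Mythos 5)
Your proposal is correct and follows the paper's own route exactly: the paper derives this corollary from Corollary~\ref{cor:rego} together with the standard fact that an epic regular monomorphism is an isomorphism, which you prove in full via the equaliser universal property. Your handling of the adhesive case via Remark~\ref{rem:salva} is likewise the intended reading.
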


\noindent 
\parbox{10cm}{\begin{lemma}[$\mathcal{M}$-pushout-pullback decomposition]\label{lem:popb} Let $\X$ be an $\mathcal{M}$-adhesive category  and suppose that, in the diagram aside, the whole rectangle is a pushout and the right square a pullback. Then the following statements hold
\parbox{13cm}	{\begin{enumerate}
		\item if $a$ belongs to $\mathcal{M}$ and $k$ is a mono,  then both squares are pushouts and pullbacks;
		\item if $f$ and $k $ are in  $\mathcal{M}$, then both squares are pushouts and pullbacks.
	\end{enumerate}}
\end{lemma}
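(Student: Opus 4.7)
The plan is to prove both items by propagating the $\mathcal{M}$-memberships that the hypotheses force through the diagram, establishing the left square as a pullback via pasting, and then identifying both squares as pushouts by invoking Proposition~\ref{prop:stab} (or equivalently a Van Kampen cube) and the pushout cancellation of Lemma~\ref{lem:po1}.

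In the first step I would consolidate the $\mathcal{M}$-memberships. For item~1, $a\in\mathcal{M}$ together with the outer pushout yields $c\in\mathcal{M}$ by stability of $\mathcal{M}$ under pushouts; then $b\in\mathcal{M}$ by stability under pullbacks applied to the right pullback square. For item~2, $g$ arises as the pullback of $k\in\mathcal{M}$ along $c$ in the right square, so $g\in\mathcal{M}$; composing with $f\in\mathcal{M}$ gives $gf\in\mathcal{M}$, hence the outer rectangle is again an $\mathcal{M}$-pushout, and stability delivers $kh\in\mathcal{M}$ as well as $b\in\mathcal{M}$. In either case the outer rectangle is an $\mathcal{M}$-pushout and therefore a pullback by Proposition~\ref{prop:pbpoad}; pasting with the right pullback via Lemma~\ref{lem:pb1} makes the left square a pullback.

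In the second step I would apply Proposition~\ref{prop:stab} with $\mathcal{A}=\mathcal{M}$. The outer rectangle, being $\mathcal{M}$-Van Kampen, is $\mathcal{M}$-stable, the right square is a pullback by hypothesis, and in item~2 the morphism $k$ is both in $\mathcal{M}$ and mono; Proposition~\ref{prop:stab} then gives directly that both squares are pushouts. For item~1, $k$ is only mono, so Proposition~\ref{prop:stab} does not apply verbatim with $\mathcal{A}=\mathcal{M}$; I would instead form the pushout $P$ of $a$ and $f$ (which exists since $a\in\mathcal{M}$) with canonical comparison $p\colon P\to B$, paste $(X,A,Y,P)$ with the $\mathcal{M}$-pushout of $\beta\colon Y\to P\in\mathcal{M}$ along $g$, and use uniqueness of pushouts to identify the resulting object with the outer pushout object $C$; this identification turns $(Y,Z,P,C)$ into an $\mathcal{M}$-pushout, and comparing it with the right pullback square (where $k$ is mono and $c\in\mathcal{M}$) forces $p$ to be an isomorphism, whence the left square is the pushout of $a$ and $f$.

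Once the left square is a pushout, the right square follows from Lemma~\ref{lem:po1} applied to the horizontal factorisation of the outer pushout through the left square, and both squares are pullbacks by the derivations already collected in the first step together with the given hypothesis. The main technical obstacle is item~1: Proposition~\ref{prop:stab} requires the morphism $k$ to lie in the class $\mathcal{A}$ along which the outer pushout is stable, yet only $k$ mono is assumed; the workaround through the auxiliary pushout $P$ and the identification forced by uniqueness of pushouts is the delicate part of the argument.
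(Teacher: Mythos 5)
The paper never actually writes out a proof of this lemma: it is recalled in \Cref{app:ade} as a known high-level replacement property, so there is nothing to compare line by line and I can only judge your argument on its own terms. Your item~2 is correct: $g\in\mathcal{M}$ by stability under pullbacks, $gf\in\mathcal{M}$ by closure under composition, so the outer rectangle is an $\mathcal{M}$-pushout and hence $\mathcal{M}$-stable, and since $k\in\mathcal{M}\subseteq\mon(\X)$ is mono, Proposition~\ref{prop:stab} applies with $\mathcal{A}=\mathcal{M}$; the pullback claims then follow from Proposition~\ref{prop:pbpoad} and the pasting lemmas exactly as you say. (Your parenthetical claim that $b\in\mathcal{M}$ in item~2 is not justified at that point --- you would first need $c\in\mathcal{M}$, which the hypotheses of item~2 do not give --- but nothing in the argument depends on it.)

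The gap is in item~1, at precisely the step you yourself flag as delicate: ``comparing [the $\mathcal{M}$-pushout $(Y,Z,P,C)$] with the right pullback square \ldots\ forces $p$ to be an isomorphism''. This is asserted, not proved, and it is the entire content of the statement. What the comparison actually yields is the following: the $\mathcal{M}$-pushout $(Y,Z,P,C)$ is also a pullback by Proposition~\ref{prop:pbpoad}, and since its fourth side is $k\circ p$, pasting it against the given pullback $(Y,Z,B,C)$ shows only that the pullback of $p\colon P\to B$ along $b\colon Y\to B$ is an isomorphism --- which does not imply that $p$ is one. The standard way to force $p$ to be invertible is a Van Kampen cube over the pushout $(Y,Z,P,C)$ (equivalently, over the outer pushout) with vertical arrows $\id{Y}$, $g$, $\id{P}$ and $k$: its back and left faces are trivially pullbacks, its front face is a pullback precisely because $k$ is mono, its right face is the given pullback, so the top face $(\id{Y},a',b,p)$ would be a pushout, i.e.\ $p$ an isomorphism. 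But $g$ and $k$ are only monomorphisms, not members of $\mathcal{M}$, so the $\mathcal{M}$-Van Kampen property of the bottom face does not license this conclusion; the argument goes through in strictly $\mathcal{M}$-adhesive categories (adhesive, quasiadhesive), or when $k\in\mathcal{M}$ --- and in the latter case Proposition~\ref{prop:stab} applies directly and your detour through $P$ is unnecessary. As written, item~1 is not proved, and any completion must confront exactly the hypothesis ($k\in\mathcal{M}$, or full Van Kampen-ness of $\mathcal{M}$-pushouts) that the workaround was meant to avoid.
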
}
\parbox{4cm}{\vspace{-1cm}
	$\xymatrix{X \ar[d]_{a} \ar[r]^{f}& \ar[r]^{g} Y \ar[d]^{b}& Z \ar[d]^{c}\\ A \ar[r]_{h}& B \ar[r]_{k}& C}$}\\

Let us turn our attention to pushout complements.

\noindent
\parbox{11cm}{
	\begin{definition}[Pushout complement]
		Let $f\colon X\to Y$ and $g\colon Y\to Z$ be two composable arrows in a category $\X$. A \emph{pushout complement} for the pair $(f,g)$ is a pair $(h,k)$ with $h\colon X\to W$ and $k\colon W\to Z$ such that the square aside is a pushout.
	\end{definition}} \quad
\parbox{4cm}{\vspace{.1cm}
	$\xymatrix{X \ar[r]^{f} \ar[d]_{h}& Y \ar[d]^{g} \\ W \ar[r]_{k}& Z}$}\\

Working in an $\mathcal{M}$-adhesive category we can guarantee that pushout complements are unique. 

\noindent
\parbox{8.5cm}{
	\begin{lemma}\label{lem:radj}
		Let $f\colon X \to Y$ be an arrow in an $\mathcal{M}$-adhesive category $\X$ and suppose that the left square aside is a pushout while the  right one is a pullback, with $ m\colon M \rightarrowtail X$ and $n\colon N \rightarrowtail Y$ in $\mathcal{M}$.
		Then $n\leq k$ if and only if $p_2\leq m$.
	\end{lemma}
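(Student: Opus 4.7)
The plan is to prove the two directions of the biconditional separately, exploiting the pushout and pullback universal properties together with the $\mathcal{M}$-adhesivity of $\X$. The main tools will be the $\mathcal{M}$-pushout-pullback decomposition (\Cref{lem:popb}) and the stability of $\mathcal{M}$ under pullbacks and composition, together with \Cref{prop:pbpoad} which tells us that the left (pushout) square is itself a pullback since $m \in \mathcal{M}$.

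For the direction $n \leq k \Rightarrow p_2 \leq m$, I would start from a factorization $n = k \circ u$ for some arrow $u$. Composing with the commutativity of the pushout square gives a cone on the cospan defining the pullback $P$, so the universal property of the right pullback square produces a mediating arrow $v$ with $p_2 \circ v = m \circ (\text{arrow from top})$. A short diagram chase, together with the fact that $m$ is a mono, then yields an arrow witnessing $m \leq p_2$ (or $p_2 \leq m$, depending on the precise orientation), and the fact that this arrow lies in $\mathcal{M}$ follows from stability of $\mathcal{M}$ under pullbacks plus \Cref{cor:rego}.

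For the converse direction $p_2 \leq m \Rightarrow n \leq k$, which I expect to be the more delicate one, I would prepend the factorization of $m$ through $p_2$ to the pullback right square so that \Cref{lem:pb1} yields a larger composite pullback rectangle, and then glue this with the pushout left square. The resulting configuration is exactly of the shape to which \Cref{lem:popb} (the $\mathcal{M}$-pushout-pullback decomposition) applies, using that $m$ and $n$ both lie in $\mathcal{M}$. Extracting the decomposition produces a pushout square one of whose edges is a mediating morphism into $Y$ through which $n$ factors, establishing $n \leq k$.

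The main obstacle will be the bookkeeping: verifying that all arrows created along the way remain in $\mathcal{M}$ so that each invocation of $\mathcal{M}$-adhesivity or of \Cref{lem:popb} is legitimate, and ensuring that the mediating arrows constructed from universal properties actually compose correctly with the original square data. In particular, one has to be careful that the witness arrow produced in the harder direction is a mono in $\mathcal{M}$, which will follow from \Cref{prop:bal} combined with stability of $\mathcal{M}$ under pullbacks applied to the freshly identified pushout square.
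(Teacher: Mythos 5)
There is a genuine gap in both directions, and the more serious one is in the direction you yourself flag as delicate. For $p_2 \leq m \Rightarrow n \leq k$ you propose to paste the factorisation $p_2 = m \circ w$ onto the pullback square and then invoke \Cref{lem:popb}. But \Cref{lem:popb} takes as hypothesis that the \emph{whole} composite rectangle is a pushout, with only its right half known to be a pullback; in every rectangle you can assemble from the data $w, p_1, p_2, n, m, f$ the global property you actually know is ``pullback'', never ``pushout'' (the only pushout in sight is the original left square, and gluing a pullback rectangle onto it does not make the composite a pushout --- by \Cref{lem:po1} you would already need the other half to be a pushout). So the decomposition lemma is never applicable and the mediating arrow $N \to Q$ is never produced. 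What this direction really needs is the Van Kampen condition itself: pull back the entire pushout square along $n$, note that the hypothesis $p_2 \leq m$ (with $m$ mono) forces the pulled-back copy of $m$, namely $M \times_Y N \to X \times_Y N$, to be an isomorphism, and conclude from $\mathcal{M}$-Van-Kampen-ness that the top face of the cube is a pushout, whence the pulled-back copy of $k$, i.e.\ the projection $Q \times_Y N \to N$, is an isomorphism and $n$ factors through $k$. This is exactly what \Cref{lem:varie} packages, and it is the route the paper intends; your plan never touches it.

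The easy direction is also set up on the wrong square. Given $u \colon N \to Q$ with $k \circ u = n$, the cone you need is $(p_2,\, u \circ p_1)$ over the cospan $X \to Y \leftarrow Q$, and the universal property to invoke is that of the \emph{left} square regarded as a pullback (legitimate by \Cref{prop:pbpoad}); this immediately yields $t \colon P \to M$ with $m \circ t = p_2$. The universal property of the right pullback square, which you invoke instead, only manufactures arrows \emph{into} $P$, i.e.\ witnesses of something being $\leq p_2$, and there is no natural cone over $X \to Y \leftarrow N$ with vertex $M$ (one would have to factor $h$ through $u$). Your hedge ``$m \leq p_2$ or $p_2 \leq m$, depending on the orientation'' is a symptom: only $p_2 \leq m$ is true, and it comes out of the left square, not the right one. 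Finally, the worry about the witnesses lying in $\mathcal{M}$ is a red herring: $n \leq k$ only asks for some arrow $v$ with $k \circ v = n$, and such a $v$ is automatically mono because $n$ is.
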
}
\parbox{3cm}{\vspace{-2ex}
	$\xymatrix{M \ar@{>->}[d]_{m}\ar[r]^{h}& Q \ar@{>->}[d]^{k} & P \ar[r]^{p_1} \ar@{>->}[d]_{p_2}& N \ar@{>->}[d]^{n}\\X\ar[r]_{f}&Y &X \ar[r]_f&Y}$}

\noindent
\parbox{10cm}{
	\begin{corollary}[Uniqueness of pushout complements]\label{lem:pocomp}
		Let $\X$ be a $\mathcal{M}$-adhesive category. Given $m\colon X\mto Y$ in $\mathcal{M}$ and $f\colon Y\to Z$, let $(h_1, k_1)$ and $(h_2, k_2)$ be pushout complements of $m$  and $f$ depicted aside. Then there exists a unique isomorphism $\phi\colon W_1\to W_2$ making the diagram on the right commutative.
	\end{corollary}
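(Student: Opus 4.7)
The plan is to derive the isomorphism from the comparison Lemma~\ref{lem:radj} applied twice, exploiting the crucial fact that $\mathcal{M}$-pushouts are also pullbacks.

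First I would observe that, since $m \in \mathcal{M}$ and $\mathcal{M}$ is stable under pushouts, both $k_1$ and $k_2$ belong to $\mathcal{M}$, hence they are monos. Moreover, by Proposition~\ref{prop:pbpoad}, both pushout complement squares are simultaneously pullbacks. In particular, the second square exhibits $(X, m, h_2)$ as the pullback of the cospan $Y \xrightarrow{f} Z \xleftarrow{k_2} W_2$, and symmetrically for the first.

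Next I would apply Lemma~\ref{lem:radj} to the first pushout square (playing the role of the left square in the statement of the lemma, with $m$ as the $\mathcal{M}$-arrow) and to the pullback of $f$ along $k_2$ (playing the role of the right square, with $n := k_2$, $p_1 := h_2$, $p_2 := m$). Since $p_2 = m \leq m$ trivially, the lemma yields $k_2 \leq k_1$. Exchanging the roles of the two pushout squares gives, symmetrically, $k_1 \leq k_2$. Therefore there exists a unique isomorphism $\phi\colon W_1 \to W_2$ such that $k_2 \circ \phi = k_1$.

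It then remains to check that $\phi \circ h_1 = h_2$. For this, I would use that the second pushout square is a pullback: the arrow $\phi \circ h_1 \colon X \to W_2$ satisfies $k_2 \circ (\phi \circ h_1) = k_1 \circ h_1 = f \circ m$, so the pair $(\phi\circ h_1, m)$ induces a unique mediating arrow $u\colon X \to X$ into the pullback, with $h_2 \circ u = \phi \circ h_1$ and $m \circ u = m$. Since $m$ is mono, $u = \id{X}$, hence $\phi \circ h_1 = h_2$. Uniqueness of $\phi$ subject to the two commutativity conditions is immediate from $k_2$ being mono: any $\phi'$ with $k_2 \circ \phi' = k_1$ must equal $\phi$.

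The proof is essentially routine once the two key ingredients are in place (pushouts being pullbacks and the comparison lemma); I do not foresee a real obstacle, the only subtle point being to set up the instance of Lemma~\ref{lem:radj} correctly by identifying the right-hand pullback square with the second pushout complement seen as a pullback.
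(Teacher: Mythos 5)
Your proof is correct and follows exactly the route the paper intends: the corollary is stated as a direct consequence of Lemma~\ref{lem:radj}, and you instantiate that lemma correctly by using Proposition~\ref{prop:pbpoad} to regard the second pushout complement as a pullback (with $n:=k_2\in\mathcal{M}$ by pushout-stability and $p_2:=m$), obtaining $k_2\leq k_1$ and, by symmetry, $k_1\leq k_2$, whence the comparison arrow is an isomorphism. The remaining verifications ($\phi\circ h_1=h_2$ via the pullback property, or even just monicity of $k_2$, and uniqueness of $\phi$) are carried out correctly.
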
}
\parbox{3cm}{\vspace{1em}$\xymatrix{&X \ar@{>->}[r]^{m} \ar[d]_{h_1} \ar@/_.3cm/[dl]_{h_2}& Y \ar[d]^{f} \\ W_2 \ar@{>->}@/_.4cm/[rr]_{k_2} &W_1 \ar@{.>}[l]_{\phi} \ar@{>->}[r]^{k_1} & Z }$}

\section{$\mathcal{M}$-adhesivity is not enough}\label{app:fill}
In \Cref{sec:equi} we introduced the notion of strong enforcing left-linear rewriting system, a class that contains all the linear rewriting systems. This result can be further refined: in \cite{baldan2011adhesivity} a class  $\mathbb{B}$ of (quasi-)adhesive category is defined for which the local Church-Rosser Theorem holds for left-linear rewriting system. In our language, this means that every left-linear rewriting system based on a category in $\mathbb{B}$ is strong enforcing. This section is devoted to repropose, and slightly generalise, the results of \cite{baldan2011adhesivity} to our context.

\noindent
\parbox{10.4cm}{
	\begin{definition}Let $\mathcal{M}$ be a class of monos in a category $\X$, closed under composition, containing all isomorphisms and stable under pullbacks and pushouts. Suppose that the  diagram aside is given and the whole rectangle is a pushout. We say that $\X$ satisfies
		\parbox{14cm}{\begin{itemize}
				\item the \emph{$\mathcal{M}$-mixed decomposition} property if, whenever $k$ belongs to $\mathcal{M}$ and the right half of the previous diagram is a pullback, then the left one is a pushout;
				\item the \emph{$\mathcal{M}$-pushout decomposition} property if whenever $a, b$ and $c$ belongs to $\mathcal{M}$ and the right half of the diagram above is a pushout, then its left half is a pushout too.
			\end{itemize}}
	\end{definition}} \parbox{3cm}{\vspace{-2cm}$\xymatrix{X \ar[d]_{a} \ar[r]^{f}& \ar[r]^{g} Y \ar[d]^{b}& Z \ar[d]^{c}\\ A \ar[r]_{h}& B \ar[r]_{k}& C}$}\\

The class of categories of type $\mathbb{B}$ is closed under the same constructions of \Cref{thm:slice-functors}. This can be deduced at once from the fact that in such categories pullbacks and pushouts are computed component-wise.

\begin{lemma}\label{lem:closed} Let $\X$ be a category satisfying the $\mathcal{M}$-mixed and $\mathcal{M}$-pushout decomposition properties, then the following hold
	\begin{enumerate}
		\item  for every object $X$, $\X/X$ satisfies  the $\mathcal{M}/X$-mixed and the $\mathcal{M}/X$-pushout decomposition properties, while $X/\mathcal{M}$-adhesive satisfies their $X/\mathcal{M}$ variants, where
		\[\mathcal{M}/X:=\{m\in \mathcal{A}(\X/X) \mid m\in \mathcal{M} \}
		\qquad X/\mathcal{M}:=\{m\in \mathcal{A}(X/\X) \mid m\in
		\mathcal{M} \}
		\]
		\item  for every small
		category $\Y$, the category $\X^\Y$ satisfies the 
		$\mathcal{M}^{\Y}$-mixed and the $\mathcal{M}^{\Y}$-pushout decomposition properties, where
		\[\mathcal{M}^{\Y}:=\{\eta \in \mathcal{A}(\X^\Y) \mid \eta_Y \in
		\mathcal{M} \text{ for every } Y\in \Y\}\]
	\end{enumerate}
\end{lemma}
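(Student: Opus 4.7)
The plan is to reduce the statement, for each of the three constructions $\X/X$, $X/\X$, and $\X^\Y$, to the assumed decomposition properties of $\X$ itself. The reduction exploits the well-known fact that the forgetful functors $\X/X \to \X$ and $X/\X \to \X$ create (equivalently, preserve and reflect) pullbacks and pushouts, while in the functor category $\X^\Y$ both are computed pointwise. Moreover, the classes $\mathcal{M}/X$, $X/\mathcal{M}$, and $\mathcal{M}^\Y$ are, by their very definitions, characterised by the underlying arrows (or their components) lying in $\mathcal{M}$.

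For the slice $\X/X$, I would take a commutative rectangle in $\X/X$ satisfying the hypotheses of either the $\mathcal{M}/X$-mixed or the $\mathcal{M}/X$-pushout decomposition property and view it as a rectangle in $\X$ via the forgetful functor. The marked arrows belong to $\mathcal{M}$; the rectangle remains a pushout in $\X$; and the right square is either a pullback or a pushout as required. Applying the assumed decomposition property of $\X$ yields that the left square is a pushout in $\X$, and lifting back to the slice via the reflection of pushouts gives the conclusion in $\X/X$. The coslice case $X/\X$ is entirely symmetric, using the dual statements about how $X/\X \to \X$ interacts with limits and colimits.

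For the functor category $\X^\Y$ the reduction is pointwise: a square is a pullback (respectively, a pushout) in $\X^\Y$ if and only if each of its components is such in $\X$, and a natural transformation belongs to $\mathcal{M}^\Y$ if and only if each of its components belongs to $\mathcal{M}$. Hence the hypotheses transfer component by component, and applying the $\mathcal{M}$-mixed (respectively, $\mathcal{M}$-pushout) decomposition property of $\X$ at each object $Y \in \Y$ yields the conclusion componentwise, and thus in $\X^\Y$.

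No genuine obstacle is expected; the whole argument is bookkeeping on how morphisms and (co)limits interact with the slice, coslice, and pointwise constructions. The only minor verification worth flagging is that the classes $\mathcal{M}/X$, $X/\mathcal{M}$, and $\mathcal{M}^\Y$ themselves inherit the closure under composition, the containment of isomorphisms, and the stability under pullbacks and pushouts that are needed for the decomposition properties to even be meaningfully stated in the new categories; but this is immediate from the corresponding properties of $\mathcal{M}$ in $\X$ together with the componentwise nature of the relevant constructions.
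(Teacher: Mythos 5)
Your proposal is correct and follows essentially the same route as the paper, which dispatches this lemma with the single observation that pullbacks and pushouts in the slice, coslice, and functor categories are created by the forgetful functors (i.e.\ computed component-wise), so the decomposition properties transfer directly. Your additional remark about the classes $\mathcal{M}/X$, $X/\mathcal{M}$, and $\mathcal{M}^\Y$ inheriting the required closure properties is a reasonable piece of bookkeeping that the paper leaves implicit (it is already covered by Theorem~\ref{thm:slice-functors}).
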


The following result shows that the mixed and pushout decomposition properties guarantee that every independence pair is strong.

\begin{theorem}\label{thm:good1}Let $\X$ be an $\mathcal{M}$-adhesive category with all pushouts and satisfying the $\mathcal{M}$-mixed and the $\mathcal{M}$-pushout decomposition properties. Then every left-linear rewriting system on $\X$ is strong enforcing.
\end{theorem}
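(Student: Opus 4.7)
The plan is to verify directly the three conditions in the definition of a strong independence pair (\Cref{def:filler}) for an arbitrary independence pair $(i_0, i_1)$. The third condition, existence of the pushout of $r_1 : K_1 \to R_1$ and $u_1 : K_1 \to P$, is immediate from the hypothesis that $\X$ has all pushouts. The two pushout-square conditions will be obtained by chaining $\mathcal{M}$-mixed decomposition, iteration of pushouts (\Cref{lem:po1}), and $\mathcal{M}$-pushout decomposition. Throughout I would use that $f_1 \in \mathcal{M}$ (as a pushout of $l_1 \in \mathcal{M}$) and hence $p_0 \in \mathcal{M}$ (as a pullback of $f_1$).

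First, consider the composite $2 \times 3$ diagram
\[
  \xymatrix@R=12pt@C=16pt{
    K_0 \ar[r]^{u_0} \ar[d]_{r_0} & P \ar[r]^{p_0} \ar[d]_{p_1} & D_0 \ar[d]^{g_0} \\
    R_0 \ar[r]_{i_0} & D_1 \ar[r]_{f_1} & G_1
  }
\]
whose right square is the defining pullback of $P$ and whose outer rectangle, upon collapsing the middle column via $k_0 = p_0 \circ u_0$ and $h_0 = f_1 \circ i_0$, coincides with the right-hand pushout of $\dder{D}_0$. Since $f_1 \in \mathcal{M}$, the $\mathcal{M}$-mixed decomposition yields the left square as a pushout: this is exactly the first square of the strong independence pair.

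The pivotal intermediate move is to apply \Cref{lem:po1}(2) to the very same diagram: with the left square now known to be a pushout and the outer rectangle a pushout, the right square must be a pushout too. Hence the pullback of $g_0$ and $f_1$ is itself a pushout, a fact unavailable to a direct argument since $g_0$ need not belong to $\mathcal{M}$.

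With this promotion in hand, for the second square I would consider the companion diagram
\[
  \xymatrix@R=12pt@C=16pt{
    K_1 \ar[r]^{u_1} \ar[d]_{l_1} & P \ar[r]^{p_1} \ar[d]_{p_0} & D_1 \ar[d]^{f_1} \\
    L_1 \ar[r]_{i_1} & D_0 \ar[r]_{g_0} & G_1
  }
\]
whose right square is the same pullback just established to be a pushout, and whose outer rectangle, using $k_1 = p_1 \circ u_1$ and $m_1 = g_0 \circ i_1$, is the left-hand pushout of $\dder{D}_1$. Since $l_1$, $p_0$ and $f_1$ all lie in $\mathcal{M}$, the $\mathcal{M}$-pushout decomposition delivers the left square as a pushout, which is the second required square. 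The main obstacle to anticipate is precisely this bootstrap: a naive attempt to handle the second square by mixed decomposition would require $g_0 \in \mathcal{M}$, which fails for left-linear rules in general; the intermediate promotion of the pullback to a pushout via \Cref{lem:po1}(2) is the bridge that unlocks the subsequent pushout decomposition and closes the argument.
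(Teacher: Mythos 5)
Your proof is correct and follows essentially the same route as the paper: the same two $2\times 3$ rectangles, $\mathcal{M}$-mixed decomposition on the first to get the square $(u_0,r_0,p_1,i_0)$, and $\mathcal{M}$-pushout decomposition on the second to get $(u_1,l_1,p_0,i_1)$. The only difference is that you spell out the intermediate promotion of the pullback square of $P$ to a pushout via \Cref{lem:po1}, a step the paper's proof leaves implicit in its closing sentence.
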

\begin{proof}
	Consider the derivation $\der{D}=\{\dder{D}_i\}_{i=0}^1$, made by two sequentially independent derivations, depicted below
	\[\xymatrix@C=15pt{L_0 \ar[d]_{m_0}&& K_0 \ar[d]_{k_0}\ar@{>->}[ll]_{l_0} \ar[r]^{r_0} & R_0 \ar@/^.35cm/[drrr]|(.3)\hole_(.4){i_0} \ar[dr]|(.3)\hole_{h_0} && L_1 \ar@/_.35cm/[dlll]^(.4){i_1} \ar[dl]|(.3)\hole^{m_1}& K_1 \ar[d]^{k_1}\ar@{>->}[l]_{l_1} \ar[rr]^{r_1} && R_1 \ar[d]^{h_1} \\G_0 && \ar@{>->}[ll]^{f_0} D_0 \ar[rr]_{g_0}&& G_1  && \ar@{>->}[ll]^{f_1} D_1 \ar[rr]_{g_1}&& G_2 }\]

	We have to show that $(i_0, i_1)$ is a strong independence pair. Now, by hypothesis $\X$ has all pushouts, thus the only thing to show is that the squares below are pushouts (the third one is the usual pullback of $f_1\colon D_1\mto G_0$ along $g_0\colon D_0\to G_1$)
	\[\xymatrix{K_0 \ar[r]^{r_0}  \ar[d]_{u_0}& R_0 \ar[d]^{i_0} & K_1 \ar@{>->}[r]^{l_1}  \ar[d]_{u_1}& L_1 \ar[d]^{i_1}&P \ar@{>->}[r]^{p_0} \ar[d]_{p_1}& D_0\ar[d]^{g_0} \\ P \ar[r]_{p_1} & D_1 &P \ar[r]_{p_0}  & D_0&D_1 \ar@{>->}[r]_{f_1} & G_1}\]

	To see this, consider the following two diagrams
	\[\xymatrix{K_0 \ar@/^.4cm/[rr]^{k_0} \ar[d]_{r_0} \ar[r]_{u_0}& \ar@{>->}[r]_{p_0} P \ar[d]_{p_1}& D_0 \ar[d]^{g_0}&K_1 \ar@/^.4cm/[rr]^{k_1} \ar@{>->}[d]_{l_1} \ar[r]_{u_1}& \ar[r]_{p_1} P \ar@{>->}[d]_{p_0}& D_1 \ar@{>->}[d]^{f_1}\\ R_0 \ar@/_.4cm/[rr]_{h_0}  \ar[r]^{i_0}& D_1 \ar@{>->}[r]^{f_1}& G_1 &L_1 \ar@/_.4cm/[rr]_{m_1}  \ar[r]^{i_1}& D_0 \ar[r]^{g_0}& G_1}\]

	The thesis now follows from the $\mathcal{M}$-mixed and the $\mathcal{M}$-pushout decomposition property.
\end{proof}

Our next step is to identify sufficient conditions for a category $\X$ to satisfy the mixed and $\mathcal{M}$-pushout decomposition properties.

\begin{definition} 
	Let $\X$ be an  $\mathcal{M}$-adhesive category, the pair $(\X, \mathcal{M})$ is of \emph{type $\mathbb{B}$} if
	\begin{enumerate}
		\item every arrow in $\mathcal{M}$ is a coproduct coprojection;
		\item $\X$ has all pushouts;
		\item $\X$ has strict initial objects and, for every object $X$, the unique arrow $?_X\colon 0\to X $ belongs to $\mathcal{M}$;
		\item all pushouts are $\mathcal{M}$-stable.
	\end{enumerate}
\end{definition}

\begin{remark}
	It is worth to examine more closely conditions $1$ and $3$ of the above definition.
	\smallskip 
	\begin{itemize}
	\parbox{10.3cm}{\item Let $m_0\colon X_0 \rightarrowtail Y$ be an arrow in $\mathcal{M}$, the first condition means that there exists $m_1\colon X_1\to Y$ such that $(Y, \{m_i\}_{i=0}^1)$ is a coproduct. This, together with property $3$, entails that every coprojection in a coproduct is in $\mathcal{M}$. This follows since any coproduct $(X_1+X_2, \{\iota_{X_i}\}_{i=0}^1)$ fits in a pushout diagram as the one aside.}
	  \parbox{3cm}{\vspace{-0em}$\xymatrix@C=25pt{0  \ar@{>->}[r]^-{?_{X_1}} \ar@{>->}[d]^{?_{X_2}}& X_1 \ar@{>->}[d]^{\iota_{X_1}}\\ X_2 \ar@{>->}[r]_-{\iota_{X_2}} & X_1+X_2}$}
		
	\smallskip 	\noindent 
	\parbox{13.5cm}{\item $\X$ has strict initial objects if it has initial objects and every arrow $f:X\to 0$ is an isomorphism. Notice that if initial objects are strict, then $?_X\colon 0\to X$ is mono for every $X$: indeed for every pair $f,g\colon Y\rightrightarrows 0$ then, by strictness, $Y$ is initial and so $f=g$.}
	\end{itemize} 
\end{remark}

\begin{example}
The category $\Set$ of sets and functions, with its class of monos, is of type $\mathbb{B}$. Similarly, the category $\textbf{Inj}$ of sets and injective functions is quasiadhesive and, with its class of regular monos, of type $\mathbb{B}$. 
\end{example}

Categories of type $\mathbb{B}$ satisfy a property resembling \emph{extensivity} \cite{carboni1993introduction}.

\noindent
\parbox{10cm}{\begin{proposition}\label{prop:ext}
Let $(\X, \mathcal{M})$ be a pair of type $\mathbb{B}$. Then for every diagram as the one aside, in which the bottom row is a coproduct cocone and the vertical arrows are in $\mathcal{M}$, the top row is a coproduct if and only if the two squares are pullbacks.
\end{proposition}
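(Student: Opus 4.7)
The plan is to invoke the $\mathcal{M}$-Van Kampen property of the bottom coproduct, viewed as the pushout of the span $X_0 \xleftarrow{?_{X_0}} 0 \xrightarrow{?_{X_1}} X_1$. Since $(\X, \mathcal{M})$ is of type $\mathbb{B}$, the arrows $?_{X_i}$ lie in $\mathcal{M}$, so this is an $\mathcal{M}$-pushout and hence $\mathcal{M}$-Van Kampen by $\mathcal{M}$-adhesivity of $\X$. The idea is then to lift the given two-square diagram to a cube whose bottom face is this pushout, and to read off both directions of the biconditional from the Van Kampen condition applied to that cube.

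Concretely, I would place the initial object $0$ at the top-back-left corner, with $\id{0}$ as the back-left vertical arrow and the remaining vertical arrows taken to be $a_0$, $a_1$, $a$, all of which belong to $\mathcal{M}$ by hypothesis. The top-face arrows $0 \to A_i$ are the canonical $?_{A_i}$, while the other two top arrows are the given $f_0 \colon A_0 \to A$ and $f_1 \colon A_1 \to A$. The cube commutes: the front and right faces commute by assumption on the original diagram, the top face commutes because any two parallel arrows out of $0$ agree, and the remaining faces commute either by construction or because the bottom is a pushout. Moreover, the back and left faces of the cube have the shape
\[
\xymatrix{
0 \ar[r]^{?_{A_i}} \ar[d]_{\id{0}} & A_i \ar[d]^{a_i}\\
0 \ar[r]_{?_{X_i}} & X_i
}
\]
and are pullbacks: the pullback of $a_i$ along $?_{X_i}$ exists because $?_{X_i}\in \mathcal{M}$, and its apex carries an arrow to $0$, so by strictness of the initial object the apex is $0$ itself, with $?_{A_i}$ as the mediating arrow.

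With back and left faces pullbacks, all four vertical arrows in $\mathcal{M}$, and the bottom an $\mathcal{M}$-Van Kampen square, the $\mathcal{M}$-Van Kampen condition immediately yields that the top face is a pushout if and only if the front and right faces are pullbacks. The top face is the pushout of the span $A_0 \xleftarrow{?_{A_0}} 0 \xrightarrow{?_{A_1}} A_1$, which is exactly the statement that $(A, f_0, f_1)$ is a coproduct cocone on $A_0$ and $A_1$; and the front and right faces are the two squares of the given diagram. This proves both directions of the proposition simultaneously.

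The main subtlety is the verification that the back and left faces of the cube are pullbacks, which is where strictness of the initial object and the assumption $?_X \in \mathcal{M}$ for every $X$ enter in an essential way; without these a type-$\mathbb{B}$ hypothesis the reduction to Van Kampen would not go through. Once these ``initial'' pullback squares are in hand, the heavy lifting is entirely done by the $\mathcal{M}$-Van Kampen property of the bottom pushout and no further constructions are required.
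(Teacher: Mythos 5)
Your proof is correct and follows essentially the same route as the paper: both reduce the statement to the $\mathcal{M}$-Van Kampen property of the bottom coproduct viewed as an $\mathcal{M}$-pushout over the strict initial object. The only cosmetic difference is that the paper forms the top-back vertex $I$ as a pullback and then uses strictness to conclude $I$ is initial, whereas you place $0$ there directly and use strictness to verify the back and left faces are pullbacks; these are the same argument.
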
}
\parbox{2cm}{\vspace{-.4em}$\xymatrix{A\ar[r]^-{f} \ar@{>->}[d]_{r}& C  \ar@{>->}[d]_{s}& B \ar@{>->}[d]^{t}\ar[l]_-{g}\\X  \ar[r]_-{\iota_X}& X+Y & Y \ar[l]^-{\iota_Y}}$}

\noindent 
\parbox{4cm}{$\xymatrix@C=10pt@R=10pt{&I\ar@{>->}[dd]|\hole_(.65){a}\ar[rr]^{k} \ar[dl]_{h} && B \ar@{>->}[dd]^{t} \ar[dl]_{g} \\ A  \ar@{>->}[dd]_{r}\ar[rr]^(.7){f} & & C \ar@{>->}[dd]_(.3){s}\\&0\ar@{>->}[rr]|\hole^(.65){?_Y} \ar@{>->}[dl]^{?_X} && Y \ar@{>->}[dl]^{\iota_Y} \\X \ar@{>->}[rr]_{\iota_X} & & X+Y}$}
 \parbox{10cm}{\begin{proof}
Consider the cube aside, in which the back faces are pullbacks. The bottom faces is an $\mathcal{M}$-Van Kampen pushout, thus the top face is a pushout if and only if the front faces are pullbacks. By strictness of $0$, $a\colon I\to 0$ is an isomorphism, so that $I$ is initial, therefore the $\mathcal{M}$-Van Kampen condition reduces to the request that $(C, \{f,g\})$ is a coproduct cocone if and only if the front faces are pullbacks, as claimed.
\end{proof}}

The previous result entails the following one, needed to show that  in any pair $(\X, \mathcal{M})$ of type $\mathbb{B}$, the category $\X$ satisfies the $\mathcal{M}$-mixed and $\mathcal{M}$-pushout decomposition properties.

\noindent
\parbox{10.7cm}{\begin{proposition}\label{prop:po2}
		Let $(\X, \mathcal{M})$ be a pair of type $\mathbb{B}$, and suppose that the square aside is an $\mathcal{M}$-pullback. Then there exists $E\in \X$, $e\colon E\mto C$ in $\mathcal{M}$, $\phi:E\to B_1$ such that $(C, \{m, e\})$ is a coproduct and $g=f+\phi$. Moreover, such a square is a pushout if and only if $\phi$ is an isomorphism.
\end{proposition}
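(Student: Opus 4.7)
The strategy is to exploit the two distinctive features of pairs of type $\mathbb{B}$: every arrow in $\mathcal{M}$ is a coproduct coprojection, together with the extensivity-like Proposition~\ref{prop:ext}. I will denote the $\mathcal{M}$-pullback of the hypothesis by its two bottom arrows $g \colon B \to C$ (vertical) and $m \colon B_0 \mto C$ (horizontal), with left vertical $f$ and the top arrow coming from $\mathcal{M}$-stability of pullbacks under $\mathcal{M}$-arrows.

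First, since $m \in \mathcal{M}$, property~1 of pairs of type $\mathbb{B}$ yields an arrow $e \colon E \to C$ for which $(C, \{m, e\})$ is a coproduct cocone. By property~3, the unique arrow $?_E\colon 0 \to E$ belongs to $\mathcal{M}$, and then the remark following the definition of type $\mathbb{B}$ gives $e \in \mathcal{M}$. Thus both coprojections into $C$ live in $\mathcal{M}$.

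Second, I form the pullback of $g$ along $e$, producing an object $B_1$ with a mono into $B$ (by $\mathcal{M}$-stability of pullbacks) and a map $\phi$ to $E$. The given square, together with this new pullback, provides exactly the upper span in the hypothesis of Proposition~\ref{prop:ext} over the coproduct cocone $(C, \{m, e\})$. The ``only if'' direction of Proposition~\ref{prop:ext} then tells me that $B$ is a coproduct of the two pullback apexes (i.e.\ of the domain of the given $\mathcal{M}$-pullback and $B_1$); chasing the two pullback legs through this coproduct shows that $g$ is precisely the cotuple of $m \circ f$ and $e \circ \phi$, which in the cotuple-notation of the statement reads $g = f + \phi$.

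Third, for the ``moreover'' part: if $\phi$ is an isomorphism, the given square becomes, up to the isomorphism $\phi$, the canonical square comparing the coprojection $m\colon B_0 \to C$ with itself along $\id{B_0}$ and $g$; this is a pushout by a three-dimensional application of $\mathcal{M}$-Van Kampen to the cube whose bottom face is the coproduct cocone $(C,\{m,e\})$ read as an $\mathcal{M}$-pushout from $0$, using crucially that $?_{B_0}, ?_{E} \in \mathcal{M}$ by property~3. Conversely, if the square is a pushout, I build the canonical pushout produced by the decomposition $g = f + \phi$ (which is a pushout by the previous argument applied to the identity on $E$ in place of $\phi$), and then uniqueness of pushout complements, Corollary~\ref{lem:pocomp}, forces $\phi$ to be an isomorphism.

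The hard part will be the bookkeeping in the third step: setting up the correct $\mathcal{M}$-Van Kampen cube to deduce that the ``coprojection face'' is a pushout, and then recognising the given square as isomorphic to such a face through $\phi$. Once this is in place, the conversion to the ``only if'' direction is a standard uniqueness argument.
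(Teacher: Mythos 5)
The paper states Proposition~\ref{prop:po2} without proof, so there is no official argument to compare against; judged on its own terms, your proposal has a genuine gap at its central step. You decompose the domain of $g$ by applying Proposition~\ref{prop:ext} to the diagram whose bottom row is the coproduct cocone and whose vertical arrows are $f$, $g$ and $\phi$. But Proposition~\ref{prop:ext} explicitly requires all three vertical arrows to lie in $\mathcal{M}$, and here they are arbitrary morphisms ($g$ in particular is just the map being pulled back). The same defect recurs in your ``moreover'' step: the cube you build over the pushout presentation of the coproduct has $f$, $g$, $\phi$ as its vertical edges, so neither the $\mathcal{M}$-Van Kampen condition nor condition~4 of type $\mathbb{B}$ ($\mathcal{M}$-stability) applies to it. This is not bookkeeping but the actual mathematical content of the proposition: what is needed is that the coproduct $B_0+B_1$, viewed as a pushout over $0$, is stable under pullback along the \emph{arbitrary} arrow $g$, i.e., a full extensivity property. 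In the adhesive setting of \cite{baldan2011adhesivity} this is available because pushouts along monos are Van Kampen with no restriction on the vertical arrows; in a merely $\mathcal{M}$-adhesive category the restricted statements you cite do not deliver it, and your proof offers no substitute.

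Two smaller points. First, since you could not see the diagram, you placed $C$ at the cospan vertex of the pullback; in the paper the square is the pullback of $g\colon C\to B_0+B_1$ along the given coprojection $\iota_{B_0}$, so $B_1$ is part of the data (not an object you construct as a pullback), $m\colon A\mto C$ is the induced projection, and $\phi$ goes from the complement $E$ of $A$ in $C$ \emph{into} $B_1$, not the other way around. The overall shape of your argument transfers under this relabelling, so this is not where the problem lies. Second, in the converse of the ``moreover'' part, comparing the two pushouts of the span $(f,m)$ yields an isomorphism of the form $\id{B_0}+\phi$, and deducing that $\phi$ itself is invertible is again a small extensivity argument; here it can be completed legitimately, e.g.\ via Lemma~\ref{lem:varie}, because all arrows involved are in $\mathcal{M}$, but ``uniqueness of pushout complements'' alone does not finish it, since the two squares you compare do not share the same composable pair of arrows.
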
} \parbox{3cm}{\vspace{-.2cm}$\xymatrix{A  \ar[r]^{f} \ar@{>->}[d]_{m}& B_0 \ar@{>->}[d]^{\iota_{B_{0}}}\\ C\ar[r]_-{g} & B_0+B_1}$}

\begin{lemma}\label{lem:prop} Let $(\X, \mathcal{M})$ be a pair of type $\mathbb{B}$, then $\X$ satisfies the $\mathcal{M}$-mixed and $\mathcal{M}$-pushout decomposition properties.
\end{lemma}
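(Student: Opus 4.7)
The plan is to reduce both assertions to Proposition~\ref{prop:stab} and Proposition~\ref{prop:pbpoad}, exploiting the fact that in a type $\mathbb{B}$ pair every pushout is $\mathcal{M}$-stable (axiom (4) of the definition). Since the preliminary results do the heavy lifting, no substantial obstacle is anticipated; the main care point is just to verify that the class $\mathcal{A}$ in Proposition~\ref{prop:stab} can be instantiated to $\mathcal{M}$, which requires only that $\mathcal{M}$ be stable under pullbacks, a standing hypothesis.

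For the $\mathcal{M}$-mixed decomposition property, I would start from the given rectangle with the outer pushout, the right square a pullback, and $k \in \mathcal{M}$. By axiom (4) of type $\mathbb{B}$, the outer pushout is $\mathcal{M}$-stable. Since $\mathcal{M}$ is closed under pullbacks (inherited from $\mathcal{M}$-adhesivity), $k \in \mathcal{M}$ by hypothesis, and $k$ is automatically monic (as $\mathcal{M} \subseteq \mon(\X)$), all hypotheses of Proposition~\ref{prop:stab} are met with $\mathcal{A} = \mathcal{M}$. Applying it yields that both squares in the rectangle are pushouts, which is exactly the claim about the left square.

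For the $\mathcal{M}$-pushout decomposition property, I would first deduce $k \in \mathcal{M}$ as follows: the right square is, by hypothesis, a pushout of $b \in \mathcal{M}$ along $g$, so closure of $\mathcal{M}$ under pushouts gives $k \in \mathcal{M}$. Next, since the right square is an $\mathcal{M}$-pushout, Proposition~\ref{prop:pbpoad} shows that it is also a pullback. At this point the hypotheses of the $\mathcal{M}$-mixed decomposition property just established hold verbatim for the rectangle, so I can invoke it to conclude that the left square is a pushout.

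In summary, the proof is essentially a two-line chaining of existing lemmas: the $\mathcal{M}$-mixed case is immediate from $\mathcal{M}$-stability of all pushouts together with Proposition~\ref{prop:stab}, and the $\mathcal{M}$-pushout case reduces to the $\mathcal{M}$-mixed case via the pushout-implies-pullback property of Proposition~\ref{prop:pbpoad}. The only delicate point is to record that the pushout of $b \in \mathcal{M}$ lies in $\mathcal{M}$, which is part of the closure properties that $\mathcal{M}$ enjoys by assumption.
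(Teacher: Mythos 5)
Your treatment of the $\mathcal{M}$-mixed decomposition property is correct and coincides with the paper's one-line argument: axiom (4) of type $\mathbb{B}$ makes the outer pushout $\mathcal{M}$-stable, $\mathcal{M}$ is stable under pullbacks and consists of monos, so \Cref{prop:stab} applies directly.

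The reduction you propose for the $\mathcal{M}$-pushout decomposition property, however, breaks at its very first step. In the right-hand square, with top row $g\colon Y\to Z$, vertical arrows $b\colon Y\to B$ and $c\colon Z\to C$ and bottom row $k\colon B\to C$, the arrow obtained by pushing out $b\in\mathcal{M}$ along $g$ is $c$, not $k$: stability of $\mathcal{M}$ under pushouts transfers membership from $b$ to the \emph{parallel} arrow $c$ (which is already assumed to be in $\mathcal{M}$), and says nothing about $k$, which is the pushout of $g$ along $b$ — and $g$ is not assumed to lie in $\mathcal{M}$. In fact $k$ need not even be a monomorphism: in $\cat{Set}$ (which is of type $\mathbb{B}$ with $\mathcal{M}=\mon(\cat{Set})$) take $g\colon\{1,2\}\to\{*\}$ the constant map and $b$ the inclusion $\{1,2\}\hookrightarrow\{1,2,3\}$; the pushout is $\{*,3\}$ and $k$ identifies $1$ and $2$. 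Completing the rectangle with $f=\id{}$ and $a=b$ gives an instance of all the hypotheses of the $\mathcal{M}$-pushout decomposition property in which $k\notin\mathcal{M}$. Since both \Cref{prop:stab} and the mixed decomposition property require $k$ to be a mono in $\mathcal{M}$, the reduction cannot be repaired. This is exactly why the paper takes a different route for this half: using \Cref{prop:ext}, \Cref{prop:po2} and the type-$\mathbb{B}$ axiom that arrows of $\mathcal{M}$ are coproduct coprojections, it rewrites the rectangle so that the vertical arrows become coprojections and the bottom row becomes $f+\phi$ and $g+\varphi$; then \Cref{prop:po2} shows that $\varphi$ and $\varphi\circ\phi$ are isomorphisms, hence so is $\phi$, which by the same proposition makes the left square a pushout.
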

\begin{proof}$\mathcal{M}$-mixed decomposition property. This follows at once from \Cref{prop:stab}.
	
	\medskip \noindent
	\parbox{3cm}{$\xymatrix{X \ar[r]^f \ar@{>->}[d]_{\iota_X}& Y\ar[r]^g \ar@{>->}[d]^{\iota_Y} & Z\ar@{>->}[d]^{\iota_Z}\\X+A  \ar@/_.4cm/[rr]_{(g\circ f) + (\varphi \circ \phi)}\ar[r]^{f+\phi}& Y+B \ar[r]^{g+\varphi}& Z+C}$} \qquad \qquad  \qquad   \qquad \parbox{8cm}{ $\mathcal{M}$-pushout-decomposition property. Using \Cref{prop:ext,prop:po2} and the fact that arrows in $\mathcal{M}$ are coproduct coprojections, we can reduce to prove the property for diagrams as the one on the left. By hypothesis and \Cref{prop:po2}, $\varphi$ and $\varphi\circ \phi$ are both isomorphisms, therefore $\phi$ is an isomorphism too and we can conclude again using \Cref{prop:po2}.\qedhere 
}	
\end{proof}

\begin{definition} Let $\X$ be an $\mathcal{M}$-adhesive category, we say that the pair $(\X, \mathcal{M})$ is of \emph{type $\mathbb{B}^+$} if (at least) one of the following holds
	\begin{enumerate}
		\item $(\X, \mathcal{M})$ is of type $\mathbb{B}$;
	\item $\X$ is $\Y/Y$ and $\mathcal{M}=\mathcal{N}/Y$for some $(\Y, \mathcal{N})$ of type $\mathbb{B}^+$ and $Y\in \Y$;
 	\item $\X$ is $Y/\Y$ and $\mathcal{M}=Y/\mathcal{N}$for some $(\Y, \mathcal{N})$ of type $\mathbb{B}^+$ and $Y\in \Y$:
 	\item $\X$ is $\Y^{\A}$ and $\mathcal{M}=\mathcal{N}^{\A}$ for some category category $\A$ and $(\Y, \mathcal{N})$ of type $\mathbb{B}^+$.
	\end{enumerate}
\end{definition}
\begin{example}
  \label{ex:graph-type-bb}
	The pair $(\mathbf{Graph}, \mon(\mathbf{Graph}))$ made by the topos of graphs and its class of monos is of type $\mathbb{B}^+$  but not of type $\mathbb{B}$. Indeed, not every monomorphism of graphs is a coproduct coprojection.
\end{example}

From \Cref{lem:closed,lem:prop} we can now deduce at once the following.
\begin{corollary}
	For every $(\X, \mathcal{M})$ of type $\mathbb{B}^+$, the category $\X$ has the $\mathcal{M}$-mixed and $\mathcal{M}$-pushout decomposition properties.
\end{corollary}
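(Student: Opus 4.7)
The proof proceeds by structural induction on the inductive definition of type $\mathbb{B}^+$. The strategy is simply to combine the base case provided by \Cref{lem:prop} with the three closure properties furnished by \Cref{lem:closed}.

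For the base case, suppose $(\X, \mathcal{M})$ is of type $\mathbb{B}$. Then \Cref{lem:prop} directly states that $\X$ satisfies both the $\mathcal{M}$-mixed decomposition property and the $\mathcal{M}$-pushout decomposition property, so there is nothing further to do.

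For the inductive step, assume the statement holds for $(\Y, \mathcal{N})$ of type $\mathbb{B}^+$, namely that $\Y$ enjoys the $\mathcal{N}$-mixed and $\mathcal{N}$-pushout decomposition properties. We have three cases corresponding to the three constructions in the definition of type $\mathbb{B}^+$: $(\X, \mathcal{M}) = (\Y/Y, \mathcal{N}/Y)$, $(\X, \mathcal{M}) = (Y/\Y, Y/\mathcal{N})$, and $(\X, \mathcal{M}) = (\Y^{\A}, \mathcal{N}^{\A})$. In each case, the corresponding item of \Cref{lem:closed} transfers the two decomposition properties from $(\Y, \mathcal{N})$ to $(\X, \mathcal{M})$. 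Putting together the base case and the three inductive cases yields the claim for every $(\X, \mathcal{M})$ of type $\mathbb{B}^+$.

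There is no real obstacle here, since both the base case and the closure properties have already been established: the corollary is essentially an announcement that \Cref{lem:closed} and \Cref{lem:prop} combine to cover the whole inductive family. The only thing to be careful about is that the recursion in the definition of type $\mathbb{B}^+$ is well-founded (each constructor strictly refers to a smaller parameter $(\Y, \mathcal{N})$), which justifies the induction.
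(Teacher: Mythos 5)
Your proof is correct and matches the paper's argument: the paper deduces this corollary ``at once'' from \Cref{lem:closed} and \Cref{lem:prop}, which is precisely the structural induction on the definition of type $\mathbb{B}^+$ that you spell out (base case from \Cref{lem:prop}, closure under slice, coslice and functor categories from \Cref{lem:closed}). Nothing is missing.
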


Using \Cref{thm:good1} we finally get the main result of our appendix.
\begin{corollary}
  \label{cor:llr-is-strEnf}
Let $(\X, \mathcal{M})$ be a pair of type $\mathbb{B}^+$, then every left-linear rewriting system is strong enforcing.
\end{corollary}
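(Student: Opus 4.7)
The plan is to derive this as an almost immediate consequence of Theorem~\ref{thm:good1}, once the hypotheses of that theorem are shown to be preserved by the inductive construction defining type $\mathbb{B}^+$. Recall that Theorem~\ref{thm:good1} requires three things of the category $\X$: (i) it is $\mathcal{M}$-adhesive, (ii) it has all pushouts, and (iii) it satisfies the $\mathcal{M}$-mixed and $\mathcal{M}$-pushout decomposition properties. Item (iii) has just been established for every type $\mathbb{B}^+$ pair by the preceding corollary, so the real work is to check (i) and (ii) by induction on the clauses defining type $\mathbb{B}^+$.

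First I would set up the induction. In the base case, $(\X,\mathcal{M})$ is of type $\mathbb{B}$: then $\mathcal{M}$-adhesivity is part of the definition and the existence of all pushouts is condition~$2$ of type $\mathbb{B}$. For the inductive clauses, I would appeal to Theorem~\ref{thm:slice-functors}, which tells us that $\mathcal{M}$-adhesivity is preserved under passing to slices $\Y/Y$ with class $\mathcal{N}/Y$, to coslices $Y/\Y$ with class $Y/\mathcal{N}$, and to functor categories $\Y^{\A}$ with class $\mathcal{N}^{\A}$. For the existence of all pushouts the argument is standard: pushouts in $\Y/Y$ and $Y/\Y$ are created by the forgetful functor to $\Y$, and pushouts in $\Y^{\A}$ are computed pointwise in $\Y$; hence all of these categories inherit the existence of arbitrary pushouts from $\Y$.

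Combining these observations, by induction on the construction of $(\X,\mathcal{M})$ as a pair of type $\mathbb{B}^+$, the category $\X$ is $\mathcal{M}$-adhesive and has all pushouts, and it satisfies both decomposition properties by the previous corollary. An application of Theorem~\ref{thm:good1} then yields that every left-linear rewriting system over $(\X,\mathcal{M})$ is strong enforcing, which is the desired conclusion.

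The only mildly delicate point is the bookkeeping that pushouts in slice/coslice/functor categories really are inherited in the form required by the decomposition properties, but since those properties have already been verified for type $\mathbb{B}^+$ in the previous lemma and corollary, this reduces to the plain existence statement, which is routine. So I do not expect any genuine obstacle: the proof is essentially a one-line invocation of Theorem~\ref{thm:good1} together with the previous corollary, once the existence of pushouts and $\mathcal{M}$-adhesivity have been tracked through the closure clauses.
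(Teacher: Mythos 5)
Your proposal is correct and follows essentially the same route as the paper: the corollary is obtained by combining Theorem~\ref{thm:good1} with the preceding corollary stating that every pair of type $\mathbb{B}^+$ enjoys the $\mathcal{M}$-mixed and $\mathcal{M}$-pushout decomposition properties. The extra bookkeeping you carry out (that $\mathcal{M}$-adhesivity and the existence of all pushouts are inherited through the slice, coslice and functor-category clauses) is accurate and merely makes explicit what the paper leaves implicit.
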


\section{Omitted proofs}\label{omitted}
In this appendix we provide the proofs of the main results of this paper. 

\subsection{Proofs for \Cref{sec:ade}}

\begin{proposition}
  \label{prop:unique}
  Let $\X$ be an $\mathcal{M}$-adhesive category. Suppose that the two
  direct derivations $\dder{D}$ and $\dder{D}'$ below, with the same
  match and applying the same left-linear rule $\rho$, are given
  \[\xymatrix{L \ar[d]_{m}& K \ar[d]^{k}\ar@{>->}[l]_{l} \ar[r]^{r} &
      R \ar[d]^{h} & L \ar[d]_{m}& K \ar[d]^{k'}\ar@{>->}[l]_{l}
      \ar[r]^{r} & R \ar[d]^{h'}\\G & \ar@{>->}[l]^{f} D \ar[r]_{g}& H
      & G & \ar@{>->}[l]^{f'} D' \ar[r]_{g'}& H'}\]
  Then there is an abstraction equivalence between $\dder{D}$ and
  $\dder{D}'$, whose first component is $\id{G}$.
\end{proposition}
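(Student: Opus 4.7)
The plan is to construct the required isomorphisms $\phi_D \colon D \to D'$ and $\phi_H \colon H \to H'$ via the universal properties of the two pushout squares, while setting $\phi_G := \id{G}$.

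First I would focus on the left square. Both $(f, k)$ and $(f', k')$ are pushout complements of the composable pair $l \colon K \mto L$ and $m \colon L \to G$, and $l$ belongs to $\mathcal{M}$ since $\rho$ is left $\mathcal{M}$-linear. Hence Corollary \ref{lem:pocomp} (uniqueness of pushout complements) yields a unique isomorphism $\phi_D \colon D \to D'$ satisfying $f' \circ \phi_D = f$ and $\phi_D \circ k = k'$. This simultaneously gives commutativity of the two squares involving $f, f', k, k'$ required by the abstraction equivalence.

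Next I would build $\phi_H$ using the right pushout. Since $\phi_D \circ k = k'$, the outer square with vertices $K, R, D', H'$ and arrows $r, g' \circ \phi_D, h'$ commutes: indeed $g' \circ \phi_D \circ k = g' \circ k' = h' \circ r$ because $\dder{D}'$ is a direct derivation. Since the right square of $\dder{D}$ is a pushout of $r$ and $k$, there is a unique $\phi_H \colon H \to H'$ such that $\phi_H \circ h = h'$ and $\phi_H \circ g = g' \circ \phi_D$. The latter equality is exactly the commutativity of the right-hand square of the abstraction equivalence (recalling $\phi_G = \id G$, hence both squares involving $g, g'$ and $h, h'$ commute in the required sense).

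It remains to show $\phi_H$ is an isomorphism. The cleanest way is to run the symmetric argument starting from $\dder{D}'$: the inverse of $\phi_D$ gives, by the universal property of the pushout forming $H'$, a unique arrow $\psi_H \colon H' \to H$ with $\psi_H \circ h' = h$ and $\psi_H \circ g' = g \circ \phi_D^{-1}$. Then $\psi_H \circ \phi_H$ and $\id H$ both mediate the pushout cocone $(g, h)$ out of the pushout forming $H$, so uniqueness forces $\psi_H \circ \phi_H = \id H$; symmetrically $\phi_H \circ \psi_H = \id{H'}$.

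The only mildly delicate step is invoking uniqueness of pushout complements, which requires $l \in \mathcal{M}$ and hence relies on left $\mathcal{M}$-linearity of $\rho$; everything else is a routine pushout chase. Collecting $\phi_G = \id G$, $\phi_D$, and $\phi_H$ yields the desired abstraction equivalence.
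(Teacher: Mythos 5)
Your proof is correct and follows essentially the same route as the paper: uniqueness of pushout complements (Corollary~\ref{lem:pocomp}) gives $\phi_D$, and the universal property of the right-hand pushout, after checking $g'\circ\phi_D\circ k = h'\circ r$, gives $\phi_H$. The only (immaterial) difference is the last step: the paper concludes that $\phi_H$ is an isomorphism by pushout pasting (the square on $D, D', H, H'$ is a pushout of the isomorphism $\phi_D$), whereas you construct the inverse explicitly via the symmetric mediating arrow and uniqueness of mediating morphisms --- both are standard and valid.
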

\begin{proof}
  Both pairs $(k, f)$ and $(k', f')$ are pushout complements of $l$
  and $n$, thus, \Cref{lem:pocomp} yields an isomorphism
  $\phi_D\colon D\to D'$. An easy computation shows that $g'\circ \phi_D \circ k=h'\circ r$.
  \noindent
  \parbox{3cm}{
    $\xymatrix{K \ar@/^.4cm/[rr]^{k'}\ar[d]_{r} \ar[r]_{k}&
      \ar[r]_{\phi_D} D \ar[d]^{g}& D' \ar[d]^{g'}\\ R
      \ar@/_.4cm/[rr]_{h'} \ar[r]^{h}& H \ar[r]^{\phi_H}& H'}$}
  \hfill
  \parbox{10cm}{ \hspace{15pt}Hence, we have
    $\phi_H\colon H\to H'$. To see that $\phi_H$ is an isomorphism,
    consider the diagram aside. By hypothesis the whole rectangle and
    its left half are pushouts, therefore, by \Cref{lem:po1} its right
    square is a pushout too. The claim now follows from the fact that
    the pushout of an isomorphism is an isomorphism. \qedhere }
\end{proof}

\subsection{Proofs for \Cref{sec:equi}}
This section provides further details and proofs for the concepts introduced in \Cref{sec:equi}.

\subsubsection{Proofs for \Cref{subsec:switch}}
Consider a switch $\der{E}=\{\dder{E}_i\}_{i=0}^1$ for $\dder{D}_0$
and $\dder{D}_1$, then by definition $\dder{E}_0$ and $\dder{E}_1$ are
sequentially independent. We can then wonder if they are switchable.

\begin{proposition}
  \label{prop:switch}
  Let $\der{D}=\{\dder{D}_i\}_{i=0}^1$ be a derivation and suppose
  that $\der{E}=\{\dder{E}_i\}_{i=0}^1$ is a switch for it. Then
  $\der{D}$ is a switch for $\dder{E}_0$ and $\dder{E}_1$ along
  $(i'_0, i'_1)$.
\end{proposition}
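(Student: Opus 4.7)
The plan is a direct unpacking of Definition~\ref{def:switch}, exploiting its symmetry in the roles of $\der{D}$ and $\der{E}$.

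First I would observe that the hypothesis already furnishes both ingredients required by Definition~\ref{def:switch} for $\der{D}$ to be a switch of $\der{E}$ along $(i_0',i_1')$: the pair $(i_0,i_1)$ is, by assumption, an independence pair between $\dder{D}_0$ and $\dder{D}_1$, and Definition~\ref{def:switch} provides $(i_0',i_1')$ as an independence pair between $\dder{E}_0$ and $\dder{E}_1$.

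Next I would verify the structural conditions of Definition~\ref{def:switch}: $\der{D}$ and $\der{E}$ share the source $G_0$ and target $G_2$, and since reversing a two-step sequence twice gives back the same sequence, $\der{D}$ applies the rules of $\der{E}$ in reverse order.

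Finally I would read off the four commutativity equations required for $\der{D}$ to be a switch of $\der{E}$ along $(i_0',i_1')$; after swapping the two roles, these equations take the form $m_0' = f_0 \circ i_1$, $h_1' = g_1 \circ i_0$, $m_0 = f_0' \circ i_1'$, $h_1 = g_1' \circ i_0'$, which are precisely the four equations holding by hypothesis because $\der{E}$ is a switch of $\der{D}$ along $(i_0,i_1)$. The only obstacle is bookkeeping: tracking carefully which morphism plays the role of match, co-match, pushout leg, or independence arrow once the roles of $\der{D}$ and $\der{E}$ are exchanged. No category-theoretic construction is needed, and $\mathcal{M}$-adhesivity is not invoked.
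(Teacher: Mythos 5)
Your proof is correct and follows essentially the same route as the paper's: both arguments simply unpack Definition~\ref{def:switch}, note that $(i_0,i_1)$ serves as the required independence pair between $\dder{D}_0$ and $\dder{D}_1$, and observe that the four equations needed for $\der{D}$ to be a switch of $\der{E}$ along $(i_0',i_1')$ are, after exchanging roles, precisely the four equations already guaranteed by the hypothesis.
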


\begin{proof}
 Clearly $\der{D}$ uses the same rule of $\der{E}$ in reverse order. Moreover, since $\der{E}$ is a switch for $\der{D}$, we already know that
   \begin{center}   
 	$m_0=f_0' \circ i_1'$
 	\qquad $h_1=g_1' \circ i_0'$
 	\qquad $m_0'= f_0 \circ i_1$
 	\qquad $h_1'= g_{1}\circ i_0$.
 \end{center}
Since  these equations are those that must hold for $\der{D}$ to be a switch for $\der{E}$ along $(i_0', i_1')$ we can conclude.
\end{proof}

\begin{lemma}[Uniqueness of switches]
  \label{thm:switch_uni}
  Let $\der{D}=\{\dder{D}_{i}\}_{i=0}^1$ be a derivation and suppose
  that both $\der{E}=\{\dder{E}_i\}_{i=0}^1$ and
  $\der{F}=\{\dder{F}_i\}_{i=0}^1$ are switches of $\dder{D}_0$ and
  $\dder{D}_1$. Then $\der{E}$ and $\der{F}$ are abstraction
  equivalent.
\end{lemma}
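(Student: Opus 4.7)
The strategy is to apply Proposition~\ref{prop:unique} successively to the two direct derivations forming the switches, and to glue the resulting isomorphisms into a single abstraction equivalence. Write $G_j^E, D_j^E, m_j^E, (i_0^E, i_1^E)$ for source, context, match and independence pair of $\dder{E}_j$, and analogously decorate $\der{F}$ with superscript $F$; both switches have $G_0$ as source and $G_2$ as target.

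Since Definition~\ref{def:switch} forces $m_0^E = f_0 \circ i_1 = m_0^F$, the first direct derivations $\dder{E}_0$ and $\dder{F}_0$ apply the rule $\rho_1$ to $G_0$ with the same match. Proposition~\ref{prop:unique} supplies isomorphisms $\phi_{D_0}\colon D_0^E \to D_0^F$ and $\phi_{G_1}\colon G_1^E \to G_1^F$ (with identity on $G_0$) satisfying $f_0^F \circ \phi_{D_0} = f_0^E$ and $g_0^F \circ \phi_{D_0} = \phi_{G_1} \circ g_0^E$.

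The key technical point is to transport this coherence to the second step: one must verify $m_1^F = \phi_{G_1} \circ m_1^E$. The matches of the second step are fixed by the independence pairs to be $m_1^E = g_0^E \circ i_1^E$ and $m_1^F = g_0^F \circ i_1^F$. From the switch constraint $f_0^E \circ i_1^E = m_0 = f_0^F \circ i_1^F$ combined with $f_0^F \circ \phi_{D_0} = f_0^E$, one gets $f_0^F \circ (\phi_{D_0} \circ i_1^E) = f_0^F \circ i_1^F$. Now $f_0^F$, being a pushout of $l_1 \in \mathcal{M}$ and hence itself in $\mathcal{M}$, is monic, so $\phi_{D_0} \circ i_1^E = i_1^F$; postcomposing with $g_0^F$ and using the first-step compatibility then yields the desired equality of matches.

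With the second-step matches aligned by $\phi_{G_1}$, a second appeal to Proposition~\ref{prop:unique}---applied modulo pre-composition with $\phi_{G_1}$ to identify the source $G_1^E$ of $\dder{E}_1$ with the source $G_1^F$ of $\dder{F}_1$---supplies the remaining isomorphisms $\phi_{D_1}\colon D_1^E \to D_1^F$ and $\phi_{G_2}\colon G_2 \to G_2$. The family $\{\id{G_0}, \phi_{D_0}, \phi_{G_1}, \phi_{D_1}, \phi_{G_2}\}$ then constitutes the required abstraction equivalence between $\der{E}$ and $\der{F}$. The main obstacle is the monic-cancellation argument for $\phi_{D_0} \circ i_1^E = i_1^F$: this is exactly the place where left-linearity (hence $\mathcal{M}$-membership of $l_1$) is used, and it is instructive that no symmetric argument is needed on the co-match side, since $h_1^E = g_1 \circ i_0 = h_1^F$ is already a literal equality of arrows into the common target $G_2$.
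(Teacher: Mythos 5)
Your proof is correct and follows essentially the same route as the paper's: both observe that the two switches share the first match $f_0\circ i_1$ and the last co-match $g_1\circ i_0$, apply Proposition~\ref{prop:unique} to the first steps, transfer coherence to the second steps via the cancellation $\phi_{D_0}\circ i_1^E=i_1^F$ (using that $f_0^F$, as a pushout of $l_1\in\mathcal{M}$, is monic), and then apply Proposition~\ref{prop:unique} again. The only difference is that the paper closes with the additional verification $\phi_{D_1}\circ i_0^E=i_0^F$, showing the isomorphisms also identify the independence pairs of the two switches; this is not required by the definition of abstraction equivalence, so its omission is not a gap in your argument.
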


\begin{proof}
    By definition of switch $\der{E}_0$ and $\der{F}_0$ have the same match $f_0\circ i_1$ and that $\der{E}_1$ and $\der{F}_1$ have the same comatch $g_1\circ i_0$. Thus we get the solid part of the diagram below
    \[\xymatrix@C=22pt{G_{0} && \ar@{>->}[ll]_{f_{\der{F},0}} D_{\der{F},0} \ar[rr]^{g_{\der{F},0}}&& G_{\der{F},1} && \ar@{>->}[ll]_{f_{\der{F},1}} D_{\der{F},1} \ar[rr]^{g_{\der{F},1}}&& G_{2}\\L_1 \ar[d]^{f_0\circ i_1} \ar[u]_{f_0\circ i_1}&& K_{1} \ar[d]^{k_{\der{E},0}} \ar[u]_{k_{\der{F},0}}\ar@{>->}[ll]_{l_1} \ar[r]^{r_1} & R_{1} \ar@/^.35cm/[drrr]|(.3)\hole_(.4){j_0} \ar[dr]|(.3)\hole_{h_{\der{E},0}} \ar@/_.35cm/[urrr]|(.3)\hole^(.4){a_0} \ar[ur]|(.3)\hole^{h_{\der{F},0}}& &L_{0} \ar@/_.35cm/[dlll]^(.4){j_1} \ar[dl]|(.3)\hole^{m_{\der{E},1}} \ar@/^.35cm/[ulll]_(.4){a_1} \ar[ul]|(.3)\hole_{m_{\der{F},1}}& K_{0} \ar[d]_{k_{\der{E},1}} \ar[u]^{k_{\der{F},1}}\ar@{>->}[l]_{l_{0}} \ar[rr]^{r_0} && R_{0} \ar[d]_{g_1\circ i_0} \ar[u]^{g_1\circ i_0}\\ \ar@/^.5cm/[uu]^{\id{G_0}} G_{0} && \ar[ll]^{f_{\der{E},0}} D_{\der{E},0} \ar@{.>}@/^.5cm/[uu]^(.33){\phi_{0}}|\hole \ar[rr]_{g_{\der{E},0}}&& G_{\der{E},1} \ar@{.>}[uu]^(.76){\psi_1}|(.45)\hole |(.55)\hole&& \ar@{>->}[ll]^{f_{\der{E},1}} D_{\der{E},1} \ar@{.>}@/_.5cm/[uu]_(.33){\phi_{1}}|\hole\ar[rr]_{g_{\der{E},1}}&& G_{2}\ar@{.>}@/_.5cm/[uu]_{\psi_2}}\]
    
    \parbox{7cm}{\hspace{15pt}Now, by \Cref{prop:unique} we get $\phi_0\colon D_{\der{E},0}\to D_{\der{F},0}$ and $\psi_1 \colon G_{\der{E},1}\to G_{\der{F},1}$ such that the diagram aside commutes. But then we have
    	\[
    	f_{\der{F},0}\circ \phi_0\circ j_1=f_{\der{E},0} \circ j_1=m_0=f_{\der{F},0}\circ a_1\]
    	which entails $a_1= \phi_0\circ j_1$.}
\parbox{4cm} {$\xymatrix{G_{0} & \ar@{>->}[l]_{f_{\der{F},0}} D_{\der{F},0} \ar[r]^{g_{\der{F},0}}& G_{\der{F},1} \\L_1 \ar[d]^{f_0\circ i_1} \ar[u]_{f_0\circ i_1}& K_{1} \ar[d]^{k_{\der{E},0}} \ar[u]_{k_{\der{F},0}}\ar[l]_{l_1} \ar[r]^{r_1} & R_{1}  \ar[d]_{h_{\der{E},0}}  \ar[u]^{h_{\der{F},0}}\\ \ar@/^.5cm/[uu]^{\id{G_0}} G_{0} & \ar@{>->}[l]^{f_{\der{E},0}} D_{\der{E},0} \ar@/^.5cm/[uu]^(.33){\phi_{0}}|\hole \ar[r]_{g_{\der{E},0}}& G_{\der{E},1} \ar@/_.5cm/[uu]_{\psi_1}}$}
    
     This, in turn, gives us that
    \[\psi\circ m_{\der{E},1}=\psi \circ g_{\der{E},0}\circ j_1=g_{\der{F},0}\circ \phi_0\circ j_1=g_{\der{F},0}\circ a_1=m_{\der{F},1}\]
    
    Now, the square on the left in the diagram below is a pushout and $\psi_1\circ m_{\der{E},1}=m_{\der{F},1}$. Thus, by \Cref{prop:unique} there are the dotted $\phi_1\colon D_{\der{E},1}\to D_{\der{F},1}$ and $\psi_2\colon G_2\to G_2$ in the diagram on the right
    \[\xymatrix{&&G_{\der{F},1} & \ar@{>->}[l]_{f_{\der{F},1}} D_{\der{F},1} \ar[r]^{g_{\der{F},1}}& G_2 \\K_0 \ar[d]_{k_{\der{E,1}}} \ar@{>->}[r]^{l_0}& L_0 \ar[d]^{\psi_1\circ m_{\der{E},1}}&L_0 \ar[d]^{m_{\der{E},1}} \ar[u]_{m_{\der{F},1}}& K_{1} \ar[d]^{m_{\der{E},1}} \ar[u]_{m_{\der{F},1}}\ar@{>->}[l]_{l_1} \ar[r]^{r_1} & R_{1}  \ar[d]_{g_1\circ i_0}  \ar[u]^{g_1\circ i_0}\\D_{\der{E},1} \ar@{>->}[r]_{\psi_1\circ f_{\der{E},1}}& G_{\der{F}, 1}& \ar@/^.5cm/[uu]^{\psi_1} G_{\der{E},1} & \ar@{>->}[l]^{f_{\der{E},1}} D_{\der{E},1} \ar@{.>}@/^.5cm/[uu]^(.33){\phi_{1}}|\hole \ar[r]_{g_{\der{E},1}}& G_2 \ar@{.>}@/_.5cm/[uu]_{\psi_2}}\]
    
    To conclude, it is now enough to notice that
    \[f_{\der{F},1} \circ \phi_1\circ j_0=\psi_1\circ f_{\der{E},1}\circ j_0=\psi_1\circ h_{\der{E,0}}=h_{\der{F},0}=f_{\der{F},1}\circ a_0\]
    allowing us to deduce $ \phi_1\circ j_0=a_0$.
    \qedhere
\end{proof}

\begin{proposition}
	\label{prop:tec}
	Let $(\X,\R)$ be a left-linear rewriting system and
	$(i_0, i_1)$ an independence pair between two direct
	derivations $\dder{D}_0$ and $\dder{D}_1$. Consider the pullback of
	$f_1\colon D_1\mto G_1$ along $g_0\colon D_0\to G_1$ (first
	square below), then there exist the arrows
	$u_1\colon K_1\to P$ and $u_0\colon K_0\to P$ fitting in the
	central and right square. Moreover, the right one is always
	a pullback.
	\[\xymatrix@R=15pt{P \ar@{>->}[r]^{p_0} \ar[d]_{p_1} & D_0
		\ar[d]^{g_0}& K_0 \ar[r]^{r_0} \ar@{.>}[d]_{u_0}& R_0
		\ar[d]^{i_0} & K_1 \ar@{>->}[r]^{l_1} \ar@{.>}[d]_{u_1}&
		L_1 \ar[d]^{i_1}\\ D_1 \ar@{>->}[r]_{f_1} &G_1 &P
		\ar[r]_{p_1} & D_1 &P \ar[r]_{p_0} & D_0}
	\]
\end{proposition}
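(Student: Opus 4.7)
The plan is to construct $u_0$ and $u_1$ via the universal property of the pullback $P$, and then to deduce that the right square is a pullback by pullback cancellation (Lemma~\ref{lem:popb1}).

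For $u_0$, observe that the pushout square defining $\dder{D}_0$ gives $g_0\circ k_0=h_0\circ r_0$, while the independence pair yields $f_1\circ i_0=h_0$. Composing, we obtain $g_0\circ k_0=f_1\circ i_0\circ r_0$, so the pair $(k_0,i_0\circ r_0)$ is a cone over $f_1,g_0$. The universal property of the pullback $P$ then produces a unique $u_0\colon K_0\to P$ with $p_0\circ u_0=k_0$ and $p_1\circ u_0=i_0\circ r_0$. In particular, the central square $r_0,i_0,u_0,p_1$ commutes by construction. Symmetrically, for $u_1$, the pushout square defining $\dder{D}_1$ gives $f_1\circ k_1=m_1\circ l_1$, and from the independence pair $g_0\circ i_1=m_1$. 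Hence $f_1\circ k_1=g_0\circ i_1\circ l_1$, and the pullback yields a unique $u_1\colon K_1\to P$ with $p_1\circ u_1=k_1$ and $p_0\circ u_1=i_1\circ l_1$. This makes the right square $l_1,i_1,u_1,p_0$ commute.

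It remains to show that the right square is always a pullback. Consider the vertical composite
\[
\xymatrix@R=13pt{K_1\ar[r]^{l_1}\ar[d]_{u_1}& L_1\ar[d]^{i_1}\\ P\ar[r]^{p_0}\ar[d]_{p_1}& D_0\ar[d]^{g_0}\\ D_1\ar[r]_{f_1}& G_1}
\]
Using $p_1\circ u_1=k_1$ and $g_0\circ i_1=m_1$, the outer rectangle is
\[
\xymatrix@R=13pt{K_1\ar[r]^{l_1}\ar[d]_{k_1}& L_1\ar[d]^{m_1}\\ D_1\ar[r]_{f_1}& G_1}
\]
which is precisely the left-hand pushout square of the direct derivation $\dder{D}_1$. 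Since $l_1\in\mathcal{M}$, by Proposition~\ref{prop:pbpoad} this $\mathcal{M}$-pushout is also a pullback. The bottom square is a pullback by the very definition of $P$. Therefore, by the pullback cancellation part of Lemma~\ref{lem:popb1}, the top square --- i.e.~the right square of the statement --- is itself a pullback, as required.

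There is no serious obstacle here: the entire argument is a routine application of the universal property of pullbacks together with the fact that in $\mathcal{M}$-adhesive categories, pushouts along $\mathcal{M}$-arrows are pullbacks. The only point to be careful with is that the commutativity conditions supplied by the independence pair $(i_0,i_1)$ are exactly what is needed to build the cones into $P$.
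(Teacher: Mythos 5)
Your proof is correct and follows essentially the same route as the paper: both construct $u_0$ and $u_1$ from the universal property of the pullback $P$ using exactly the identities $g_0\circ k_0=h_0\circ r_0=f_1\circ i_0\circ r_0$ and $f_1\circ k_1=m_1\circ l_1=g_0\circ i_1\circ l_1$, and both then obtain the pullback claim by noting that the composite of the right square with the defining pullback of $P$ is the left pushout square of $\dder{D}_1$, which is a pullback by Proposition~\ref{prop:pbpoad}, so pullback cancellation (Lemma~\ref{lem:pb1}) applies. No gaps.
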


\begin{proof}
	We start noticing that
	\[
	f_1\circ i_0\circ r_0=h_0\circ r_0=g_0\circ k_0 \qquad
	g_0\circ i_1\circ l_1=m_1\circ l_1= f_1 \circ k_1
	\]
	Thus there exists the dotted $u_0\colon K_0\to P$,
	$u_1\colon K_1\to P$ in the diagram
	\[\xymatrix@R=15pt{K_0 \ar[d]_{r_0}
		\ar@{.>}[r]_{u_0}\ar@/^.4cm/[rr]^{k_0} &P
		\ar@{>->}[r]_{p_0} \ar[d]_{p_1}& D_0 \ar[d]^{g_0}&K_1
		\ar@{>->}[d]_{l_1}
		\ar@{.>}[r]_{u_1}\ar@/^.4cm/[rr]^{k_1} &P \ar[r]_{p_1}
		\ar@{>->}[d]_{p_0}& D_1 \ar@{>->}[d]^{f_1}\\R_0
		\ar@/_.4cm/[rr]_{h_0}\ar[r]^{i_0}& D_1
		\ar@{>->}[r]^{f_1}& G_1&L_1
		\ar@/_.4cm/[rr]_{m_1}\ar[r]^{i_1}& D_0 \ar[r]^{g_0}&
		G_1}\]
	By \Cref{prop:pbpoad,lem:pb1} we get that the left half of
	the second rectangle is a pullback.
	\qedhere
\end{proof}

There is another
useful property of  sequentially independent direct
derivations that is worth mentioning.

\noindent
\parbox{9.5cm}
{\begin{proposition}\label{lem:cose}Let
    $\der{D}=\{\dder{D}_i\}_{i=0}^1$ be a derivation and suppose
    that $(i_0, i_1)$ is an independence pair between $\dder{D}_0$
    and $\der{D}_1$.  If $\der{E}=\{\dder{E}_i\}_{i=0}^1$ is a
    switch of $\der{D}$ along $(i_0, i_1)$, then there exists
    $q_0\colon P\to D_{\der{E},0}$ in the diagram
    aside. \end{proposition}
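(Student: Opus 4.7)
The strategy is to exhibit an explicit pushout complement built from $P$, making $q_0$ arise as a canonical coprojection, and then transfer this to the given $D_{\dder{E},0}$ via the uniqueness of pushout complements.

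Concretely, since $\rho_0$ is left $\mathcal{M}$-linear, the $\mathcal{M}$-pushout $P' := L_0 +_{K_0} P$ of $l_0 \colon K_0 \mto L_0$ and $u_0 \colon K_0 \to P$ exists in $\X$; denote its coprojections by $a \colon L_0 \to P'$ and $\tilde q_0 \colon P \to P'$. I would show that $P'$ is a pushout complement of the pair $(l_1, f_0 \circ i_1)$, which is precisely the pair whose pushout complement defines $D_{\dder{E},0}$. First, horizontal pushout pasting (\Cref{lem:po1}) applied to the pushout $G_0 = L_0 +_{K_0} D_0$ of $\dder{D}_0$, decomposed through the factorization $k_0 = p_0 \circ u_0$, yields that the square with corners $P, P', D_0, G_0$ is a pushout, producing an induced arrow $\psi \colon P' \to G_0$ with $\psi \circ a = m_0$ and $\psi \circ \tilde q_0 = f_0 \circ p_0$. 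Second, vertical pasting of this square with the pushout $D_0 = P +_{K_1} L_1$ (available from the strong independence structure underlying the existence of $\dder{E}$, as in \Cref{def:filler}) shows that $G_0 = L_1 +_{K_1} P'$ is a pushout, with arrows $\tilde q_0 \circ u_1 \colon K_1 \to P'$ and $\psi \colon P' \to G_0$. This exhibits $P'$ together with $\psi$ as a pushout complement of $(l_1, f_0 \circ i_1)$.

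By uniqueness of pushout complements in $\mathcal{M}$-adhesive categories (\Cref{lem:pocomp}), there exists a unique isomorphism $\phi \colon P' \to D_{\dder{E},0}$ making the two pushout complement structures agree; in particular $f_0' \circ \phi = \psi$. The desired arrow is then $q_0 := \phi \circ \tilde q_0 \colon P \to D_{\dder{E},0}$, which satisfies $f_0' \circ q_0 = \psi \circ \tilde q_0 = f_0 \circ p_0$, yielding the commutativity asked for by the diagram. The main obstacle lies in ensuring availability of the pushout $D_0 = P +_{K_1} L_1$: in strong enforcing systems it is immediate from the definition of a strong independence pair, and more generally it follows from the existence of $\dder{E}$ by a pullback-pushout decomposition argument (\Cref{lem:popb}) applied to the pushout complement square of $\dder{E}_0$ pulled back along $f_0$.
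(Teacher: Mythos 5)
Your construction of $P'=L_0+_{K_0}P$, the two pasting steps, and the final appeal to uniqueness of pushout complements (\Cref{lem:pocomp}) would indeed deliver the required $q_0$ --- but the whole argument hinges on the square with sides $l_1$, $u_1$, $i_1$, $p_0$ being a pushout, i.e.\ on $D_0=L_1+_{K_1}P$. That is precisely the second condition in the definition of a \emph{strong} independence pair (\Cref{def:filler}). The proposition assumes only that $(i_0,i_1)$ is an independence pair admitting a switch; it does not assume the pair is strong, nor that the system is strong enforcing, and \Cref{prop:tec} guarantees only that this square is a \emph{pullback}.

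Your fallback --- deriving that pushout from the existence of $\dder{E}$ by decomposing the left square of $\dder{E}_0$ through a pullback along $f_0$ and invoking \Cref{lem:popb} --- does not close the gap, because it is circular. The pullback you can actually form there is $P'':=D_0\times_{G_0}D_{\der{E},0}$, and \Cref{lem:popb} then yields $D_0=L_1+_{K_1}P''$. But the object in the statement is $P=D_0\times_{G_1}D_1$, and to compare the two you need a morphism $P\to D_{\der{E},0}$ over $G_0$, which is exactly the $q_0$ you are trying to construct (a map $P''\to D_1$ in the other direction is not available either). So in general you have not produced the pushout your argument needs. The paper's proof avoids this entirely: it uses only the pullback square $p_0\circ u_1=i_1\circ l_1$ from \Cref{prop:tec}, pastes it with the mono $f_0$ to obtain a pullback of $f_0\circ p_0\in\mathcal{M}$ along the match $f_0\circ i_1$, and then applies \Cref{lem:radj} to this pullback together with the pushout square of $\dder{E}_0$ to conclude $f_0\circ p_0\le f_{\der{E},0}$ directly. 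Your argument is correct under the extra hypothesis that $(i_0,i_1)$ is strong (e.g.\ in strong enforcing systems), but not at the stated level of generality.
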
}
\parbox{4cm}{$\xymatrix{D_{\der{E},0} \ar@{>->}[d]_{f_{\der{E},0}}&P
    \ar@{.>}[l]_{q_0} \ar[r]^{p_1} \ar@{>->}[d]_{p_0}& D_1
    \ar@{>->}[d]^{f_1}\\ G_{0}&D_0 \ar@{>->}[l]^{f_0}\ar[r]_{g_0}&
    G_1}$}

\begin{proof}
  Consider the arrow $u_1\colon K_1\to P$ obtained using
  \Cref{prop:tec}. It fits in the big square on the left
\[		\xymatrix@R=15pt{K_1 \ar[r]^{u_1} \ar@{>->}[d]_{l_1}&P
	\ar[r]^{\id{P}}\ar@{>->}[d]_{p_0}& P\ar@{>->}[d]^{p_0}\\L_1
	\ar[d]_{\id{L_1}}\ar[r]^{i_1}&D_0 \ar[r]^{\id{D_0}}
	\ar[d]_{\id{D_0}}& D_{0} \ar@{>->}[d]^{f_0}&K_1 \ar@{>->}[d]_{l_1}\ar[r]^{k_{\der{E},0}}&
	D_{\der{E},0} \ar@{>->}[d]^{f_{\der{E},0}} & K_1 \ar[r]^{u}
	\ar@{>->}[d]_{l_1}& P \ar@{>->}[d]^{f_0\circ
		p_0}\\
	L_1\ar[r]^{i_1} \ar@/_.4cm/[rr]_{m_{\der{E},0}}&D_0
	\ar@{>->}[r]^{f_0}&G_0 & L_1\ar[r]_{m_{\der{E},0}}&G_0 &L_1
	\ar[r]_{m_{\der{E},0}}&G_0}\]

  	Since $f_0$ is a monomorphism, every square in the diagram on the left above
  	is a pullback. Thus the whole big square is a pullback too.
  	Applying \Cref{lem:radj} to the pair of squares on the right now gives
  	us the wanted $q_0\colon P\to D_{\der{E},0}$.
\end{proof}

\subsubsection{Proofs for \Cref{subsec:CR}}

To prove the Local Church-Rosser Theorem, let us begin by deducing some properties from the existence of a strong independence pair. 

\begin{remark}\label{rem:deco} 
	Let $(i_0, i_1)$ be a strong
	independence pair between the direct derivations $\dder{D}_0$ and
	$\dder{D}_1$. We can then build the solid part of the diagram
	below
	\[\xymatrix@C=15pt@R=15pt{&&R_0 \ar@/_.8cm/[ddrr]_(.2){i_0}|(.7)\hole
		\ar[dr]^{h_0}&& L_1\ar@/^.8cm/[ddll]^(.2){i_1}
		\ar[dl]_{m_1}\\&K_0\ar[dr]^{k_0}\ar@{>->}[dl]_{l_0}
		\ar@/_.6cm/[ddrr]^(.65){u_0}\ar[ur]^{r_0}&& G_1 && K_1
		\ar@/^.6cm/[ddll]_(.65){u_1}\ar[dl]_{k_1}\ar@{>->}[lu]_{l_1}
		\ar[dr]^{r_1}\\L_0 \ar[dr]^{m_0}
		\ar@{.>}@/_.6cm/[ddrr]_{j_1}&& D_0
		\ar@{>->}[dl]|(.42)\hole_(.64){f_0}\ar[ur]|(.48)\hole^(.7){g_1}&&D_1
		\ar[dr]|(.42)\hole^(.65){g_1}
		\ar@{>->}[ul]|(.48)\hole_(.7){f_1}&&R_1\ar@/^.6cm/[ddll]^{j_0}\ar[dl]^{h_1}\\&G_0
		&&P\ar[dr]^{q_1}
		\ar@{>.>}[dl]_{q_0}\ar[ur]^(.4){p_1}\ar@{>->}[ul]_(.4){p_0}&&G_2\\&&Q_0
		\ar@{>->}@{>.>}[ul]_{a_0} &&Q_1\ar@{.>}[ur]^{b_1}}
	\]
	
	Let us complete this diagram defining the dotted arrows. To get
	$j_1\colon L_0\to Q_0$ and $q_0\colon P\to Q_0$ it is enough to
	take a pushout of $l_0$ along $u_0$, which exists since
	$l_0\in \mathcal{M}$. Moreover, the existence of the wanted
	$a_0\colon Q_0\to G_0$ and $b_1\colon Q_1\to G_2$ follows from
	\[
	f_0\circ p_0 \circ u_0 = f_0\circ k_0 = m_0\circ l_0 \qquad
	g_1\circ p_1\circ u_1 = g_1\circ k_1=h_1\circ r_1
	\]
	
	\noindent
	\parbox{5.5cm}{\hspace{15pt}We can prove some other properties of
		the arrows appearing in the diagram above. The four rectangles
		aside are pushouts and their left halves are pushouts
		too. Therefore, by \Cref{lem:po1} also their right halves are
		pushouts.  Moreover, $q_0$ and $a_0$ are pushouts of
		respectively $l_0$ and $p_0$, thus they are elements of
		$\mathcal{M}$. By \Cref{prop:pbpoad} the left halves of the
		first and third rectangles are pullbacks too.}  \parbox{3.5cm}{
		$\xymatrix{K_0 \ar@/^.4cm/[rr]^{k_0}\ar[d]_{r_0}\ar[r]_{u_0}
			&P\ar[d]^{p_1} \ar@{>->}[r]_{p_0} & D_0 \ar[d]^{g_0}&K_1
			\ar@/^.4cm/[rr]^{k_1}\ar@{>->}[d]_{l_1}\ar[r]_{u_1}
			&P\ar@{>->}[d]^{p_0} \ar[r]_{p_1} & D_1 \ar@{>->}[d]^{f_1}\\
			R_0 \ar@/_.4cm/[rr]_{h_0} \ar[r]^{i_0}&D_1 \ar@{>->}[r]^{f_1}
			& G_1&L_1 \ar@/_.4cm/[rr]_{m_1} \ar[r]^{i_1}&D_0\ar[r]^{g_0} &
			G_1}$
		$\xymatrix{K_0
			\ar@/^.4cm/[rr]^{k_0}\ar@{>->}[d]_{l_0}\ar[r]_{u_0}
			&P\ar@{>->}[d]^{q_0} \ar@{>->}[r]_{p_0} & D_0
			\ar@{>->}[d]^{f_0}&K_1
			\ar@/^.4cm/[rr]^{k_1}\ar[d]_{r_1}\ar[r]_{u_1}
			&P\ar[d]^{q_1} \ar[r]_{p_1} & D_1 \ar[d]^{g_1}\\
			L_0 \ar@/_.4cm/[rr]_{m_0} \ar[r]^{j_1}&Q_0 \ar@{>->}[r]^{a_0}
			& G_0&R_1 \ar@/_.4cm/[rr]_{h_1} \ar[r]^{j_0}&Q_1 \ar[r]^{b_1}
			& G_2}$}
\end{remark}

\prChurch*
\label{prChurch-proof}

\begin{proof}

  Let us begin considering the diagram
  built in \Cref{rem:deco}
  \[\xymatrix@C=15pt@R=15pt{&&R_0
      \ar@/_.8cm/[ddrr]_(.2){i_0}|(.7)\hole \ar[dr]^{h_0}&&
      L_1\ar@/^.8cm/[ddll]^(.2){i_1}
      \ar[dl]_{m_1}\\&K_0\ar[dr]^{k_0}\ar@{>->}[dl]_{l_0}
      \ar@/_.6cm/[ddrr]^(.65){u_0}\ar[ur]^{r_0}&& G_1 && K_1
      \ar@/^.6cm/[ddll]_(.65){u_1}\ar[dl]_{k_1}\ar@{>->}[lu]_{l_1}
      \ar[dr]^{r_1}\\L_0 \ar[dr]^{m_0} \ar@{>}@/_.6cm/[ddrr]_{j_1}&& D_0
      \ar@{>->}[dl]|(.42)\hole_(.64){f_0}\ar[ur]|(.48)\hole^(.7){g_1}&&D_1
      \ar[dr]|(.42)\hole^(.65){g_1}
      \ar@{>->}[ul]|(.48)\hole_(.7){f_1}&&R_1\ar@/^.6cm/[ddll]^{j_0}\ar[dl]^{h_1}\\&G_0
      &&P\ar[dr]^{q_1}
      \ar@{>->}[dl]_{q_0}\ar[ur]^(.4){p_1}\ar@{>->}[ul]_(.4){p_0}&&G_2\\&&Q_0
      \ar@{>->}@{>->}[ul]_{a_0} \ar@{.>}[dr]_{b_0} &&Q_1\ar[ur]^{b_1}
      \ar@{>.>}[dl]^{a_1} \\ &&&H_1 }\] Since $q_0\colon P\to Q_0$ is in
  $\mathcal{M}$, we can take its pushout along $q_1$ to get the dotted
  arrows $a_1\colon Q_1\to H_1$ and $q_0\colon Q_0\to H_1$. But then,
  since the composite of two pushout squares is a pushout, the diagram
  below is a derivation
  \[\xymatrix@C=15pt{L_1 \ar[d]_{f_0 \circ i_1}&& K_1
      \ar[d]_{q_0\circ u_1}\ar@{>->}[ll]_{l_1} \ar[r]^{r_1} & R_1
      \ar@/^.35cm/[drrr]|(.3)\hole_(.4){j_0} \ar[dr]|(.3)\hole_{a_1\circ
        j_0} && L_0 \ar@/_.35cm/[dlll]^(.4){j_1} \ar[dl]|(.3)\hole^{b_0\circ
        j_1}& K_1 \ar[d]^{q_1\circ u_0}\ar@{>->}[l]_{l_0} \ar[rr]^{r_0} && R_0
      \ar[d]^{g_1\circ i_0} \\G_0 && \ar@{>->}[ll]^{a_0} D_0 \ar[rr]_{b_0}&&
      G_1 && \ar@{>->}[ll]^{a_1} D_1 \ar[rr]_{b_1}&& G_2}\] It is immediate
  now to see that it is a switch of $\dder{D}_0$ and $\dder{D}_1$ along
  $(i_0, i_1)$.
\end{proof}

\begin{remark}
	\label{rem:church}
	It is worth to point out that, by
	\Cref{prop:switch,thm:switch_uni}, the derivation we obtain from
	\Cref{pr:church} 
	is made by two switchable derivations that will
	give back the starting derivation (up to abstraction equivalence),
	if switched again.
	
	Moreover, by \Cref{prop:pbpoad}, the first square below is a
	pullback, therefore, by \cref{rem:deco} the last three squares
	witness that $(j_0, j_1)$ is a strong independence pair too.
	\[
	\xymatrix@R=15pt{P\ar[r]^{q_1} \ar@{>->}[d]_{q_0}&Q_1
		\ar@{>->}[d]^{a_1}&K_1 \ar[r]^{r_1} \ar[d]_{u_1}& R_1
		\ar[d]^{j_0} & K_0 \ar@{>->}[r]^{l_0} \ar[d]_{u_0}& L_0
		\ar[d]^{j_1}&K_0 \ar[r]^{r_0} \ar[d]_{u_0}& R_0\ar[d]^{i_0} \\
		Q_ 0\ar[r]_{b_0}&H_1&P \ar[r]_{q_1} & Q_1 &P \ar[r]_{q_0} &
		Q_0&P \ar[r]_{p_1} & D_1}
	\]
\end{remark}

\begin{proposition}
  \label{prop:equi}
  Every linear rewriting system $(\X, \R)$ is strong enforcing.
\end{proposition}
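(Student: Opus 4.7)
The plan is to verify that any independence pair $(i_0,i_1)$ between two direct derivations $\dder{D}_0,\dder{D}_1$ in a linear rewriting system is automatically strong. By \Cref{prop:tec} we already have mediating arrows $u_0\colon K_0\to P$ and $u_1\colon K_1\to P$ into the pullback $P$ of $g_0$ and $f_1$, satisfying $p_0\circ u_0=k_0$, $p_1\circ u_1=k_1$, $p_1\circ u_0=i_0\circ r_0$ and $p_0\circ u_1=i_1\circ l_1$. Three things remain to check (cf.~\Cref{def:filler}): the two candidate pushout squares, and the existence of the pushout of $r_1$ and $u_1$. The last is immediate, since linearity of $\rho_1$ gives $r_1\in\mathcal{M}$ and $\mathcal{M}$-pushouts always exist in an $\mathcal{M}$-adhesive category.

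The key extra input from linearity is that $\mathcal{M}$-membership propagates through the DPO squares: from $r_0\in\mathcal{M}$ and stability of $\mathcal{M}$ under pushouts, $g_0\in\mathcal{M}$ (and dually $h_0\in\mathcal{M}$), while $f_0,f_1\in\mathcal{M}$ are part of the DPO data. In particular, both $g_0$ and $f_1$ are monic. This is precisely the feature that the left-linear setting lacks.

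For the first candidate square, the plan is to stack it above the pullback defining $P$:
\[
\xymatrix@R=13pt@C=25pt{K_0\ar[r]^{r_0}\ar[d]_{u_0}&R_0\ar[d]^{i_0}\\
P\ar[r]^{p_1}\ar[d]_{p_0}&D_1\ar[d]^{f_1}\\
D_0\ar[r]_{g_0}&G_1}
\]
Using $p_0\circ u_0=k_0$ and $f_1\circ i_0=h_0$, the outer rectangle is exactly the right-hand pushout of the DPO step $\dder{D}_0$, hence a pushout; the bottom square is a pullback by construction of $P$. Since $r_0\in\mathcal{M}$ (by linearity of $\rho_0$) and $f_1$ is monic, the first case of the $\mathcal{M}$-pushout-pullback decomposition \Cref{lem:popb} forces the top square to be a pushout. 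For the second candidate square, the symmetric stacking
\[
\xymatrix@R=13pt@C=25pt{K_1\ar[r]^{l_1}\ar[d]_{u_1}&L_1\ar[d]^{i_1}\\
P\ar[r]^{p_0}\ar[d]_{p_1}&D_0\ar[d]^{g_0}\\
D_1\ar[r]_{f_1}&G_1}
\]
has outer rectangle equal to the left-hand pushout of $\dder{D}_1$ (via $p_1\circ u_1=k_1$ and $g_0\circ i_1=m_1$) and the same pullback on the bottom; \Cref{lem:popb} now applies with $l_1\in\mathcal{M}$ (from left-linearity of $\rho_1$) and $g_0$ monic, delivering the required pushout.

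The one real point of attention is recognising that linearity is doing exactly the work of placing $g_0$ — and, symmetrically, the other outer DPO legs — in $\mathcal{M}$. It is this $\mathcal{M}$-membership that enables the decomposition lemma to cut the outer DPO pushout into the two halves demanded by the definition of strong independence pair. Without this propagation, as in the merely left-linear case, the outer pushout cannot be sliced in this way, and strong enforcingness may genuinely fail (cf.~\Cref{ex:diff1}).
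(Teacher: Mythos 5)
Your proof is correct and follows essentially the same route as the paper: both verify the two candidate squares by pasting them against the pullback defining $P$ so that the outer rectangle becomes one of the DPO pushouts, and then apply the first case of the $\mathcal{M}$-pushout--pullback decomposition (\Cref{lem:popb}), with $r_0\in\mathcal{M}$ and $f_1$ monic for one square and $l_1\in\mathcal{M}$ and $g_0$ monic for the other, plus $r_1\in\mathcal{M}$ for the existence of the remaining pushout. The only blemish is the unused parenthetical claim that ``dually'' $h_0\in\mathcal{M}$: $h_0$ is the pushout of $k_0$, which is not known to lie in $\mathcal{M}$, so this does not follow --- but nothing in your argument depends on it.
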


\begin{proof}
  Suppose that $\X$ is $\mathcal{M}$-adhesive and let $(i_0, i_1)$ be
  an independence pair as below
  \[
    \xymatrix@C=15pt{L_0 \ar[d]_{m_0}&& K_0
      \ar[d]_{k_0}\ar@{>->}[ll]_{l_0} \ar@{>->}[r]^{r_0} & R_0
      \ar@/^.35cm/[drrr]|(.3)\hole_(.4){i_0} \ar[dr]|(.3)\hole_{h_0}
      && L_1 \ar@/_.35cm/[dlll]^(.4){i_1} \ar[dl]|(.3)\hole^{m_1}& K_1
      \ar[d]^{k_1}\ar@{>->}[l]_{l_1} \ar@{>->}[rr]^{r_1} && R_1
      \ar[d]^{h_1} \\G_0 && \ar@{>->}[ll]^{f_0} D_0
      \ar@{>->}[rr]_{g_0}&& G_1 && \ar@{>->}[ll]^{f_1} D_1
      \ar@{>->}[rr]_{g_1}&& G_2}\]

  Since $(\X, \R)$ is linear, then $r_1\colon K_1\to R_1$ belongs to
  $\mathcal{M}$, thus it admits a pushout along $u_1\colon K_1\to P$,
  as desired. Moreover, let us consider the two rectangles below
  \[
    \xymatrix{K_0 \ar@/^.4cm/[rr]^{k_0}\ar@{>->}[d]_{r_0}\ar[r]_{u_0}
      &P\ar@{>->}[d]^{p_1} \ar@{>->}[r]_{p_0} & D_0
      \ar@{>->}[d]^{f_0}& K_1
      \ar@/^.4cm/[rr]^{k_1}\ar@{>->}[d]_{l_1}\ar[r]_{u_1}
      &P\ar@{>->}[d]^{p_0} \ar@{>->}[r]_{p_1} & D_1
      \ar@{>->}[d]^{f_1}\\ R_0 \ar@/_.4cm/[rr]_{h_0} \ar[r]^{i_0}&D_1
      \ar@{>->}[r]^{f_1} & G_1& L_1 \ar@/_.4cm/[rr]_{m_1}
      \ar[r]^{i_1}&D_0 \ar@{>->}[r]^{g_0} & G_1} \] By hypothesis
  $r_0$ and $l_1$ are in $\mathcal{M}$, thus $f_1$ and $g_0$ belong to
  it too. The first point of \Cref{lem:popb} yields the thesis.
\end{proof}

\begin{remark}\label{rem:fill} Let $\der{D}=\{\dder{D}_i\}_{i=0}^n$ be a derivation and $(i_0, i_1)$ a strong independence pair between $\dder{D}_0$ and $\dder{D}_1$. Let also $\{\der{E}_{i}\}_{i=0}^1$ be a switch for them along $(i_0, i_1)$. Since, by \Cref{thm:switch_uni}, all switches are abstraction equivalent, we get the diagram below
	\[\xymatrix@C=15pt@R=15pt{&&R_0 \ar@/_.8cm/[ddrr]_(.2){i_0}|(.7)\hole
		\ar[dr]^{h_0}&& L_1\ar@/^.8cm/[ddll]^(.2){i_1}
		\ar[dl]_{m_1}\\&K_0\ar[dr]^{k_0}\ar@{>->}[dl]_{l_0}
		\ar@/_.6cm/[ddrr]|(.36)\hole^(.65){u_0}\ar[ur]^{r_0}&& G_1 &&
		K_1
		\ar@/^.6cm/[ddll]|(.36)\hole_(.65){u_1}\ar[dl]_{k_1}\ar@{>->}[lu]_{l_1}
		\ar[dr]^{r_1}\\L_0
		\ar@/_.6cm/[ddrr]_(.2){i'_0}|(.31)\hole|(.81)\hole
		\ar[dr]^(.4){m_0}|(.61)\hole && D_0
		\ar@{>->}[dl]|(.4)\hole_(.65){f_0}\ar[ur]|(.5)\hole^(.7){g_0}&&D_1
		\ar[dr]|(.4)\hole^(.65){g_1}
		\ar@{>->}[ul]|(.5)\hole_(.65){f_1}&&R_1\ar@/^.6cm/[ddll]^(.2){i'_1}|(.31)\hole|(.81)\hole\ar[dl]_{h_1}|(.6)\hole\\&G_0
		&&P_1\ar[dr]_{q_1}	\ar@{>->}[dl]^{q_0}\ar[ur]^(.4){p_1}\ar@{>->}[ul]_(.4){p_0}&&G_2\\L_1	\ar@/^.6cm/[uurr]^(.2){i_1} \ar[ur]_(.35){m'_0}|(.61)\hole && D'_0	\ar@{>->}[ul]|(.4)\hole_(.65){f'_0}\ar[dr]|(.5)\hole^(.65){g'_0}	&&D'_1\ar[ur]|(.4)\hole^(.65){g'_1} \ar@{>->}[dl]|(.5)\hole_(.65){f'_1}	&& R_0 \ar[ul]^(.35){h'_1}|(.61)\hole\ar@/_.6cm/[uull]_(.2){i_0}\\ &K_1	\ar[ur]_{k'_0} \ar[dr]_{r_1}	\ar@{>->}[ul]^{l_1}\ar@/^.6cm/[uurr]_(.65){u_1}&&G'_1&& K_0\ar[ur]_{r_0} \ar[ul]^{k'_1} \ar@{>->}[dl]^{l_0} \ar@/_.6cm/[uull]^(.65){u_0}\\&& R_1	\ar[ur]_{h'_0}\ar@/^.8cm/[uurr]^(.2){i'_1}&& L_0 \ar[ul]^{m'_1} \ar@/_.8cm/[uull]_(.2){i'_0} |(.69)\hole }\] 

\end{remark}

\subsubsection{Proofs for \Cref{subsec:verytame}}

\begin{proposition}\label{pr:weak} Every linear rewriting system is well-switching.
\end{proposition}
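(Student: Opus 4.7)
The plan is to unpack the definition of well-switching, which requires two properties: being strong enforcing and uniqueness of the independence pair. Both ingredients are essentially already available in the paper.

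First, I will invoke Proposition~\ref{prop:equi}, which establishes that every linear rewriting system is strong enforcing. This takes care of one of the two conditions required by the definition of well-switching.

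Second, I will prove the uniqueness of independence pairs, following the argument sketched in Remark~\ref{rem:uni}. Suppose $(i_0, i_1)$ and $(i'_0, i'_1)$ are two independence pairs between direct derivations $\dder{D}_0$ and $\dder{D}_1$. From the commuting diagrams that define independence pairs, we obtain the equalities
\[
g_0 \circ i_1 = m_1 = g_0 \circ i'_1, \qquad f_1 \circ i_0 = h_0 = f_1 \circ i'_0.
\]
In the linear case the rule arrow $r_0 : K_0 \mto R_0$ belongs to $\mathcal{M}$, hence, since $\mathcal{M}$ is stable under pushouts, its pushout $g_0: D_0 \mto G_1$ also belongs to $\mathcal{M}$, and in particular is a monomorphism. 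Analogously $l_1 : K_1 \mto L_1$ is in $\mathcal{M}$, so $f_1 : D_1 \mto G_1$ is a monomorphism as well. From the equalities above we then conclude $i_1 = i'_1$ and $i_0 = i'_0$, so the independence pair is unique.

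Combining these two observations we conclude that every linear rewriting system is well-switching. The argument is essentially routine given the results already established, so there is no substantial technical obstacle; the only thing to check carefully is that the relevant arrows really lie in $\mathcal{M}$, which follows from the linearity assumption together with the stability of $\mathcal{M}$ under pushouts built into the definition of $\mathcal{M}$-adhesive category.
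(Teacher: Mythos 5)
Your proof is correct and follows essentially the same route as the paper's: the paper's own proof of this proposition establishes uniqueness of the independence pair by observing that $f_1$ and $g_0$, being pushouts of $l_1$ and $r_0$ along arrows in the linear rule, lie in $\mathcal{M}$ and hence are monos, exactly as you argue (cf.\ also Remark~\ref{rem:uni}). Your explicit appeal to Proposition~\ref{prop:equi} for the strong enforcing half is appropriate and, if anything, makes the argument slightly more complete than the paper's, which leaves that ingredient implicit.
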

\begin{proof}
	
	Let $(i_0, i_1)$ and $(i'_0, i'_1)$ be independence pairs for the
	direct derivations $\dder{D}$ and $\dder{D}'$. Notice that, by
	definition, we have
	\[
	f_1\circ i_0=h_0=f_1\circ i'_0 \qquad g_0\circ i_1=m_1= g_0\circ i'_1
	\]
	
	The arrow $f_1\colon D\to G_1$ is the pushout of
	$l_1\colon K_1\mto L_1$, thus it is in $\mathcal{M}$ implying
	$i_0=i'_0$. If, moreover, we suppose that the rule $\rho$ applied in
	$\dder{D}$ is linear, then $g_0\colon D_0\to G_1$ is in
	$\mathcal{M}$ too, entailing $i_1=i'_1$.
\end{proof}

\begin{remark}\label{rem:po}
	Notice that $\gph{X}$ is adhesive and that pushouts are computed component-wise, thus if $g\colon F\to G$ is the pushout of $r\colon K\to R$, then $g_X\colon F(X)\to G(X)$ is mono for every root $X$.
\end{remark}

\lemVTame*
\label{lemVTame}

\begin{proof}
	Every root-preserving graphical rewriting system is strong enforcing because $\gph{G}$ is in the class $\mathbb{B}^+$ defined in \Cref{app:fill}.  Consider the following diagram 
	\[\xymatrix@R=25pt@C=30pt{L_0 \ar[d]_{m_0}&& K_0
		\ar[d]_{k_0}\ar@{>->}[ll]_{l_0} \ar[r]^{r_0} & R_0
		\ar@<-.5ex>@/^.35cm/[drrr]|(.23)\hole|(.34)\hole_(.4){i_0} 	\ar@<.5ex>@/^.35cm/[drrr]^(.2){j_0}|(.31)\hole |(.4)\hole
		\ar[dr]|(.3)\hole_{h_0} && L_1 \ar@<.5ex>@/_.35cm/[dlll]^(.4){i_1} \ar@<-.5ex>@/_.35cm/[dlll]_(.2){j_1}
		\ar[dl]|(.3)\hole^{m_1}& K_1 \ar[d]^{k_1}\ar@{>->}[l]_{l_1}
		\ar[rr]^{r_1} && R_1 \ar[d]^{h_1}\\G_0 && \ar@{>->}[ll]^{f_0}
		D_0 \ar[rr]_{g_0}&& G_1 && \ar@{>->}[ll]^{f_1} D_1
		\ar[rr]_{g_1}&& G_2}\] 
	
	Since $f_1$ is mono then $i_0=j_0$. For every  root $X$ we have
	\[g_{0,X}\circ i_{1,X}=m_{1,X}=g_{0,X}\circ j_{1,X}\]
	which entails $i_{1,X}=j_{1,X}$. Let now $Y$ be another object, and take $y\in L_1(Y)$. By definition, we have a root $X$ and an arrow $f:X\to Y$ and $x\in L_{1}(X)$ such that $L(f)(y)=x$. By naturality we have
	\[i_{1,Y}(y)=i_{1,Y}(L(f)(x))=L(f)(i_{1, X}(x))=L(f)(j_{1, X}(x))=j_{1,X}(L(f)(x))=j_{1,Y}(y)\]
Thus $i_{1,Y}=j_{1,Y}$, as wanted.
\end{proof}

\subsubsection{Proofs for \cref{subsec:canonical}}

The section opens recalling the definition of \emph{consistent permutation} for the interested reader.

\begin{definition}[Consistent permutation]
	\label{def:permcon}
	Let $\X$ be an $\mathcal{M}$-adhesive category and $(\X, \R)$ 
	a left-linear rewriting system.  Take two derivations $\der{D}:=\{\dder{D}_i\}_{i=0}^n$ and
	$\der{D}':=\{\dder{D}'_i\}_{i=0}^n$ with the same length and with isomorphic sources and targets.
	Let also $r(\der{D})=\{\rho_i\}_{i=0}^n$ and
	$r(\der{D}')=\{\rho'_i\}_{i=0}^n$ be the  associated sequence of rules. 
	A \emph{consistent permutation}
	between $\der{D}$ and
	$\der{D}'$ is a permutation
	$\sigma\colon [0,n]\to [0,n]$ such that, for every $i\in [0,n]$,
	$\rho_i=\rho'_{\sigma(i)}$ and, moreover, there exists a
	\emph{mediating isomorphism}
	$\xi_\sigma\colon \tpro{D} \to \lpro \der{D}' \rpro$ fitting in the
	following diagrams, where $m_i, m'_i, h_i$ and $h'_i$ are, the matches and co-matches of $\dder{D}_i$ and
	$\dder{D}'_i$.
	\[
	\xymatrix@C=30pt{L_i
		\ar[r]^{m_i} \ar[d]_{m'_{\sigma(i)}}& G_i \ar[r]^{\iota_{G_i}}
		&\tpro{D} \ar[d]^{\xi_\sigma} & R_i \ar[r]^{h_i}
		\ar[d]_{h'_{\sigma(i)}}& G_{i+1} \ar[r]^{\iota_{G_{i+1}}}
		&\tpro{D} \ar[d]^{\xi_\sigma} \\G'_{\sigma(i)}
		\ar[rr]_{\iota'_{G'_{\sigma(i)}}}&& \lpro \der{D}' \rpro&
		G'_{\sigma(i)+1} \ar[rr]_{\iota'_{G'_{\sigma(i)+1}}}&& \lpro
		\der{D}' \rpro}
	\]
\end{definition}

We can now turn to the proofs of the results of \Cref{subsec:canonical}.

\begin{notation}
  Recall that by \cref{thm:switch_uni} switches are uniquely
  determined, up to abstraction equivalence.
  In the sequel we denote any switch of two switchable direct
  derivations $\dder{D}_0$ and $\dder{D}_1$ along an independence pair
  $(i_0, i_1)$ by $S_{i_0, i_1}(\dder{D}_0, \dder{D}_1)$. Moreover, we
  use $S_{i_0,i_1}(\dder{D}_1)$ and $S_{i_0,i_1}(\dder{D}_0)$ to
  denote its components, in such a way that
  $S_{i_0, i_1}(\dder{D}_0, \dder{D}_1)=S_{i_0,i_1}(\dder{D}_1)\cdot
  S_{i_0,i_1}(\dder{D}_0)$.
\end{notation}

\lemIndepGlobalLeft*
\label{lemIndepGlobalLeft-proof}

\begin{proof}   
	Since $(\X, \R)$ is well-switching, there exists unique independence pairs 
$(i_0,i_1)$ between $\dder{D}_0$ and $\dder{D}_1$,	$(a_0,a_1)$ between $\dder{D}_1$ and $\dder{D}_2$ and	$(e_0, e_1)$ between $\dder{D}_0$ and
	$S_{a_0,a_1}(\dder{D}_2)$. 	Using \Cref{rem:fill} we can build the diagrams in \Cref{fi:first,fi:second,fi:third}.  The first depicts the  switching $\dder{D}_0$ and $\dder{D}_1$ along  $(i_0, i_1)$, the second that of $\dder{D}_1$ and $\dder{D}_2$ along $(a_0, a_1)$, while the third refers to $\dder{D}_0$ and
	$S_{a_0,a_1}(\dder{D}_2)$.
	
	\begin{figure}[h]
		\centering
		$\xymatrix@C=15pt@R=15pt{&&R_0 \ar@/_.8cm/[ddrr]_(.2){i_0}|(.7)\hole
			\ar[dr]^{h_0}&& L_1\ar@/^.8cm/[ddll]^(.2){i_1}
			\ar[dl]_{m_1}\\&K_0\ar@{>->}[dr]^{k_0}\ar[dl]_{l_0}
			\ar@/_.6cm/[ddrr]|(.36)\hole^(.65){v_0}\ar[ur]^{r_0}&& G_1 &&
			K_1
			\ar@/^.6cm/[ddll]|(.36)\hole_(.65){v_1}\ar[dl]_{k_1}\ar@{>->}[lu]_{l_1}
			\ar[dr]^{r_1}\\L_0
			\ar@/_.6cm/[ddrr]_(.2){j_0}|(.31)\hole|(.81)\hole
			\ar[dr]^(.4){m_0}|(.61)\hole && D_0
			\ar@{>->}[dl]|(.4)\hole_(.65){f_0}\ar[ur]|(.5)\hole^(.7){g_0}&&D_1
			\ar[dr]|(.4)\hole^(.65){g_1}
			\ar@{>->}[ul]|(.5)\hole_(.65){f_1}&&R_1\ar@/^.6cm/[ddll]^(.2){j_1}|(.31)\hole|(.81)\hole\ar[dl]_(.4){h_1}|(.6)\hole\\&G_0
			&&P_1\ar[dr]_{q_1}
			\ar@{>->}[dl]^{q_0}\ar[ur]^(.4){p_1}\ar@{>->}[ul]_(.4){p_0}&&G_2\\L_1
			\ar@/^.6cm/[uurr]^(.2){i_1} \ar[ur]_(.35){f_0\circ
				i_1}|(.61)\hole&&Q_0
			\ar@{>->}[ul]|(.4)\hole_(.65){s_0}\ar[dr]|(.5)\hole^(.65){t_0}
			&&Q_1\ar[ur]|(.4)\hole^(.65){t_1} \ar@{>->}[dl]|(.5)\hole_(.65){s_1}
			&& R_0 \ar[ul]^(.35){g_1\circ
				i_0}|(.61)\hole\ar@/_.6cm/[uull]_(.2){i_0}\\&K_1
			\ar[ur]_{q_0\circ v_1} \ar[dr]_{r_1}
			\ar@{>->}[ul]^{l_1}\ar@/^.6cm/[uurr]_(.65){v_1}&&H'_1&& K_0
			\ar[ur]_{r_0} \ar[ul]^{q_1\circ v_0} \ar@{>->}[dl]^{l_0}
			\ar@/_.6cm/[uull]^(.65){v_0}\\&& R_1
			\ar[ur]_{\hspace{-6pt}s_1\circ
				j_1}\ar@/^.8cm/[uurr]^(.2){j_1}&& L_0 \ar[ul]^{t_0\circ
				j_0\hspace{-6pt}} \ar@/_.8cm/[uull]_(.2){j_0} |(.69)\hole
		}	$
		\caption{Picture for the proof of \Cref{lem:indep-global-left}: switch of $\dder{D}_0$ and $\dder{D}_1$.}
		\label{fi:first}
	\end{figure}

	\begin{figure}[t]
		\centering 
		$\xymatrix@C=15pt@R=15pt{&&R_1 \ar@/_.8cm/[ddrr]_(.2){a_0}|(.7)\hole
			\ar[dr]^{h_1}&& L_2\ar@/^.8cm/[ddll]^(.2){a_1}
			\ar[dl]_{m_2}\\&K_1\ar[dr]^{k_1}\ar@{>->}[dl]_{l_1}
			\ar@/_.6cm/[ddrr]|(.36)\hole^(.65){u_1}\ar[ur]^{r_1}&& G_2 &&
			K_2
			\ar@/^.6cm/[ddll]|(.36)\hole_(.65){u_2}\ar[dl]_{k_2}\ar@{>->}[lu]_{l_2}
			\ar[dr]^{r_2}\\L_1
			\ar@/_.6cm/[ddrr]_(.2){b_0}|(.31)\hole|(.81)\hole
			\ar[dr]^(.4){m_1}|(.61)\hole && D_1
			\ar@{>->}[dl]|(.4)\hole_(.65){f_1}\ar[ur]|(.5)\hole^(.7){g_1}&&D_2
			\ar[dr]|(.4)\hole^(.65){g_2}
			\ar@{>->}[ul]|(.5)\hole_(.65){f_2}&&R_2\ar@/^.6cm/[ddll]^(.2){b_1}|(.31)\hole|(.81)\hole\ar[dl]_(.4){h_2}|(.61)\hole\\&G_1
			&&P_2\ar[dr]_{d_1}
			\ar@{>->}[dl]^{d_0}\ar[ur]^(.4){c_1}\ar@{>->}[ul]_(.4){c_0}&&G_3\\L_2
			\ar@/^.6cm/[uurr]^(.2){a_1} \ar[ur]_(.35){f_1\circ
				a_1}|(.61)\hole&&Q_2
			\ar@{>->}[ul]|(.4)\hole_(.65){x_1}\ar[dr]|(.5)\hole^(.65){y_1}
			&&Q_3\ar[ur]|(.4)\hole^(.65){y_2} \ar@{>->}[dl]|(.5)\hole_(.65){x_2}
			&& R_1 \ar[ul]^(.3){g_2\circ
				a_0}|(.61)\hole\ar@/_.6cm/[uull]_(.2){a_0}\\&K_2
			\ar[ur]_{d_0\circ u_2} \ar[dr]_{r_2}
			\ar@{>->}[ul]^{l_2}\ar@/^.6cm/[uurr]_(.65){u_2}&&G'_2&& K_1
			\ar[ur]_{r_1} \ar[ul]^{d_1\circ u_1} \ar@{>->}[dl]^{l_1}
			\ar@/_.6cm/[uull]^(.65){u_1}\\&& R_2
			\ar[ur]_{\hspace{-6pt}x_2\circ
				b_1}\ar@/^.8cm/[uurr]^(.2){b_1}&& L_1 \ar[ul]^{y_1\circ
				b_0\hspace{-7pt}} \ar@/_.8cm/[uull]_(.2){b_0} |(.69)\hole
		}	$		
		\caption{Picture for the proof of \Cref{lem:indep-global-left}: switch of $\dder{D}_1$ and $\dder{D}_2$.}
		\label{fi:second}
	\end{figure}

	\begin{figure}
		\centering
		$\xymatrix@C=15pt@R=15pt{&&R_0 \ar@/_.8cm/[ddrr]_(.2){e_0}|(.7)\hole
			\ar[dr]^{\hspace{-5pt}h_0}&& L_2\ar@/^.8cm/[ddll]^(.2){e_1}
			\ar[dl]_{f_1\circ a_1\hspace{-5pt}}\\&K_0\ar[dr]^{k_0}\ar@{>->}[dl]_{l_0}
			\ar@/_.6cm/[ddrr]|(.36)\hole^(.65){w_0}\ar[ur]^{r_0}&& G_1 &&
			K_2 \ar@/^.6cm/[ddll]|(.36)\hole_(.65){w_2}\ar[dl]_{d_0\circ
				u_2\hspace{-5pt}}\ar@{>->}[lu]_{l_2} \ar[dr]^{r_2}\\L_0
			\ar@/_.6cm/[ddrr]_(.2){o_0}|(.31)\hole|(.81)\hole
			\ar[dr]^(.4){m_0}|(.61)\hole && D_0
			\ar@{>->}[dl]|(.4)\hole_(.65){f_0}\ar[ur]|(.5)\hole^(.7){g_0}&&Q_2
			\ar[dr]|(.4)\hole^(.65){y_1}
			\ar@{>->}[ul]|(.5)\hole_(.65){x_1}&&R_2\ar@/^.6cm/[ddll]^(.2){o_1}|(.31)\hole|(.81)\hole\ar[dl]_(.3){x_2\circ
				b_1}|(.58)\hole \\&G_0 &&P_3\ar[dr]_{n_1}
			\ar@{>->}[dl]^{n_0}\ar[ur]^(.4){\theta_1}\ar@{>->}[ul]_(.4){\theta_0}&&G'_2\\L_2
			\ar@/^.6cm/[uurr]^(.2){e_1} \ar[ur]_(.35){f_0\circ
				e_1}|(.61)\hole&&Q_4
			\ar@{>->}[ul]|(.4)\hole_(.65){z_0}\ar[dr]|(.5)\hole^(.65){z'_0}
			&&Q_5\ar[ur]|(.4)\hole^(.65){z'_1}
			\ar@{>->}[dl]|(.5)\hole_(.65){z_1} && R_0 \ar[ul]^(.3){y_1\circ
				e_0}|(.6)\hole\ar@/_.6cm/[uull]_(.2){e_0}\\&K_2
			\ar[ur]_{\hspace{-2pt}n_0\circ w_2} \ar[dr]_{r_2}
			\ar@{>->}[ul]^{l_2}\ar@/^.6cm/[uurr]_(.65){w_2}&&G'_1&& K_0
			\ar[ur]_{r_0} \ar[ul]^{n_1\circ w_0} \ar[dl]^{l_0}
			\ar@/_.6cm/[uull]^(.65){w_0}\\&& R_2
			\ar[ur]_{\hspace{-4pt}z_1\circ
				o_1}\ar@/^.8cm/[uurr]^(.2){o_1}&& L_0 \ar[ul]^{z'_0\circ
				o_0\hspace{-7pt}} \ar@/_.8cm/[uull]_(.2){o_0} |(.69)\hole
		}	$
		\caption{Picture for the proof of \Cref{lem:indep-global-left}: switch of $\dder{D}_0$ and $S_{a_0,a_1}(\dder{D}_2)$.}
		\label{fi:third}
	\end{figure}
	
	\begin{enumerate}
		\item We have to construct the two dotted arrows in the diagram
		below
		\[\xymatrix@C=15pt{L_0 \ar[d]_{z'_0\circ o_0}&& K_0
			\ar[d]_{n_1\circ w_0}\ar@{>->}[ll]_{l_0} \ar[r]^{r_0} & R_0
			\ar@{.>}@/^.35cm/[drrr]_(.4){\beta_0}|(.29)\hole
			\ar[dr]|(.28)\hole_{y_1\circ e_0} && L_1
			\ar@{.>}@/_.35cm/[dlll]^(.4){\beta_1}
			\ar[dl]|(.28)\hole^{y_1\circ b_0}& K_1 \ar[d]^{d_1\circ
				u_1}\ar@{>->}[l]_{l_1} \ar[rr]^{r_1} && R_1 \ar[d]^{g_2\circ
				a_1}\\G'_1 && \ar@{>->}[ll]^{z_1} Q_5 \ar[rr]_{z'_1}&& G'_2 &&
			\ar@{>->}[ll]^{x_2} Q_3 \ar[rr]_{y_2}&& G_3}\]
		
		The arrows $i_0\colon R_0\to D_1$,
		$e_0\colon R_0\to Q_2$, $i_1\colon L_1\to D_0$ and $b_0\colon L_1 \to Q_2$ satisfy
		\[
		f_1\circ i_0  = h_0 =x_1\circ e_0 \qquad 
		g_0\circ i_1 = m_1 = x_1 \circ b_0 \]
		
		These identities entail the existence of the dotted
		$\beta'_0\colon R_0\to P_2$ and $\beta'_1\colon L_1\to P_3$ below
		\[\xymatrix{R_0 \ar@{.>}[dr]^{\beta'_0} \ar@/^.3cm/[drr]^{i_0}
			\ar@/_.3cm/[ddr]_{e_0} &&& L_1 \ar@{.>}[dr]^{\beta'_1} \ar@/^.3cm/[drr]^{i_1}
			\ar@/_.3cm/[ddr]_{b_0}\\ &P_2 \ar@{>->}[r]^{c_0} \ar@{>->}[d]_{d_0}& D_1
			\ar@{>->}[d]^{f_1} & &P_3 \ar@{>->}[r]^{\theta_0} \ar[d]_{\theta_1}& D_0 \ar[d]^{g_0}\\ &Q_2\ar@{>->}[r]_{x_1} & G_1 & &Q_2\ar@{>->}[r]_{x_1} & G_1}\]
		
                    We can now define $\beta_0\colon R_0\to Q_3$ as $d_1\circ \beta'_1$ and $\beta_1\colon L_1\to Q_5$ be $n_1\circ \beta'_1$, so that
		\begin{align*}
			x_2\circ \beta_0 & =x_2\circ d_2\circ \beta'_0 = y_1\circ d_0\circ \beta'_0=y_1\circ e_0\\
			z'_1 \circ \beta_1 & = z'_1 \circ n_1\circ \beta'_1 =y_1\circ \theta_1\circ \beta'_1=y_1\circ b_0
		\end{align*}
		
		Therefore, $(\beta_0, \beta_1)$ is the wanted independence pair.
		
              \item Here we assume that
                $\dder{D}_1$ and $\dder{D}_2$ are switchable  with an
    independence pair $(\alpha_0, \alpha_1)$ and that
    $\dder{D}_0$ and $S_{\alpha_0, \alpha_1}(\dder{D}_2)$ are switchable.
                Hence, in addition to the three diagrams above, we have a fourth one, depicted in \Cref{fi:fourth}.
		\begin{figure}[h]
			\centering
			$\xymatrix@C=15pt@R=15pt{&&R_0 \ar@/_.8cm/[ddrr]_(.2){\alpha_0}|(.7)\hole \ar[dr]^{\hspace{-1pt}g_1\circ i_0}&& L_2\ar@/^.8cm/[ddll]^(.2){\alpha_1}  \ar[dl]_{m_2\hspace{-2pt}}\\
				&K_0\ar[dr]^{\hspace{-4pt}q_1\circ v_0}\ar@{>->}[dl]_{l_0} \ar@/_.6cm/[ddrr]|(.36)\hole^(.65){\phi_0}\ar[ur]^{r_0}&& G_2 && K_2 \ar@/^.6cm/[ddll]|(.36)\hole_(.65){\phi_2}\ar[dl]_{k_2}\ar@{>->}[lu]_{l_2} \ar[dr]^{r_2}\\
				L_0 \ar@/_.6cm/[ddrr]_(.2){\gamma_0}|(.31)\hole|(.81)\hole \ar[dr]^(.4){\hspace{-4pt}t_0\circ j_0}|(.6)\hole && Q_1 \ar@{>->}[dl]|(.4)\hole_(.65){s_1}\ar[ur]|(.5)\hole^(.7){t_1}&&D_2 \ar[dr]|(.4)\hole^(.65){g_2} \ar@{>->}[ul]|(.5)\hole_(.65){f_2}&&R_2\ar@/^.6cm/[ddll]^(.2){\gamma_1}|(.31)\hole|(.81)\hole\ar[dl]_(.35){h_2}|(.6)\hole\\&H'_1 &&P_4\ar[dr]_{\lambda_1} \ar@{>->}[dl]^{\lambda_0}\ar[ur]^(.4){\zeta_1}\ar@{>->}[ul]_(.4){\zeta_0}&&G_3\\
				L_2 \ar@/^.6cm/[uurr]^(.2){\alpha_1} \ar[ur]_(.35){\hspace{-2pt}s_1\circ
					\alpha_1}|(.61)\hole&&Q_6
				\ar@{>->}[ul]|(.4)\hole_(.6){\xi_0}\ar[dr]|(.5)\hole^(.65){\xi'_0}
				&&Q_7\ar[ur]|(.4)\hole^(.6){\xi'_1}
				\ar@{>->}[dl]|(.5)\hole_(.65){\xi_1} && R_0 \ar[ul]^(.35){g_2\circ
					\alpha_0}|(.61)\hole\ar@/_.6cm/[uull]_(.2){\alpha_0}\\&K_2
				\ar[ur]_{\lambda_0\circ \phi_2} \ar[dr]_{r_2}
				\ar@{>->}[ul]^{l_2}\ar@/^.6cm/[uurr]_(.65){\phi_2}&&H'_2&& K_0
				\ar[ur]_{r_0} \ar[ul]^{\lambda_1\circ \phi_0} \ar@{>->}[dl]^{l_0}
				\ar@/_.6cm/[uull]^(.65){\phi_0}\\&& R_2
				\ar[ur]_{\hspace{-5pt}\xi_1\circ
					\gamma_1}\ar@/^.8cm/[uurr]^(.2){\gamma_1}&& L_0
				\ar[ul]^{\xi'_0\circ \gamma_0\hspace{-7pt}}
				\ar@/_.8cm/[uull]_(.2){\gamma_0} |(.69)\hole 
			}	$
			\caption{Picture for the proof of \Cref{lem:indep-global-left}: switch of $S_{i_0, i_1}(\dder{D}_0)$ and $\dder{D}_2$.}
			\label{fi:fourth}
		\end{figure}
		
		Our aim is to construct the dotted arrows in the following diagram
		\[\xymatrix@C=15pt{L_1 \ar[d]_{f_0\circ i_0}&& K_1
			\ar[d]_{q_0\circ v_1}\ar@{>->}[ll]_{l_1} \ar[r]^{r_1} & R_1
			\ar@{.>}@/^.35cm/[drrr]_(.4){\epsilon_0}|(.29)\hole
			\ar[dr]|(.28)\hole_{s_1\circ j_1} && L_2
			\ar@{.>}@/_.35cm/[dlll]^(.4){\epsilon_1}
			\ar[dl]|(.28)\hole^{s_1\circ \alpha_1}& K_2
			\ar[d]^{\lambda_0\circ \phi_2}\ar@{>->}[l]_{l_2} \ar[rr]^{r_2} &&
			R_2 \ar[d]^{\xi_1\circ \gamma_1}\\G_0 && \ar@{>->}[ll]^{s_0} Q_0
			\ar[rr]_{t_0}&& H'_1 && \ar@{>->}[ll]^{\xi_0} Q_6
			\ar[rr]_{\xi'_0}&& H'_2}\]
		
		We can take the arrows $j_1\colon R_1\to Q_1$, 
		$a_0\colon R_1\to D_2$, $e_1\colon L_2 \to D_0$ and $a_1\colon L_2\to D_1$ and notice that
		\[	f_2\circ a_0 =h_1 =t_1\circ j_1 \qquad   g_0\circ e_1=f_1\circ a_1\]
		and thus we get the dotted arrows $\epsilon'_0\colon R_1\to P_4$, $\epsilon'_1\colon L_2\to P:1$ 
		\[\xymatrix{R_1 \ar@{.>}[dr]^{\epsilon'_0} 
			\ar@/^.3cm/[drr]^{j_1} \ar@/_.3cm/[ddr]_{a_0}&&& L_2 \ar@{.>}[dr]^{\epsilon'_1}
			\ar@/^.3cm/[drr]^{a_1} \ar@/_.3cm/[ddr]_{e_0}\\ &P_4
			\ar@{>->}[r]^{\zeta_0} \ar[d]_{\zeta_1}& Q_1 \ar[d]^{t_1} &&P_1
			\ar[r]^{p_1} \ar@{>->}[d]_{p_0}& D_1 \ar@{>->}[d]^{f_1}\\
			&D_2\ar@{>->}[r]_{f_2} & G_2& 	&D_0\ar[r]_{g_0} & G_1} \]
		
		Now, notice that
		\[
		t_1\circ q_1\circ \epsilon'_1  =g_1\circ p_1\circ \epsilon'_1  =g_1\circ a_1 =m_2
		\]
		Hence $(\alpha_0, q_1\circ \epsilon'_1)$ is an independence pair
		for $S_{i_0, i_1}(\dder{D}_0)$ and $\dder{D}_2$. By hypothesis $(\X, \R)$ is well-switching and therefore
		$q_1\circ \epsilon'_1$ must coincide with $\alpha_1$.

		We can then define $\epsilon_0\colon R_1\to Q_6$ and $\epsilon_1\colon L_2\to Q_0$ to be, respectively, 
		$\lambda_0\circ \epsilon'_0$ and $q_0\circ
		\epsilon'_1$. Then for these  arrows we have  identities
		\begin{align*}
			\xi_0\circ \epsilon_0 & =\xi_0\circ \lambda_0\circ \epsilon'_0 =s_1\circ \zeta_0 \circ \epsilon'_0=s_1\circ j_1\\	t_0\circ \epsilon_1 & = t_0\circ q_0\circ \epsilon'_1 =s_1\circ q_1\circ \epsilon'_1=s_1\circ \alpha_1
		\end{align*}
		Allowing us to conclude that
		$S_{i_0,i_1}(\dder{D}_1)$ and $S_{\alpha_0,
			\alpha_1}(\dder{D}_2)$ are switchable.  \qedhere
	\end{enumerate}
\end{proof}

\lemSwitchConfluence*
\label{lemSwitchConfluence-proof}

\begin{proof}
	To fix the notation, let us depict the derivation $\der{D}=\{\dder{D}_i\}_{i=0}^2$
	\[\xymatrix@C=15pt{L_0\ar[d]^{m_0}&& K_0 \ar[d]_{k_0}\ar@{>->}[ll]_{l_0} \ar[r]^{r_0} & R_0 \ar@/^.35cm/[drrr]_(.4){i_0}|(.29)\hole \ar[dr]|(.28)\hole_{h_0} && L_1 \ar@/_.35cm/[dlll]^(.4){i_1} \ar[dl]|(.28)\hole^{m_1}& K_1 \ar[d]^{k_1}\ar@{>->}[l]_{l_1} \ar[r]^{r_1} & R_1 \ar[dr]|(.28)\hole_{h_1} \ar@/^.35cm/[drrr]_(.4){j_0}|(.29)\hole  && L_2 \ar@/_.35cm/[dlll]^(.4){j_1} \ar[dl]|(.28)\hole^{m_2}& K_2 \ar[d]^{k_2}\ar@{>->}[l]_{l_2} \ar[rr]^{r_2} && R_2 \ar[d]_{h_2} \\G_0 && \ar@{>->}[ll]^{f_0} D_0 \ar[rr]_{g_0}&& G_1  && \ar@{>->}[ll]^{f_1} D_1 \ar[rr]_{g_1}&& G_2 && \ar@{>->}[ll]^{f_2} D_2 \ar[rr]_{g_2}&& G_3 }\]
	
	We are going to do the two sequences of switches and compare the results.
	
	\smallskip \noindent First sequence of switches.
	
	We begin switching $\dder{D}_0$ and $\dder{D}_1$ to get the following diagram
	\[\xymatrix@C=15pt{L_1\ar[d]^{f_0\circ i_1}&& K_1 \ar[d]_{k'_1}\ar@{>->}[ll]_{l_1} \ar[r]^{r_1} & R_1 \ar@/^.35cm/[drrr]_(.4){a_0}|(.29)\hole \ar[dr]|(.28)\hole_{h'_1} && L_0 \ar@/_.35cm/[dlll]^(.4){a_1} \ar[dl]|(.28)\hole^{m'_0}& K_0 \ar[d]^{k'_0}\ar@{>->}[l]_{l_0} \ar[r]^{r_0} & R_0 \ar[dr]|(.28)\hole_{g_1\circ i_0} \ar@/^.35cm/[drrr]_(.4){b_0}|(.29)\hole  && L_2 \ar@/_.35cm/[dlll]^(.4){b_1} \ar[dl]|(.28)\hole^{m_2}& K_2 \ar[d]^{k_2}\ar@{>->}[l]_{l_2} \ar[rr]^{r_2} && R_2 \ar[d]_{h_2} \\G_0 && \ar@{>->}[ll]^{f'_0} D'_0 \ar[rr]_{g'_0}&& G'_1  && \ar@{>->}[ll]^{f'_1} D'_1 \ar[rr]_{g'_1}&& G_2 && \ar@{>->}[ll]^{f_2} D_2 \ar[rr]_{g_2}&& G_3 }\]

	By \Cref{def:switch} we also know that $m_0=f'_0\circ a_1$ and $h_1=g'_1\circ a_0$. Now, let us switch $S_{i_0, i_1}(\dder{D}_0)$ and $\dder{D}_2$. In this case we obtain
	\[\xymatrix@C=15pt{L_1\ar[d]^{f_0\circ i_1}&& K_1 \ar[d]_{k'_1}\ar@{>->}[ll]_{l_1} \ar[r]^{r_1} & R_1 \ar@/^.35cm/[drrr]_(.4){c_0}|(.29)\hole \ar[dr]|(.28)\hole_{h'_1} && L_2 \ar@/_.35cm/[dlll]^(.4){c_1} \ar[dl]|(.28)\hole^{f'_1\circ b_1}& K_2 \ar[d]^{k'_2}\ar@{>->}[l]_{l_2} \ar[r]^{r_2} & R_2 \ar[dr]|(.28)\hole_{h'_2} \ar@/^.35cm/[drrr]_(.4){d_0}|(.29)\hole  && L_0 \ar@/_.35cm/[dlll]^(.4){d_1} \ar[dl]|(.28)\hole^{\hat{m}_0}& K_0 \ar[d]^{\hat{k}_0}\ar@{>->}[l]_{l_0} \ar[rr]^{r_0} && R_0 \ar[d]_{g_2\circ b_0} \\G_0 && \ar@{>->}[ll]^{f'_0} D'_0 \ar[rr]_{g'_0}&& G'_1  && \ar@{>->}[ll]^{\hat{f}_1} \hat{D}_1 \ar[rr]_{\hat{g}_1}&& G'_2 && \ar@{>->}[ll]^{f'_2} D'_2 \ar[rr]_{g'_2}&& G_3 }\]
	Using again the definition of switch, we know that $m'_0=\hat{f}_1\circ d_1$ and $h_2=g'_2\circ d_0$. Moreover, by hypothesis $(\X, \R)$ is well-switching and the hypothesis of the second point of \Cref{lem:indep-global-left} are satisfied by $\der{D}$. Thus we can characterise $c_0\colon R_1\to \hat{D}_1$ and $c_1\colon L_2\to D'_0$ as the unique arrows fitting in the diagrams below, where the bottom squares are pullbacks
	\[\xymatrix@R=15pt@C=15pt{&&R_1 \ar@/_1.2cm/[dddl]_(.4){a_0}|(.63)\hole\ar[dl]^{j_0}\ar[dd]^{c'_0} \ar@{>}[dr]^{c_0}&&&&&L_2 \ar[dl]^{j_1} \ar@/_1.2cm/[dddl]_(.4){z_1}|(.63)\hole\ar[dd]^{c'_1} \ar@{>}[dr]^{c_1}\\&D_2 \ar@{>->}[dl]^(.4){f_2}&&\hat{D}_1 \ar[dr]^{\hat{g}_1}  &&&D_1 \ar@{>->}[dl]^(.4){f_1}&&D'_0 \ar[dr]^{g'_0}\\G_2 && P\ar[dr]_{p_4} \ar@{>->}[ur]_{p_3}\ar@{>->}[dl]^{p_1} \ar[ul]^{p_2}&&G'_2&G_1 && Q\ar[dr]_{q_4} \ar@{>->}[ur]_{q_3}\ar@{>->}[dl]^{q_1} \ar[ul]^{q_2}&&G'_1\\& D'_1 \ar[ul]^{g'_1}&&D'_2 \ar@{>->}[ur]_{f'_2}&&& D_0 \ar[ul]^{g_0}&&D'_1 \ar@{>->}[ur]_{f'_1}}\]
	
Finally, we switch the first two direct derivations
	\[\xymatrix@C=15pt{L_2\ar[d]^{f'_0\circ c_1}&& K_2 \ar[d]_{\hat{k}_2}\ar@{>->}[ll]_{l_2} \ar[r]^{r_2} & R_2 \ar@/^.35cm/[drrr]_(.4){e_0}|(.29)\hole \ar[dr]|(.28)\hole_{\hat{h}_2} && L_1 \ar@/_.35cm/[dlll]^(.4){e_1} \ar[dl]|(.28)\hole^{\hat{m}_1}& K_1 \ar[d]^{\hat{k}_1}\ar@{>->}[l]_{l_1} \ar[r]^{r_1} & R_1 \ar@/^.35cm/[drrr]_(.4){t_0}|(.29)\hole \ar[dr]|(.28)\hole_{\hat{g}_1\circ c_0}   && L_0  \ar[dl]|(.28)\hole^{\hat{m}_0} \ar@/_.35cm/[dlll]^(.4){t_1}& K_0 \ar[d]^{\hat{k}_0}\ar@{>->}[l]_{l_0} \ar[rr]^{r_0} && R_0 \ar[d]_{g_2\circ b_0} \\G_0 && \ar@{>->}[ll]^{\hat{f}_0} \hat{D}_0 \ar[rr]_{\hat{g}_0}&& \hat{G}_1  && \ar@{>->}[ll]^{\tilde{f}_1} \tilde{D}_1 \ar[rr]_{\tilde{g}_1}&& G'_2 && \ar@{>->}[ll]^{f'_2} D'_2 \ar[rr]_{g'_2}&& G_3 }\]
	As before, we have identities $f_0\circ i_1=\hat{f}_0\circ e_1$ and $h'_2=\tilde{g}_1\circ e_0$. Now, the derivation $S_{i_0, i_1}(\dder{D}_1)\cdot S_{i_0, i_1}(\dder{D}_0)\cdot \dder{D}_2$ satisfies the hypothesis of the first point of \Cref{lem:indep-global-left}. Thus, since there is at most one independence pair between two direct derivations, we know that  $t_0\colon R_1\to \tilde{D}_1$ and $t_1\colon L_0\to D'_2$  fit in
	\[\xymatrix{&R_1 \ar[dl]_{a_0} \ar@{>}[dr]^{t_0} \ar@/^.4cm/[dd]|\hole^(.6){c_0}\ar[d]_{t'_0}&&L_ 0\ar[d]_{d_1} \ar@/_.4cm/[ddrr]|(.41)\hole_(.7){a_1}\ar[drr]^{t'_1} \ar@{>}[rr]^{t_1}&&\tilde{D}_1 \ar@{>->}[r]^{\tilde{f}_1}&\hat{G}_1\\ D'_1 \ar@{>->}[d]_{f'_1}&P\ar@{>->}[d]_{p_3} \ar[r]^{p_4} \ar@{>->}[l]_{p_1}&D'_2\ar@{>->}[d]^{f'_2}&\hat{D}_1 \ar@{>->}[d]_{\hat{f}_1}&&S\ar@{>->}[r]_{s_4} \ar[u]_{s_3}\ar@{>->}[d]^{s_1} \ar[ll]_{s_2} &\hat{D}_0 \ar[u]_{\hat{g}_0} \ar@{>->}[d]^{\hat{f}_0}\\G'_1&\hat{D}_1 \ar@{>->}[l]^{\hat{f}_1}\ar[r]_{\hat{g}_1}& G'_2&G'_1 && D'_0 \ar[ll]^{g'_0}  \ar@{>->}[r]_{f'_0}& G_0 }\]
	
	\smallskip \noindent Second sequence of switches.
	
	This time we start switching $\dder{D}_1$ and $\dder{D}_2$.
	\[\xymatrix@C=15pt{L_0\ar[d]^{m_0}&& K_0 \ar[d]_{k_0}\ar@{>->}[ll]_{l_0} \ar[r]^{r_0} & R_0 \ar@/^.35cm/[drrr]_(.4){z_0}|(.29)\hole \ar[dr]|(.28)\hole_{h_0} && L_2 \ar@/_.35cm/[dlll]^(.4){z_1} \ar[dl]|(.28)\hole^{f_1\circ j_1}& K_2 \ar[d]^{\check{k}_2}\ar@{>->}[l]_{l_2} \ar[r]^{r_2} & R_2 \ar[dr]|(.28)\hole_{\check{h}_2} \ar@/^.35cm/[drrr]_(.4){w_0}|(.29)\hole  && L_1 \ar@/_.35cm/[dlll]^(.4){w_1} \ar[dl]|(.28)\hole^{\check{m}_1}& K_1 \ar[d]^{\check{k}_1}\ar@{>->}[l]_{l_1} \ar[rr]^{r_1} && R_1\ar[d]_{g_2\circ j_0} \\G_0 && \ar@{>->}[ll]^{f_0} D_0 \ar[rr]_{g_0}&& G_1  && \ar@{>->}[ll]^{\check{f}_1} \check{D}_1 \ar[rr]_{\check{g}_1}&& \check{G}_2 && \ar@{>->}[ll]^{\check{f}_2} \check{D}_2 \ar[rr]_{\check{g}_2}&& G_3 }\]
	As in the previous case, we already know that $m_1=\check{f}_1\circ w_1$ and $h_2=\check{g}_2\circ w_0$.
	
	Next, we switch $\dder{D}_0$ and $S_{j_0, j_1}(\dder{D}_2)$ getting
	\[\xymatrix@C=15pt{L_2\ar[d]^{f_0\circ z_1}&& K_2 \ar[d]_{\mathring{k}_2}\ar@{>->}[ll]_{l_2} \ar[r]^{r_2} & R_2 \ar@/^.35cm/[drrr]_(.4){y_0}|(.29)\hole \ar[dr]|(.28)\hole_{\mathring{h}_2} && L_0 \ar@/_.35cm/[dlll]^(.4){y_1} \ar[dl]|(.28)\hole^{\check{m}_0}& K_0 \ar[d]^{\check{k}_0}\ar@{>->}[l]_{l_0} \ar[r]^{r_0} & R_0 \ar[dr]|(.28)\hole_{\check{g}_1\circ z_0} \ar@/^.35cm/[drrr]_(.4){x_0}|(.29)\hole  && L_1 \ar@/_.35cm/[dlll]^(.4){x_1} \ar[dl]|(.28)\hole^{\check{m}_1}& K_1 \ar[d]^{\check{k}_1}\ar@{>->}[l]_{l_1} \ar[rr]^{r_1} && R_1\ar[d]_{g_2\circ j_0} \\G_0 && \ar@{>->}[ll]^{\check{f}_0} \check{D}_0 \ar[rr]_{\check{g}_0}&& \check{G}_1  && \ar@{>->}[ll]^{\mathring{f}_1} \mathring{D}_1 \ar[rr]_{\mathring{g}_1}&& \check{G}_2 && \ar@{>->}[ll]^{\check{f}_2} \check{D}_2 \ar[rr]_{\check{g}_2}&& G_3 }\]
	The usual identities $m_0=\check{f}_0\circ y_1$, $\check{h}_2=\mathring{g}_1\circ y_0$ hold. But the really important thing to notice is that the first match of these derivation coincide with $f'_0\circ c_1$
	\[	f_0'\circ c_1  =f'_0\circ q_3\circ c'_1 =f_0\circ q_1\circ c'_1=f_0\circ z_1\]
	
	Moreover, we can characterise the pair $(x_0, x_1)$. If we apply the first point of \Cref{lem:indep-global-left} to $\dder{D}$ we get an independence pair between the last two steps of the derivation above, which, since $(\X, \R)$ is well-switching, must coincide with $(x_0, x_1)$. Thus we have two diagrams
	
	\[\xymatrix{&R_0 \ar[dl]_{i_0} \ar@{>}[dr]^{x_0} \ar@/^.4cm/[dd]|\hole^(.6){z_0}\ar[d]_{x'_0}&&L_ 1\ar[d]_{w_1} \ar@/_.4cm/[ddrr]|(.41)\hole_(.7){i_1}\ar[drr]^{x'_1} \ar@{>}[rr]^{x_1}&&\mathring{D}_1 \ar@{>->}[r]^{\mathring{f}_1}&\check{G}_1\\ D_1 \ar@{>->}[d]_{f_1}&P'\ar@{>->}[d]_{p'_3} \ar[r]^{p'_4} \ar@{>->}[l]_{p'_1}&\check{D}_2\ar@{>->}[d]^{\check{f}_2}&\check{D}_1 \ar@{>->}[d]_{\check{f}_1}&&S'\ar@{>->}[r]_{s'_4} \ar[u]_{s'_3}\ar@{>->}[d]^{s'_1} \ar[ll]_{s'_2} &\check{D}_0 \ar[u]_{\check{g}_0} \ar@{>->}[d]^{\check{f}_0}\\G_1&\check{D}_1 \ar@{>->}[l]^{\check{f}_1}\ar[r]_{\check{g}_1}& \check{G}_2&G_1 && D_0 \ar[ll]^{g_0}  \ar@{>->}[r]_{f_0}& G_0 }\]
	From this diagram we can recover another identity. Notice that, if $p'_2\colon P'\to D_2$ is the last projection from $P'$, then
	\[f_2\circ p'_2\circ x'_0=g_1\circ p'_1\circ x'_0=g_1\circ i_0=f_2\circ b_0\]
	Implying, since $f_2$ is mono, that $p'_2\circ x'_0=b_0$. From this, we can also deduce that 
	\[\check{g}_2\circ x_0=\check{g}_2\circ p'_4\circ x'_0=g_2\circ p'_2\circ x'_0=g_2\circ b_0 \]

	Finally, switching the second and the third step we get
	\[\xymatrix@C=15pt{L_2\ar[d]^{f_0\circ z_1}&& K_2 \ar[d]_{\mathring{k}_2}\ar@{>->}[ll]_{l_2} \ar[r]^{r_2} & R_2 \ar@/^.35cm/[drrr]_(.4){v_0}|(.29)\hole \ar[dr]|(.28)\hole_{\mathring{h}_2} && L_1 \ar@/_.35cm/[dlll]^(.4){v_1} \ar[dl]|(.28)\hole^{\mathring{f}_1 \circ x_1}& K_1 \ar[d]^{\mathring{k}_1}\ar@{>->}[l]_{l_1} \ar[r]^{r_1} & R_1 \ar@/^.35cm/[drrr]_(.4){u_0}|(.29)\hole\ar[dr]_{\check{h}_1}|(.28)\hole   && L_0 \ar@/_.35cm/[dlll]^(.4){u_1} \ar[dl]^{\mathring{m}_0}|(.28)\hole& K_0 \ar[d]^{\mathring{k}_0}\ar@{>->}[l]_{l_0} \ar[rr]^{r_0} && R_0\ar[d]_{\check{g}_2\circ x_0} \\G_0 && \ar[ll]^{\check{f}_0} \check{D}_0 \ar[rr]_{\check{g}_0}&& \check{G}_1  && \ar@{>->}[ll]^{\bar{f}_1} \bar{D}_1 \ar[rr]_{\bar{g}_1}&& \bar{G}_2 && \ar@{>->}[ll]^{\bar{f}_2} \bar{D}_2 \ar[rr]_{\bar{g}_2}&& G_3 }\]
	As usual, this last switch brings identities $\check{m}_0=\bar{f}_1\circ u_1$ and $g_2\circ j_0=\bar{g}_2\circ u_0$. Applying the second point of \Cref{lem:indep-global-left} to $\dder{D}_0\cdot S_{j_0, j_1}(\dder{D}_2)\cdot S_{j_0, j_1}(\dder{D}_1)$ and the fact that $(\X, \R)$ is well-switching we get the following diagrams
	\[\xymatrix@R=15pt@C=15pt{&&R_2 \ar@/_1.2cm/[dddl]_(.4){y_0}|(.63)\hole\ar[dl]^{w_0}\ar[dd]^{v'_0} \ar@{>}[dr]^{v_0}&&&&&L_1 \ar[dl]^{w_1} \ar@/_1.2cm/[dddl]_(.4){i_1}|(.63)\hole\ar[dd]^{v'_1} \ar@{>}[dr]^{v_1}\\&\check{D}_2 \ar@{>->}[dl]^(.4){\check{f}_2}&&\bar{D}_1 \ar[dr]^{\bar{g}_1}  &&&\check{D}_1 \ar@{>->}[dl]^(.4){\check{f}_1}&&\check{D}_0 \ar[dr]^{\check{g}_0}\\\check{G}_2 && \bar{P}\ar[dr]_{\bar{p}_4} \ar@{>->}[ur]_{\bar{p}_3}\ar@{>->}[dl]^{\bar{p}_1} \ar[ul]^{\bar{p}_2}&&\bar{G}_2&G_1 && \bar{Q}\ar[dr]_{\bar{q}_4} \ar@{>->}[ur]_{\bar{q}_3}\ar@{>->}[dl]^{\bar{q}_1} \ar[ul]^{\bar{q}_2}&&\check{G}_1\\& \mathring{D}_1 \ar[ul]^{\mathring{g}_1}&&\bar{D}_2 \ar@{>->}[ur]_{\bar{f}_2}&&& D_0 \ar[ul]^{g_0}&&\mathring{D}_1 \ar@{>->}[ur]_{\mathring{f}_1}}\]
	
We have to build the dotted arrows in the diagram below
	\[\xymatrix@C=15pt@R=30pt{G_0 && \ar@{>->}[ll]_{\hat{f}_0} \hat{D}_0 \ar[rr]^{\hat{g}_0}&& \hat{G}_1  && \ar@{>->}[ll]_{\tilde{f}_1} \tilde{D}_1 \ar[rr]^{\tilde{g}_1}&& G'_2 && \ar@{>->}[ll]_{f'_2} D'_2 \ar[rr]^{g'_2}&& G_3 \\
		L_2\ar[d]^{f_0\circ z_1} \ar[u]_{f'_0\circ c_1}&& K_2 \ar[u]^{\hat{k}_2}\ar[d]_{\mathring{k}_2}\ar@{>->}[ll]_{l_2} \ar[r]^{r_2} & R_2 \ar@/_.35cm/[urrr]^(.4){e_0}|(.29)\hole \ar[ur]|(.28)\hole^{\hat{h}_2} \ar@/^.35cm/[drrr]_(.4){v_0}|(.29)\hole \ar@/_.2cm/[dr]|(.28)\hole_{\mathring{h}_2} && L_1 \ar@/^.35cm/[ulll]_(.4){e_1} \ar[ul]|(.28)\hole_{\hat{m}_1} \ar@/_.35cm/[dlll]^(.4){v_1} \ar[dl]|(.28)\hole^{\mathring{f}_1 \circ x_1}& K_1  \ar[u]_{\hat{k}_1}\ar[d]^{\mathring{k}_1}\ar@{>->}[l]_{l_1} \ar[r]^{r_1} & R_1 \ar@/^.35cm/[drrr]_(.4){u_0}|(.29)\hole\ar@/_.2cm/[dr]_{\check{h}_1}|(.28)\hole \ar@/_.35cm/[urrr]^(.4){t_0}|(.29)\hole \ar[ur]|(.28)\hole^{\hat{g}_1\circ c_0}  && L_0\ar[ul]|(.28)\hole_{\hat{m}_0} \ar@/^.35cm/[ulll]_(.4){t_1} \ar@/_.35cm/[dlll]^(.4){u_1} \ar[dl]^{\mathring{m}_0}& K_0 \ar[d]_{\mathring{k}_0}\ar[u]^{\hat{k}_0}\ar@{>->}[l]_{l_0} \ar[rr]^{r_0} && R_0 \ar[u]^{g_2\circ b_0}\ar[d]_{\check{g}_2\circ x_0} \\G_0 \ar@/^.3cm/[uu]^{\id{G_0}}&& \ar@{>->}[ll]^{\check{f}_0} \check{D}_0 \ar@{.>}@/_.3cm/[uu]_(.3){\phi_{0}}|\hole  \ar[rr]_{\check{g}_0}&& \check{G}_1  \ar@{.>}@/_.0cm/[uu]^(.2){\psi_{1}}|(.43)\hole |(.57)\hole  && \ar@{>->}[ll]^{\bar{f}_1} \bar{D}_1 \ar@{.>}@/^.3cm/[uu]^(.3){\phi_{1}}|\hole \ar[rr]_{\bar{g}_1}&& \bar{G}_2 \ar@{.>}@/_.0cm/[uu]^(.2){\psi_{2}}|(.43)\hole |(.57)\hole && \ar@{>->}[ll]^{\bar{f}_2} \ar@{.>}@/_.4cm/[uu]_(.3){\phi_{2}}|\hole \bar{D}_2 \ar[rr]_{\bar{g}_2}&& G_3 \ar@{.>}@/_.3cm/[uu]_{\psi_{3}} }\]
	
	Now, we have already proved that $f_0\circ z_1=f'_0\circ c_1$, thus by \Cref{prop:unique} we have isomorphisms $\phi_{0}\colon \check{D}_0\to \hat{D}_0$ and $\psi_{1}\colon \check{G}_1\to \hat{G}_1$.
	Notice that
	\[
	\hat{f}_0\circ \phi_{0}\circ v_1  = \check{f}_0\circ v_1 =\check{f}_0\circ \check{q}_3\circ v'_1=f_0\circ \check{q}_1\circ v'_1=f_0\circ i_1=\hat{f}_0\circ e_1\]
Thus, using the fact that $\hat{f}_0$ is mono, we can conclude that $\phi_0\circ v_1=e_1$. Thus we have
\[\psi_1\circ \mathring{f}\circ x_1=\psi_1\circ \check{g}_0\circ v_1= \hat{g}_0\circ \phi_0\circ v_1=\hat{g}_0\circ e_1=\hat{m}_1 \]
	 \Cref{prop:unique} now yields the $\phi_1\colon \bar{D}_1\to\tilde{D}_1$ and $\psi_2\colon \bar{G}_2\to G'_2$.  We can compute to get
	 \[
	 \tilde{f}_1\circ \phi_1\circ v_0=\psi_1\circ \bar{f}_1\circ v_0=\psi_1\circ \mathring{h}_2=\hat{h}_2=\tilde{f}_1\circ e_0\]
	 Hence $\phi_1\circ v_0=e_0$. On the other hand, notice that
	 \[\hat{f}_0\circ s_4\circ t'_1=f'_0\circ s_1\circ t'_1=f'_0\circ a_1=m_0=\check{f}\circ y_1=\hat{f}_0\circ \phi_0\circ y_1\]
	 Yielding $s_4\circ t'_1=\phi_0\circ y_1$. But then
	 \begin{align*}
&\tilde{f}_1\circ \phi_1\circ u_1=\psi_1\circ \bar{f}_1\circ u_1=\psi_1\circ \check{m}_0= \psi_1\circ \check{g}_0\circ y_1\\=&\hat{g}_0\circ \phi_0\circ y_1=\hat{g}_0\circ s_4\circ t'_1=\tilde{f}_1\circ s_3\circ t'_1=\tilde{f}_1\circ t_1 
	 \end{align*}
	 
	 In turn, the previous identity entails that
	\[\psi_2\circ \mathring{m}_0=\psi_2\circ \bar{g}_1\circ u_1 = \tilde{g}_1\circ \phi_1\circ u_1 = \tilde{g}_1\circ t_1=\hat{m}_0\]
	
	We can use \Cref{prop:unique} a third time to get $\phi_2\colon \bar{D}_2\to D'_2$ and $\psi_3\colon G_3\to G_3$.  Now, as a last computation, we have
	\[f'_2\circ t_0= \hat{g}_2\circ c_0=\psi_2\circ h_1=\psi_2\circ \bar{f}_2\circ u_0=f'_2\circ \phi_2\circ u_0\]
	Therefore $\phi_2\circ u_0=t_0$ and this concludes the proof.
\end{proof}

We next prove some technical lemma that will be useful to show the
possibility of transforming a derivation into a switch equivalent one
by switching steps only along inversions.

\begin{lemma}
  \label{le:switchA}
  Let $\der{D}_i$ for $i \in \interval[0]{n}$ be derivation sequences
  and consider a switching sequence consisting only of inversions
  \begin{center}
    $\der{D}_0 \shift{\nu_1} \der{D}_1 \shift{\nu_2} \ldots
    \shift{\nu_n} \der{D}_n$
  \end{center}
  Let $i \in \interval{n}$ be such that
  $\transp{i} \in \inv{\nu_{1,n}}$ and for all $k > i+1$ it holds
  $\nu_{1,n}(k) > \nu_{1,n}(i+1)$.  Then there is a switching sequence
  consisting only of inversions that starts with $\transp{i}$, i.e.
  \begin{center}
    $\der{D}_0 \shift{\transp{i}} \der{D}_1' \shift{\nu_2'} \ldots
    \shift{\nu_n'} \der{D}_n$
  \end{center}
\end{lemma}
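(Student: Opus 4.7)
The plan is to proceed by induction on $n$. For the base case $n=1$, the single transposition $\nu_1$ must equal $\transp{i}$, since the hypothesis $\transp{i}\in\inv{\nu_{1,1}}=\inv{\nu_1}$ forces this, and the statement is immediate.

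For the inductive step, I would examine the first transposition $\nu_1 = \transp{j}$. A preliminary observation rules out $j = i+1$: if $\nu_1$ were $\transp{i+1}$, its being an inversion of $\nu_{1,n}$ would yield $\nu_{1,n}(i+2)<\nu_{1,n}(i+1)$, contradicting the hypothesis $\nu_{1,n}(k)>\nu_{1,n}(i+1)$ for all $k>i+1$. The remaining possibilities split into three cases: (i) $j = i$, which trivially concludes as the sequence already begins with the desired swap; (ii) $|j-i|\ge 2$, where $\transp{j}$ and $\transp{i}$ act on disjoint positions; (iii) $j=i-1$, the braid-theoretic case.

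For case (ii), the plan is first to apply the inductive hypothesis to the tail $\der{D}_1\shift{\nu_2}\ldots\shift{\nu_n}\der{D}_n$---which still satisfies the hypothesis for the same $i$, since $\transp{j}$ fixes positions $i$ and $i+1$ combinatorially---to obtain an equivalent tail whose first transposition is $\transp{i}$. Then the two leading transpositions $\transp{j}, \transp{i}$ operate on disjoint positions and may be interchanged at the derivation level using Lemma~\ref{lem:indep-global-left} (to propagate the independence required to move the second swap ahead of the first step) together with Lemma~\ref{lem:switch-confluence} for the three-step consistency. One must also verify that the rearranged tail indeed consists of inversions of the updated permutation $\nu_{2,n}=\nu_{1,n}\circ\transp{j}$, which follows from a case analysis using the disjointness of $j$ from $\{i,i+1\}$.

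The main obstacle is case (iii), $j=i-1$, which corresponds to the braid relation $\transp{i-1}\transp{i}\transp{i-1}=\transp{i}\transp{i-1}\transp{i}$ at the combinatorial level. Here the plan is, iteratively, to bring a $\transp{i}$ into the second position and then a further $\transp{i-1}$ into the third, via the inductive hypothesis and the commutation arguments of case (ii), so that the sequence begins with $\transp{i-1}\transp{i}\transp{i-1}$. Lemma~\ref{lem:switch-confluence} then lifts the combinatorial braid identity to a derivation-level equivalence, replacing this initial segment by $\transp{i}\transp{i-1}\transp{i}$ and thereby placing $\transp{i}$ first as required. The delicate point throughout---and the crux of the argument---is verifying that each combinatorial rearrangement is realisable as a valid switching sequence of inversions: the switchability of each intermediate pair of steps relies on the uniqueness of independence pairs granted by the well-switching hypothesis and on Lemma~\ref{lem:indep-global-left}, while the preservation of the inversion property under the moves uses the structural condition that $\nu_{1,n}(i+1)$ is the minimum of $\{\nu_{1,n}(k):k\ge i+1\}$.
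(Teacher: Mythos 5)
Your proposal is correct and follows essentially the same route as the paper's proof: induction on $n$ with a case split on the first transposition $\nu_1=\transp{j}$ (trivial case $j=i$, disjoint case handled by applying the inductive hypothesis to the tail and commuting the two leading switches, and the braid case $j=i-1$ handled by two applications of the inductive hypothesis to produce the prefix $\transp{i-1}\transp{i}\transp{i-1}$, then lifted to $\transp{i}\transp{i-1}\transp{i}$ via Lemma~\ref{lem:indep-global-left}(\ref{lem:indep-global-left:1}) and Lemma~\ref{lem:switch-confluence}), with $j=i+1$ excluded by the hypothesis on $\nu_{1,n}$. The only cosmetic difference is that you rule out $j=i+1$ up front whereas the paper does so at the end.
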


\begin{proof}
  We proceed by induction on $n$. If $n=1$, necessarily
  $\nu_1 =  \transp{i}$ and we are done.

  If $n>1$, let $\nu_1 = \transp{j}$.  We need three cases,
  depending on how $\nu_1$ relates to $\transp{i}$.

  \bigskip
  \noindent
  (1) $j=i$\\
  In this case there is nothing to prove: $\nu_1 = \transp{i}$, hence
  the original switching sequence is already of the desired shape.

  \bigskip
  \noindent
  (2) $j<i-1$ or $j>i-1$\\
  In this case $\nu_1$ does not ``interfere'' with $\transp{i}$, i.e.~$i$ still satisfies the hypotheses of the lemma with respect to the
  switching sequence
  $\der{D}_1 \shift{\nu_2} \ldots \shift{\nu_n} \der{D}_n$.

  Hence we apply the inductive hypothesis and derive the existence
  of a switching sequence
  \begin{center}
    $\der{D}_1 \shift{\transp{i}} \der{D}_2' \shift{\nu_3'} \ldots \shift{\nu_n'} \der{D}_n$
  \end{center}

  Recall that $\der{D}_0 \shift{\nu_1} \der{D}_1$ and, given the
  working assumption (ii), the switches $\nu_1$ and $\transp{i}$ can
  be done in reverse order, thus getting, as desired
  \begin{center}
    $\der{D}_0 \shift{\transp{i}} \der{D}_1' \shift{\nu_1} \ldots
    \shift{\nu_n'} \der{D}_n$
  \end{center}

  \bigskip
  \noindent
  (3) $j=i-1$, i.e., $\nu_1 = \transp{i-1}[i]$\\
  In this case $\der{D}_1$ is obtained from $\der{D}_0$ by switching
  steps $i-1$ and $i$, as pictorially reported in the first two
  columns of Fig.~\ref{fi:switchA}. Since we are dealing with
  switching sequence consisting only of inversions and by hypothesis
  $\nu_{1,n}(i) > \nu_{1,n}(i+1)$, whence
  $\nu_{2,n}(i-1) > \nu_{2,n}(i+1)$, we deduce that necessarily
  $\nu_{2,n}(i) > \nu_{2,n}(i+1)$ and $\nu_{2,n}(k) > \nu_{2,n}(i+1)$
  for all $k > i+1$. Hence we can use the inductive hypothesis to
  deduce that the switching sequence from $\der{D}_1$ can start by
  switching $\transp{i}$, i.e.~we obtain (see the third column in
  Fig.~\ref{fi:switchA})
  \begin{center}
    $\der{D}_1 \shift{\transp{i}} \der{D}_2' \shift{\nu_3'} \ldots
    \shift{\nu_n'} \der{D}_n$
  \end{center}

  Now, if we focus on $\der{D}_2'$ we observe that
  $\nu_{3,n}'(i-1) > \nu_{3,n}'(i)$ and
  $\nu_{3,n}'(k) > \nu_{3,n}'(i)$ for all $k > i$. Hence we can apply
  again the inductive hypothesis and get a switching sequence
    \begin{center}
    $\der{D}_2' \shift{\transp{i-1}[i]} \der{D}_3'' \shift{\nu_3''} \ldots
    \shift{\nu_n''} \der{D}_n$
  \end{center}
  as depicted in Fig.~~\ref{fi:switchA}, starting from the fourth column.

  By Lemma~\ref{lem:indep-global-left}(\ref{lem:indep-global-left:1})
  we can reorganise the first three switchings as

  \begin{center}
    $\der{D}_0 \shift{\transp{i}} \der{D}_1'''
    \shift{\transp{i-1}[i]} \der{D}_2'''
    \shift{\transp{i}} \der{D}_3''$
  \end{center}

  The fact that the three switchings above lead to $\der{D}_3''$ is a
  consequence of Lemma~\ref{lem:switch-confluence}.

  \bigskip

  Note that it cannot be $j=i+1$, otherwise we would have
  $\nu_{1,n}(j+1) < \nu_{1,n}(i)$, contradicting the hypotheses. Thus
  this case can be ignored and the proof is completed.  
\end{proof}

\begin{figure}
  \centering
  \begin{tikzpicture}[font=\small]
    \def\colorA{green!60!black}
    \def\colorB{blue}
    \def\colorC{red!100}

    \def\vlen{5mm}
    \def\hlen{18mm}
    \def\gap{1mm}

    \node[above] at (0*\hlen,0) {$\der{D}_0$};
    \node[above] at (0.5*\hlen,0) {$\shift{\nu_1}$};

    \draw (0*\hlen,0) -- (0*\hlen,-2*\vlen+\gap);

    \draw[->, \colorA] (0*\hlen,-2*\vlen) -- (0*\hlen,-3*\vlen+\gap) node[midway, left] {$i-1$};
    \draw[->, \colorB] (0*\hlen,-3*\vlen) -- (0*\hlen,-4*\vlen+\gap) node[midway, left] {$i$};
    \draw[->, \colorC] (0*\hlen,-4*\vlen) -- (0*\hlen,-5*\vlen+\gap) node[midway, left] {$i+1$};
    \draw (0*\hlen,-5*\vlen) -- (0*\hlen,-7*\vlen);
    
    \node[above] at (1*\hlen,0) {$\der{D}_1$};
    \node[above] at (1.5*\hlen,0) {$\shift{\transp{i}}$};

    \draw (1*\hlen,0) -- (1*\hlen,-2*\vlen+\gap);

    \draw[->, \colorB] (1*\hlen,-2*\vlen) -- (1*\hlen,-3*\vlen+\gap) node[midway, left] {};
    \draw[->, \colorA] (1*\hlen,-3*\vlen) -- (1*\hlen,-4*\vlen+\gap) node[midway, left] {};
    \draw[->, \colorC] (1*\hlen,-4*\vlen) -- (1*\hlen,-5*\vlen+\gap) node[midway, left] {};
    \draw (1*\hlen,-5*\vlen) -- (1*\hlen,-7*\vlen);

    
    \node[above] at (2*\hlen,0) {$\der{D}_2'$};
    \node[above] at (2.5*\hlen,0) {$\shift{\transp{i-1}[i]}$};
 
    \draw (2*\hlen,0) -- (2*\hlen,-2*\vlen+\gap);
    
    \draw[->, \colorB] (2*\hlen,-2*\vlen) -- (2*\hlen,-3*\vlen+\gap) node[midway, left] {};
    \draw[->, \colorC] (2*\hlen,-3*\vlen) -- (2*\hlen,-4*\vlen+\gap) node[midway, left] {};
    \draw[->, \colorA] (2*\hlen,-4*\vlen) -- (2*\hlen,-5*\vlen+\gap) node[midway, left] {};
    \draw (2*\hlen,-5*\vlen) -- (2*\hlen,-7*\vlen);

    \node[above] at (3*\hlen,0) {$\der{D}_3''$};
    \node[above] at (3.5*\hlen,0) {$\shift{\nu_3''}$};

    \draw (3*\hlen,0) -- (3*\hlen,-2*\vlen+\gap);
    \draw[->, \colorC] (3*\hlen,-2*\vlen) -- (3*\hlen,-3*\vlen+\gap) node[midway, left] {};
    \draw[->, \colorB] (3*\hlen,-3*\vlen) -- (3*\hlen,-4*\vlen+\gap) node[midway, left] {};
    \draw[->, \colorA] (3*\hlen,-4*\vlen) -- (3*\hlen,-5*\vlen+\gap) node[midway, left] {};
    \draw (3*\hlen,-5*\vlen) -- (3*\hlen,-7*\vlen);
  
    \draw (4*\hlen,0*\vlen) node[above] {$\dots$};

    \node[above] at (5*\hlen,0) {$\der{D}_n$};
    \node[above] at (4.5*\hlen,0) {$\shift{\nu_{n-1}}$};

  \end{tikzpicture}
  \caption{Picture for the proof of Lemma~\ref{le:switchA}}
  \label{fi:switchA}
\end{figure}
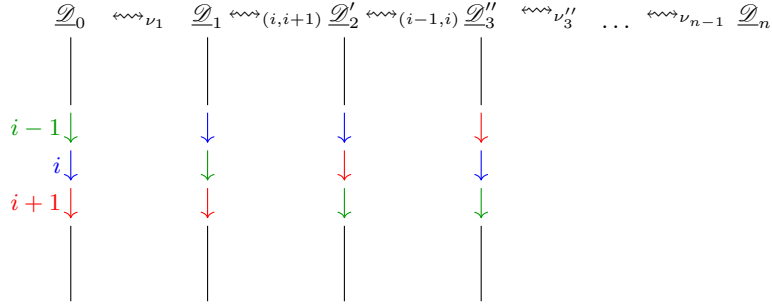

\begin{remark}
  \label{rem:switchA}
  Observe that given a switching sequence as in the lemma above
  $\der{D}_0 \shift{\nu_1} \der{D}_1 \shift{\nu_2} \ldots
  \shift{\nu_n} \der{D}_n$, if we consider the inversion with largest index
  \begin{center}
    $k = \max \{j \mid \transp{j} \in \inv{\nu_{1,n}}\}$
  \end{center}
  then it satisfies the hypotheses of the lemma.
\end{remark}

\begin{lemma}
  \label{le:switchB}
  Let $\der{D}_i$ for $i \in \interval[0]{n}$ be derivation sequences
  and consider a switching sequence consisting only of inversions
  \begin{center}
    $\der{D}_0 \shift{\nu_1} \der{D}_1 \shift{\nu_2} \ldots
    \shift{\nu_n} \der{D}_n$
  \end{center}
  Let $i \in \interval{n}$ be such that
  $\transp{i} \in \inv{\nu_{1,n}}$ and assume that steps $i$ and $i+1$ in $\der{D}_0$ are sequentially independent. Then there is a switching sequence
  consisting only of inversions that starts with $\transp{i}$, i.e.,
  \begin{center}
    $\der{D}_0 \shift{\transp{i}} \der{D}_1' \shift{\nu_2'} \ldots
    \shift{\nu_n'} \der{D}_n$
  \end{center}
\end{lemma}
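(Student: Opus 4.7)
The plan is to proceed by induction on $n$, adopting the same case-analysis scaffolding used for \Cref{le:switchA} but replacing its strong maximality hypothesis by the weaker assumption that steps $i$ and $i+1$ of $\der{D}_0$ are sequentially independent. For the base case $n=1$, the unique transposition $\nu_1=\transp{j}$ has only $(j,j+1)$ as an adjacent inversion, so the hypothesis $\transp{i}\in\inv{\nu_1}$ forces $j=i$ and nothing remains to prove.

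For the inductive step I would set $\nu_1=\transp{j}$ and distinguish three cases. If $j=i$, the sequence is already of the required shape. If $|j-i|\geq 2$, then $\nu_1$ acts on positions disjoint from $\{i,i+1\}$, so the steps of $\der{D}_1$ at positions $i,i+1$ coincide with those of $\der{D}_0$ and remain sequentially independent; using $\nu_{1,n}=\nu_{2,n}\circ\nu_1$ one checks that $\transp{i}\in\inv{\nu_{2,n}}$. The induction hypothesis applied to the tail $\der{D}_1\shift{\nu_2}\cdots\shift{\nu_n}\der{D}_n$ supplies a switching sequence of inversions starting with $\transp{i}$, and exchanging the two commuting disjoint transpositions at the head produces the required sequence out of $\der{D}_0$.

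The delicate case is $j\in\{i-1,i+1\}$; by symmetry I would focus on $j=i-1$, so that both pairs $(i-1,i)$ and $(i,i+1)$ of $\der{D}_0$ are switchable (the first from validity of $\nu_1$, the second by the standing hypothesis). Independence of steps $i,i+1$ already lets me open with $\der{D}_0\shift{\transp{i}}\der{D}_1'$; the real challenge is to extend this into a switching sequence of inversions reaching $\der{D}_n$. My plan is to exploit the braid identity $\transp{i-1}\cdot\transp{i}\cdot\transp{i-1}=\transp{i}\cdot\transp{i-1}\cdot\transp{i}$ on the three-step sub-derivation sitting at positions $i-1,i,i+1$: using the chain $\nu_{1,n}(i-1)>\nu_{1,n}(i)>\nu_{1,n}(i+1)$ one identifies inside $\nu_1,\nu_2,\ldots,\nu_n$ a fragment whose cumulative effect on these three steps is their full reversal; \Cref{lem:switch-confluence} then substitutes the left-braid order $\transp{i-1},\transp{i},\transp{i-1}$ by the right-braid order $\transp{i},\transp{i-1},\transp{i}$ up to abstraction equivalence, while \Cref{lem:indep-global-left} transports the intermediate switchabilities along this rearrangement, so that $\transp{i}$ becomes a legitimate opening switch. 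The induction hypothesis then disposes of the remaining $n-2$ switches.

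The main obstacle will be precisely the rigorous extraction of this three-step reversal segment without circular reasoning: both clauses of \Cref{lem:indep-global-left} are conditional on the existence of specific two-step switching sub-sequences, so each invocation must be matched with an intermediate switchability genuinely witnessed by the given sequence rather than by the sequence being constructed. This is where the bulk of the care is required, and where the argument genuinely diverges from the cleaner proof of \Cref{le:switchA}.
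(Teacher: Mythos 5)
Your base case and the non-interfering case $|j-i|\ge 2$ are fine, but the treatment of the adjacent cases contains two genuine gaps. First, the cases $j=i-1$ and $j=i+1$ are \emph{not} symmetric, and the asymmetry is precisely the phenomenon this part of the paper is built around: by \Cref{lem:indep-global-left} independence survives when a third step is moved \emph{forward} past an independent pair (item~1), but the analogous statement for moving a step backward fails in general (cf.\ \Cref{ex:non-global}) and only a conditional version (item~2) remains. So you cannot dispose of $j=i+1$ ``by symmetry''; it is in fact the harder of the two, since after $\transp{i+1}[i+2]$ the step now sitting at position $i+1$ is a transported copy of the old step $i+2$, about whose independence from step $i$ the standing hypothesis says nothing. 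Second, the braid-identity step is not a technical ``obstacle to be handled with care'' but the entire mathematical content of the lemma, and your proposal gives no way to carry it out: the transpositions realising the relative reversal $\nu_{1,n}(i-1)>\nu_{1,n}(i)>\nu_{1,n}(i+1)$ are scattered through $\nu_1,\dots,\nu_n$ and interleaved with transpositions acting on other positions, so there is no contiguous three-switch fragment to which \Cref{lem:switch-confluence} (which only concerns two specific switching sequences on a length-$3$ derivation) could be applied.

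The paper avoids both problems by a different decomposition. Rather than case-splitting on $\nu_1$, it first invokes \Cref{le:switchA} with $k=\max\{j \mid \transp{j}\in\inv{\nu_{1,n}}\}$ to bring $\transp{k}$ to the front; since $\transp{i}$ is an inversion, $k\ge i$, so the left-adjacent case never arises and the only hard case is $k=i+1$. There the three-switch prefix $\transp{i+1}[i+2]\cdot\transp{i}\cdot\transp{i+1}[i+2]$ is \emph{manufactured} by two further applications of \Cref{le:switchA} and of the inductive hypothesis, using \Cref{lem:indep-global-left}(1) to transport the independence of steps $i,i+1$ forward so that the inductive hypothesis becomes applicable, and only then is it reordered into $\transp{i}\cdot\transp{i+1}[i+2]\cdot\transp{i}$ via \Cref{lem:indep-global-left}(2) and \Cref{lem:switch-confluence}. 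You would need to restructure your induction along these lines for the adjacent case to go through.
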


\begin{proof}
  We proceed by induction on $n$. If $n=1$, necessarily
  $\nu_1 =  \transp{i}$ and we are done.

  If $n>1$, take $k = \max \{j \mid \transp{j} \in \inv{\nu_{1,n}}\}$, which,
  as observed in Remark~\ref{rem:switchA}, satisfies the hypotheses of
  Lemma~\ref{le:switchA}. Hence, by the mentioned lemma, we can start
  the switching sequence with the transposition $\transp{k}$, thus
  obtaining
  \begin{center}
    $\der{D}_0 \shift{\transp{k}} \der{D}_1' \shift{\nu_2'} \ldots
    \shift{\nu_n'} \der{D}_n$
  \end{center}

  Moreover, since $\transp{i}$ is an inversion, we have that
  $\i \leq k$.  We distinguish various case depending on how $k$
  relates to $i$.

  \bigskip
  \noindent
  (i) $j=i$\\
  In this case we are done: $\nu_1' = \transp{i}$, hence we obtained a
  switching sequence of the desired shape.

  \bigskip
  \noindent
  (ii) $k > i+1$\\
  In this case the switching $\transp{k}$ does not ``interfere'' with
  $\transp{i}$, as depicted in Fig.~\ref{fi:switchB1} (first two
  columns).

  Hence if we focus on
  \begin{center}
    $\der{D}_1' \shift{\nu_2'} \ldots
    \shift{\nu_n'} \der{D}_n$
  \end{center}
  we can observe that
  $\nu_{2,n}'(i)=\nu_{1,n}(i) > \nu_{1,n}(i+1)=\nu_{2,n}'(i+1)$.
  Thus we can use the inductive hypothesis and obtain (see
  Fig.~\ref{fi:switchB1}, third column)
  \begin{center}
    $\der{D}_1' \shift{\transp{i}} \der{D}_2'' \shift{\nu_3''} \ldots
    \shift{\nu_n''} \der{D}_n$
  \end{center}

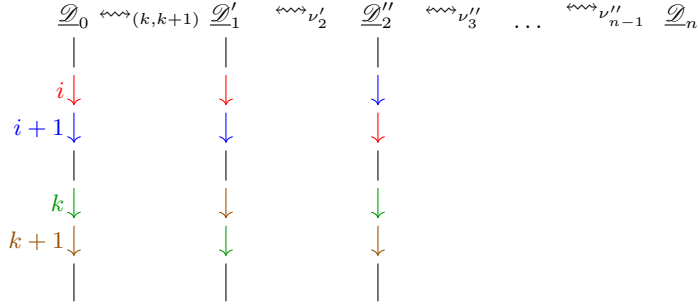
\begin{figure}
  \centering
  \begin{tikzpicture}[font=\small]
    \def\colorA{red!100}
    \def\colorB{blue}
    \def\colorC{green!60!black}
    \def\colorD{orange!60!black}

    \def\vlen{5mm}
    \def\hlen{20mm}
    \def\gap{1mm}

    \node[above] at (0*\hlen,0) {$\der{D}_0$};
    \node[above] at (0.5*\hlen,0) {$\shift{\transp{k}}$};

    \draw (0*\hlen,0) -- (0*\hlen,-1*\vlen+\gap);

    \draw[->, \colorA] (0*\hlen,-1*\vlen) -- (0*\hlen,-2*\vlen+\gap) node[midway, left] {$i$};
    \draw[->, \colorB] (0*\hlen,-2*\vlen) -- (0*\hlen,-3*\vlen+\gap) node[midway, left] {$i+1$};
    \draw (0*\hlen,-3*\vlen) -- (0*\hlen,-4*\vlen+\gap);

    \draw[->, \colorC] (0*\hlen,-4*\vlen) -- (0*\hlen,-5*\vlen+\gap) node[midway, left] {$k$};
    \draw[->, \colorD] (0*\hlen,-5*\vlen) -- (0*\hlen,-6*\vlen+\gap) node[midway, left] {$k+1$};
    \draw (0*\hlen,-6*\vlen) -- (0*\hlen,-7*\vlen);

    \node[above] at (1*\hlen,0) {$\der{D}_1'$};
    \node[above] at (1.5*\hlen,0) {$\shift{\nu_2'}$};

    \draw (1*\hlen,0) -- (1*\hlen,-1*\vlen+\gap);

    \draw[->, \colorA] (1*\hlen,-1*\vlen) -- (1*\hlen,-2*\vlen+\gap) node[midway, left] {};
    \draw[->, \colorB] (1*\hlen,-2*\vlen) -- (1*\hlen,-3*\vlen+\gap) node[midway, left] {};
    \draw (1*\hlen,-3*\vlen) -- (1*\hlen,-4*\vlen+\gap);

    \draw[->, \colorD] (1*\hlen,-4*\vlen) -- (1*\hlen,-5*\vlen+\gap) node[midway, left] {};
    \draw[->, \colorC] (1*\hlen,-5*\vlen) -- (1*\hlen,-6*\vlen+\gap) node[midway, left] {};
    \draw (1*\hlen,-6*\vlen) -- (1*\hlen,-7*\vlen);

    \node[above] at (2*\hlen,0) {$\der{D}_2''$};
    \node[above] at (2.5*\hlen,0) {$\shift{\nu_3''}$};

    \draw (2*\hlen,0) -- (2*\hlen,-1*\vlen+\gap);

    \draw[->, \colorB] (2*\hlen,-1*\vlen) -- (2*\hlen,-2*\vlen+\gap) node[midway, left] {};
    \draw[->, \colorA] (2*\hlen,-2*\vlen) -- (2*\hlen,-3*\vlen+\gap) node[midway, left] {};
    \draw (2*\hlen,-3*\vlen) -- (2*\hlen,-4*\vlen+\gap);

    \draw[->, \colorC] (2*\hlen,-4*\vlen) -- (2*\hlen,-5*\vlen+\gap) node[midway, left] {};
    \draw[->, \colorD] (2*\hlen,-5*\vlen) -- (2*\hlen,-6*\vlen+\gap) node[midway, left] {};
    \draw (2*\hlen,-6*\vlen) -- (2*\hlen,-7*\vlen);
    
    \draw (3*\hlen,0*\vlen) node[above] {$\dots$};

    \node[above] at (4*\hlen,0) {$\der{D}_n$};
    \node[above] at (3.5*\hlen,0) {$\shift{\nu_{n-1}''}$};

  \end{tikzpicture}
  \caption{Picture for the proof of Lemma~\ref{le:switchB}, item (ii)}
  \label{fi:switchB1}
\end{figure}

  Now, since $k>i+1$, the switchings $\transp{i}$ and $\transp{k}$ can
  be performed in reverse order and thus we obtain
  \begin{center}
    $\der{D}_0 \shift{\transp{i}} \der{D}_1'' \shift{\transp{k}}
    \der{D}_2'' \shift{\nu_3''} \ldots \shift{\nu_n''} \der{D}_n$
  \end{center}
  as desired.

  \bigskip
  \noindent
  (iii) $k = i+1$\\
  The situation is depicted in Fig.~\ref{fi:switchB2}. In
  $\der{D}_1'$, by the choice of $k$ and the fact that all $\nu_i$ are
  inversions, one can deduce that $i$ satisfies the hypotheses of
  Lemma~\ref{le:switchA}. Hence we obtain (see Fig.~\ref{fi:switchB2}, first two columns)
  \begin{center}
    $\der{D}_1' \shift{\transp{i}} \der{D}_2'' \shift{\nu_2''}
    \der{D}_3'' \shift{\nu_4''} \ldots \shift{\nu_n''} \der{D}_n$
  \end{center}

  Since in $\der{D}_0$ steps in positions $i$ and $i+1$ are sequentially
  independent, we can use
  Lemma~\ref{lem:indep-global-left}(\ref{lem:indep-global-left:1}) and
  we have that the corresponding steps in positions $i+1$ and $i+2$ in
  $\der{D}_2''$ are still sequentially independent. Hence, we can use
  the inductive hypothesis on $\der{D}_2''$ and deduce the existence
  of a switching sequence where $\transp{i+1}[i+2]$ can be switched
  first, i.e.~we obtain (see Fig.~\ref{fi:switchB2}, third column)
  \begin{center}
    $\der{D}_2'' \shift{\transp{i+1}[i+2]}
    \der{D}_3''' \shift{\nu_4'''} \ldots \shift{\nu_n'''} \der{D}_n$
  \end{center}
  
  Putting things together, we obtain
  \begin{center}
    $\der{D}_0 \shift{\transp{i+1}[i+2]} \der{D}_1' \shift{\transp{i}}
    \der{D}_2'' \shift{\transp{i+1}[i+2]} \der{D}_3''' \shift{\nu_4'''}\ldots \shift{\nu_n'''} \der{D}_n$
  \end{center}
  Since we additionally know that in $\der{D}_0$ the steps $i$ and
  $i+1$ are sequentially independent, we can use
  Lemma~\ref{lem:indep-global-left}(\ref{lem:indep-global-left:2}) to
  reorder the first three switchings, obtaining
  \begin{center}
    $\der{D}_0 \shift{\transp{i}} \der{D}_1' \shift{\transp{i+1}[i+2]}
    \der{D}_2'' \shift{\transp{i}} \der{D}_3''' \shift{\nu_4'''}\ldots \shift{\nu_n'''} \der{D}_n$
  \end{center}
  as desired.  The fact that the three switchings above lead to
  $\der{D}_3'''$ is a consequence of Lemma~\ref{lem:switch-confluence}.
\end{proof}

\begin{figure}
  \centering
  \begin{tikzpicture}[font=\small]
    \def\colorA{red!100}
    \def\colorB{blue}
    \def\colorC{green!60!black}

    \def\vlen{5mm}
    \def\hlen{22mm}
    \def\gap{1mm}

    \node[above] at (0*\hlen,0) {$\der{D}_0$};
    \node[above] at (0.5*\hlen,0) {$\shift{\transp{i+1}[i+2]}$};

    \draw (0*\hlen,0) -- (0*\hlen,-2*\vlen+\gap);

    \draw[->, \colorA] (0*\hlen,-2*\vlen) -- (0*\hlen,-3*\vlen+\gap) node[midway, left] {$i$};
    \draw[->, \colorB] (0*\hlen,-3*\vlen) -- (0*\hlen,-4*\vlen+\gap) node[midway, left] {$k=i+1$};
    \draw[->, \colorC] (0*\hlen,-4*\vlen) -- (0*\hlen,-5*\vlen+\gap) node[midway, left] {$i+2$};
    \draw (0*\hlen,-5*\vlen) -- (0*\hlen,-7*\vlen);
    
    \node[above] at (1*\hlen,0) {$\der{D}_1'$};
    \node[above] at (1.5*\hlen,0) {$\shift{\transp{i}}$};

    \draw (1*\hlen,0) -- (1*\hlen,-2*\vlen+\gap);

    \draw[->, \colorA] (1*\hlen,-2*\vlen) -- (1*\hlen,-3*\vlen+\gap) node[midway, left] {};
    \draw[->, \colorC] (1*\hlen,-3*\vlen) -- (1*\hlen,-4*\vlen+\gap) node[midway, left] {};
    \draw[->, \colorB] (1*\hlen,-4*\vlen) -- (1*\hlen,-5*\vlen+\gap) node[midway, left] {};
    \draw (1*\hlen,-5*\vlen) -- (1*\hlen,-7*\vlen);

    
    \node[above] at (2*\hlen,0) {$\der{D}_2''$};
    \node[above] at (2.5*\hlen,0) {$\shift{\transp{i+1}[i+2]}$};
 
    \draw (2*\hlen,0) -- (2*\hlen,-2*\vlen+\gap);
    
    \draw[->, \colorC] (2*\hlen,-2*\vlen) -- (2*\hlen,-3*\vlen+\gap) node[midway, left] {};
    \draw[->, \colorA] (2*\hlen,-3*\vlen) -- (2*\hlen,-4*\vlen+\gap) node[midway, left] {};
    \draw[->, \colorB] (2*\hlen,-4*\vlen) -- (2*\hlen,-5*\vlen+\gap) node[midway, left] {};
    \draw (2*\hlen,-5*\vlen) -- (2*\hlen,-7*\vlen);

    \node[above] at (3*\hlen,0) {$\der{D}_3'''$};
    \node[above] at (3.5*\hlen,0) {$\shift{\nu_3'''}$};

    \draw (3*\hlen,0) -- (3*\hlen,-2*\vlen+\gap);
    \draw[->, \colorC] (3*\hlen,-2*\vlen) -- (3*\hlen,-3*\vlen+\gap) node[midway, left] {};
    \draw[->, \colorB] (3*\hlen,-3*\vlen) -- (3*\hlen,-4*\vlen+\gap) node[midway, left] {};
    \draw[->, \colorA] (3*\hlen,-4*\vlen) -- (3*\hlen,-5*\vlen+\gap) node[midway, left] {};
    \draw (3*\hlen,-5*\vlen) -- (3*\hlen,-7*\vlen);

    \draw (4*\hlen,0*\vlen) node[above] {$\dots$};

    \node[above] at (5*\hlen,0) {$\der{D}_n$};
    \node[above] at (4.5*\hlen,0) {$\shift{\nu_{n-1}'''}$};

  \end{tikzpicture}
  \caption{Picture for the proof of Lemma~\ref{le:switchB} (item (iii))}
  \label{fi:switchB2}
\end{figure}
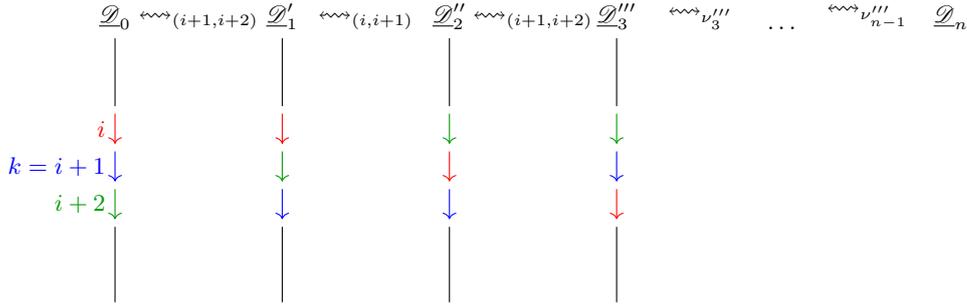

The next lemma intuitively states that if we have a switching sequence
where the first switch is applied to $\transp{i}$ and the last one
reverses the same switch, then, assuming that in the middle there are
only inversions, we can strip the first and last switches.

\begin{lemma}
  \label{le:switchC}
  Let $\der{D}_i$ for $i \in \interval[0]{n+1}$ be derivation sequences
  and consider a switching sequence
  \begin{center}
    $\der{D}_0 \shift{\nu_0} \der{D}_1 \shift{\nu_1} \ldots
     \der{D}_n \shift{\nu_n} \der{D}_{n+1}$
  \end{center}
  If the subsequence starting from $\der{D}_1$ consists of
  inversions and $\nu_0=\transp{i}$ while
  $\nu_n = \transp{\nu_{1,n}(i)}$, then there is shorter switching sequence
  consisting only of inversions
  \begin{center}
    $\der{D}_0 \equiv_a \der{D}_0' \shift{\nu_2'} \der{D}_3' \shift{\nu_3'} \ldots
    \shift{\nu_{n-1}'} \der{D}_n'$
  \end{center}
\end{lemma}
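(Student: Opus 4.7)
The proof proceeds by strong induction on $n$, with the overarching strategy of ``bubbling'' the initial switch $\nu_0 = \transp{i}$ forward through the inversion subsequence $\nu_1, \ldots, \nu_n$ until it meets and cancels with $\nu_n$. The hypothesis $\nu_n = \transp{\nu_{1,n}(i)}$ is precisely what guarantees that the transposition index of the bubbled $\nu_0$ aligns with that of $\nu_n$ exactly when the two meet at the end of the sequence, enabling their cancellation via Proposition~\ref{thm:switch_uni}.

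For the base case, the sequence reduces to two adjacent transpositions $\nu_0$ and $\nu_n$ whose composition, under the index-alignment furnished by the hypothesis, yields a derivation abstraction equivalent to $\der{D}_0$ by uniqueness of switches. For the inductive step, I propagate $\nu_0$ past the first inversion $\nu_1 = \transp{k_1}$ of the tail: when $|i - k_1| > 1$ the two transpositions act on disjoint positions and can be commuted directly; when $|i - k_1| \leq 1$, Lemma~\ref{lem:indep-global-left}(\ref{lem:indep-global-left:1}) supplies the three-step reorganization and produces a shifted leading transposition $\nu_0'$ together with a modified middle segment. Lemma~\ref{le:switchB} is then invoked on this modified middle segment to reinstate the ``sequence of inversions'' property on the shortened tail, making the induction hypothesis applicable, while Lemma~\ref{lem:switch-confluence} guarantees that the outcome of the reorganization is independent of the order in which the interacting switches are performed. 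Iterating this bubbling, the shifted $\nu_0$ eventually reaches $\nu_n$, where by construction their transposition indices agree and Proposition~\ref{thm:switch_uni} produces an abstraction equivalence that lets us drop both extremal switches, shortening the overall sequence.

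The main obstacle is the combinatorial bookkeeping: one must track how the transposition index of the bubbled $\nu_0$ evolves under each interaction and verify that, after each bubbling step and each application of Lemma~\ref{le:switchB}, the residual middle segment remains a sequence of inversions of the updated composite permutation. Boundary situations, such as interaction at $i = 0$ or the bubbled index coinciding with an adjacent transposition's index, require case analysis modelled on the two clauses of Lemma~\ref{lem:indep-global-left}. It is here that the precise form $\nu_n = \transp{\nu_{1,n}(i)}$ is most delicate to exploit, since it determines the exact target position at which the induction must terminate.
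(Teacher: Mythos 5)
Your plan has a genuine gap, and it sits exactly where you locate the ``main obstacle''. The forward-bubbling move does not go through in the adjacent case: if $\nu_0=\transp{i}$ and $\nu_1=\transp{i+1}$ (or $\transp{i-1}$), the composite $\nu_1\circ\nu_0$ is a $3$-cycle whose only decompositions into two \emph{adjacent} transpositions are $\nu_1\circ\nu_0$ itself and the inverse order of the \emph{same} two factors giving a \emph{different} permutation; there is no way to rewrite two switches as two switches with a ``shifted $\nu_0$'' now in second position. The braid-type reorganisations available in the paper always involve \emph{three} switches (see case (3) of Lemma~\ref{le:switchA} and case (iii) of Lemma~\ref{le:switchB}), and Lemma~\ref{lem:indep-global-left}(\ref{lem:indep-global-left:1}) is a statement about preservation of switchability, not a two-for-two rewriting rule, so it cannot ``supply the three-step reorganization'' you invoke. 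To get past an adjacent neighbour you would first have to import a matching transposition from further down the tail, which is precisely the content of Lemmas~\ref{le:switchA} and~\ref{le:switchB} -- but those are engineered to pull an inversion \emph{toward the front}, whereas you are pushing one \emph{toward the back}. That direction is also the fragile one for left-linear systems: globality of independence is asymmetric (part (\ref{lem:indep-global-left:2}) of Lemma~\ref{lem:indep-global-left} needs extra switchability hypotheses, and Example~\ref{ex:non-global} shows independence can genuinely be lost when a step is moved forward), so at intermediate stages of your bubbling the switch you need to perform next may simply not exist.

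The paper's proof avoids all of this and cancels at the \emph{front}, not the back: since $\der{D}_1$ is obtained from $\der{D}_0$ by switching steps $i$ and $i+1$, those steps are sequentially independent in $\der{D}_1$, and $\nu_{1,n}(i)>\nu_{1,n}(i+1)$, so Lemma~\ref{le:switchB} applies verbatim to the tail and yields a reordering of it that begins with $\transp{i}$. This places two consecutive $\transp{i}$ switches at the start of the whole sequence, which cancel up to abstraction equivalence by uniqueness of switches (Proposition~\ref{prop:switch} and Lemma~\ref{thm:switch_uni}), shortening the sequence by two. If you want to salvage your argument, you would have to reverse its direction so that it reduces to a single application of Lemma~\ref{le:switchB}, which is essentially the paper's proof.
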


\begin{proof}
  Just observe that in $\der{D}_1$ the steps of position $i$ and $i+1$
  are sequentially independent and $\nu_{1,n}(i) >
  \nu_{1,n}(i+1)$. Hence we can use Lemma~\ref{le:switchB} and obtain
  a switching sequence from $\der{D}_1$ where $\transp{i}$ is
  performed first, hence we obtain
  \begin{center}
    $\der{D}_0 \shift{\transp{i}}
    \der{D}_1 \shift{\transp{i}}
    \der{D}_2' \shift{\nu_2'} \ldots
    \shift{\nu_n'} \der{D}_{n+1}$
  \end{center}

  Now, switching $\transp{i}$ twice one obtains a derivation
  which is abstraction equivalent to the starting one.  Hence we have
  \begin{center}
    $\der{D}_0 \equiv_a
    \der{D}_2' \shift{\nu_2'} \ldots
    \shift{\nu_n'} \der{D}_{n+1}$
  \end{center}
  as desired.
\end{proof}

\begin{lemma}
  \label{le:switchD}
  Let $\der{D}_i$ for $i \in \interval[0]{n+1}$ be derivation sequences
  and consider a switching sequence
  \begin{center}
    $\der{D}_0 \shift{\nu_1} \der{D}_1 \shift{\nu_2} \ldots
    \shift{\nu_n}\der{D}_n$
  \end{center}
  If the switching sequence does not consists only of inversions,
  there is a shorter switching sequence relating
  $\der{D}_0' \equiv_a \der{D}_0$ and $\der{D}_n$.
\end{lemma}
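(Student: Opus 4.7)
My plan is to prove Lemma~\ref{le:switchD} by induction on the length $n$ of the switching sequence. The base cases $n \leq 1$ are vacuous: a single transposition $\nu_1 = \transp{j}$ is automatically an inversion of $\nu_{1,1} = \transp{j}$, so the hypothesis cannot hold.

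For the inductive step with $n \geq 2$, I would split on whether the suffix $\der{D}_1 \shift{\nu_2} \ldots \shift{\nu_n} \der{D}_n$ consists only of inversions. If it does not, the induction hypothesis yields a strictly shorter switching sequence from some $\der{D}_1^\star \equiv_a \der{D}_1$ to $\der{D}_n$. I would then construct $\der{D}_0^\star \shift{\nu_1} \der{D}_1^\star$ with $\der{D}_0^\star \equiv_a \der{D}_0$ as follows: the $j$-th and $(j{+}1)$-th direct derivations of $\der{D}_1$ are sequentially independent (they were produced by the switch $\nu_1$), so transporting the independence pair along the family of isomorphisms underlying $\der{D}_1 \equiv_a \der{D}_1^\star$ yields sequential independence of the corresponding derivations of $\der{D}_1^\star$; switching them back gives $\der{D}_0^\star$, and uniqueness of switches (Lemma~\ref{thm:switch_uni}) combined with well-switching yields $\der{D}_0^\star \equiv_a \der{D}_0$. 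Concatenation then produces the required shorter switching sequence.

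If instead the suffix consists of inversions, then the sole non-inversion must be $\nu_1 = \transp{j}$ itself. Using that $\transp{j}$ is an involution one obtains $\nu_{2,n} = \nu_{1,n} \circ \transp{j}$, so $\transp{j} \notin \inv{\nu_{1,n}}$ translates by direct computation to $\transp{j} \in \inv{\nu_{2,n}}$. Moreover, the $j$-th and $(j{+}1)$-th direct derivations of $\der{D}_1$ are sequentially independent by Proposition~\ref{prop:switch} applied to the switch $\nu_1$. Both hypotheses of Lemma~\ref{le:switchB} applied to the suffix with $i = j$ are therefore met, producing a rearranged switching sequence
\[
\der{D}_1 \shift{\transp{j}} \der{D}_2^\star \shift{\mu_3} \ldots \shift{\mu_n} \der{D}_n.
\]
Prepending the original $\nu_1 = \transp{j}$, the first two switches act at the same positions of $\der{D}_1$, so $\der{D}_2^\star$ is a switch of $\der{D}_1$ at positions $j, j{+}1$; by Proposition~\ref{prop:switch} the reverse switch returns a derivation abstraction equivalent to $\der{D}_0$, and Lemma~\ref{thm:switch_uni} yields $\der{D}_2^\star \equiv_a \der{D}_0$. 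Dropping the first two swaps gives a switching sequence of length $n-2$ from $\der{D}_2^\star \equiv_a \der{D}_0$ to $\der{D}_n$, as required.

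I expect the main obstacle to be the transport of abstraction equivalence through a switch step in the first case: showing that sequential independence is preserved under the isomorphism family, and that the resulting "reverse switch" produces a derivation still abstraction equivalent to $\der{D}_0$. This should follow from Proposition~\ref{prop:unique} and the universal property of pushouts applied to the isomorphisms constituting the abstraction equivalence, together with the uniqueness guaranteed by well-switching.
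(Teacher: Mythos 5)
Your argument is correct in substance but takes a genuinely different route from the paper's. The paper does not induct on the length of the switching sequence: it selects a maximal contiguous block consisting only of inversions, extends it one step to the left (necessarily by a non-inversion that the block must undo), and applies \Cref{le:switchC} to that segment to cancel the non-inversion against its reversal, shortening the sequence by two. You instead peel switches off the left end by induction on $n$. Your second case (suffix all inversions, so $\nu_1=\transp{j}$ is the unique non-inversion) essentially re-proves \Cref{le:switchC} in the special case where the offending switch sits in first position: the computation $\nu_{2,n}=\nu_{1,n}\circ\transp{j}$, the appeal to \Cref{prop:switch} for sequential independence, the use of \Cref{le:switchB} to bring the reversing transposition to the front, and the double-switch cancellation via \Cref{thm:switch_uni} are exactly the ingredients of that lemma, so you could simply invoke it. The price of your left-to-right induction is the first case: after applying the inductive hypothesis to the suffix you must prepend the switch $\nu_1$ to a derivation $\der{D}_1^\star$ that is only \emph{abstraction equivalent} to $\der{D}_1$, which requires a transport lemma (abstraction equivalence preserves independence pairs and switches, so a switch of $\der{D}_1$ induces one of $\der{D}_1^\star$ whose source is abstraction equivalent to $\der{D}_0$). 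You rightly flag this as the main obstacle; it is provable by composing the relevant pushout squares with the isomorphisms of the equivalence and then using \Cref{prop:unique} and \Cref{thm:switch_uni}, but it is not available as a stated lemma in the paper --- though, in fairness, the paper's own closing remark that the shortening ``applies to the original switching sequence'' quietly relies on an analogous splicing step whenever the cancelled block does not sit at the very start. One further point worth making explicit: your case split (suffix all inversions versus not) is sound only under the reading that $\nu_k$ must be an inversion for $\nu_{k,n}$, the permutation still to be applied --- which is the convention actually used in \Cref{le:switchA}, \Cref{le:switchB} and \Cref{co:canonical} --- since being an inversion for $\nu_{1,n}$ and for $\nu_{2,n}$ are different conditions.
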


\begin{proof}
  Take a maximal subsequence
  $\der{D}_i \shift{\nu_{i+1}} \der{D}_{i+1} \shift{\nu_{i+2}} \ldots
  \shift{\nu_j}\der{D}_j$ consisting only of inversions.
  If $i=0$ and $j=n$, there is nothing to show. Hence assume that
  $i>0$ and consider the sequence
  \begin{center}
    $\der{D}_{i-1} \shift{\nu_{i}} \der{D}_{i} \shift{\nu_{i+1}} \ldots
    \shift{\nu_j}\der{D}_j$
  \end{center}
  Then, by maximality, if $\nu_{i-1}=\transp{k}$ it must hold that
  $\nu_{j}=\transp{\nu_{i,j}(i)}$. Then, by Lemma~\ref{le:switchC},
  the switching sequence from $\der{D}_{i-1}$ to $\der{D}_{j}$ can be
  shortened by $2$, i.e.
  \begin{center}
    $\der{D}_{i-1} \equiv_a \der{D}_{i+1}' \shift{\nu_{i+1}'} \ldots
    \shift{\nu_j'}\der{D}_j$
  \end{center}
  and thus the same applies to the original switching sequence.
\end{proof}

\thNoNeed*
\label{thNoNeed-proof}

\begin{proof}
  Let $\der{D} \shifteq \der{D}'$. Then there is a switching
  sequence
  \begin{center}
    $\der{D} \equiv_a \der{D}_0 \shift{\nu_1} \der{D}_1 \shift{\nu_2} \ldots
    \shift{\nu_n} \der{D}_n \equiv_a \der{D}'$
  \end{center}
  If it does not consist only of inversions, by Lemma~\ref{le:switchD}
  we can reduce its length by $2$. An inductive reasoning allows us to
  conclude.
\end{proof}

\coCanonical*
\label{coCanonical-proof}

\begin{proof}
  Immediate consequence of Theorem~\ref{th:no-need} and Lemma~\ref{le:switchA}.
\end{proof}

\end{document}
